%% file: The_SPARSE-Relativization_Framework_and_Applications_to_Optimal_Proof_Systems.tex
\pdfoutput=1
\documentclass[a4paper,UKenglish,thm-restate,pdfa]{lipics-v2021-modified}

\usepackage{mathtools}					
\usepackage{xargs}                      
\usepackage[pdftex,dvipsnames]{xcolor}  
\usepackage{algorithm}
\usepackage{algpseudocode}
\usepackage[colorinlistoftodos,prependcaption,textsize=tiny]{todonotes}

\input{9-Standardnotation.tex}
\newcommandx{\unsure}[2][1=]{\todo[linecolor=red,backgroundcolor=red!25,bordercolor=red,#1]{#2}}
\newcommandx{\change}[2][1=]{\todo[linecolor=blue,backgroundcolor=blue!25,bordercolor=blue,#1]{#2}}
\newcommandx{\info}[2][1=]{\todo[linecolor=OliveGreen,backgroundcolor=OliveGreen!25,bordercolor=OliveGreen,#1]{#2}}
\newcommandx{\improvement}[2][1=]{\todo[linecolor=Plum,backgroundcolor=Plum!25,bordercolor=Plum,#1]{#2}}
\newcommandx{\thiswillnotshow}[2][1=]{\todo[disable,#1]{#2}}
\algdef{SE}[SUBALG]{Indent}{EndIndent}{}{\algorithmicend\ }%
\algtext*{Indent}
\algtext*{EndIndent}
\hideLIPIcs
\title{The SPARSE-Relativization Framework and Applications to Optimal Proof Systems}
\titlerunning{SPARSE-Relativization Framework and Optimal Proof Systems}

\author{Fabian Egidy}{University of Würzburg, Germany}{fabian.egidy@uni-wuerzburg.de}{https://orcid.org/0000-0001-8370-9717}{supported by the German Academic Scholarship Foundation.}
\authorrunning{Fabian Egidy}
\Copyright{Fabian Egidy}
\nolinenumbers
\ccsdesc[500]{Theory of computation~Oracles and decision trees}
\ccsdesc[500]{Theory of computation~Proof complexity}
\ccsdesc[500]{Theory of computation~Complexity classes}
\keywords{Relativization, Oracles, Proof Complexity, Optimal Proof Systems}
\acknowledgements{We thank Christian Glaßer and Katharina Pfab for their helpful feedback and suggestions.}

\begin{document}
\maketitle
\begin{abstract}
    We investigate the following longstanding open questions raised by Krajíček and Pudlák (J.~Symb.~L.~1989), Sadowski (FCT 1997), Köbler and Messner (CCC 1998) and Messner (PhD 2000).
    \medskip
    \begin{itemize}
        \item[] Q1: Does $\TAUT$ have (p-)optimal proof systems?
        \item[] Q2: Does $\QBF$ have (p-)optimal proof systems?
        \item[] Q3: Are there arbitrarily complex sets with (p-)optimal proof systems?
    \end{itemize}

    Recently, Egidy and Glaßer (STOC 2025) contributed to these questions by constructing oracles that show that there are no relativizable proofs for positive answers of these questions, even when assuming well-established conjectures about the separation of complexity classes. We continue this line of research by constructing an oracle that shows that there are no relativizable proofs for negative answers of these questions, even when assuming well-established conjectures about the separation of complexity classes. For this, we introduce the $\SPARSE$-relativization framework, which is an application of the notion of bounded relativization by Hirahara, Lu, and Ren (CCC 2023). This framework allows the construction of sparse oracles for statements such that additional useful properties (like an infinite polynomial-time hierarchy) hold. 
    
    By applying the $\SPARSE$-relativization framework, we show that the oracle construction of Egidy and Glaßer also yields the following new oracle.
    \medskip
    \begin{itemize}
        \item[] $O_1$: No set in $\PSPACE \setminus \NP$ has optimal proof systems, $\NP \subsetneq \PH \subsetneq \PSPACE$, and $\PH$ collapses
    \end{itemize}
    \medskip
    \noindent
    Cook and Krajíček (J.~Symb.~L.~2007) and Beyersdorff, Köbler, and Müller (Inf.~Comp.~2011) construct optimal proof systems for all recursively enumerable sets when the proof systems have access to one bit of non-uniform advice. We translate their techniques to the oracle setting and apply our $\SPARSE$-relativization framework to obtain the following new oracle. 
    \medskip
    \begin{itemize}
        \item[] $O_2$: All sets in $\PSPACE$ have p-optimal proof systems, there are arbitrarily complex sets with 
        \phantom{$O_2$: }p-optimal proof systems, and $\PH$ is infinite
    \end{itemize}
    \medskip
    Together with previous results, our oracles show that questions Q1 and Q2 are independent of an infinite or collapsing polynomial-time hierarchy. This contributes to Pudlák's (Bull.~Symb.~Log.~2017) search for plausible conjectures relativizably implying a negative answer to Q1. In particular, oracle $O_2$ rules out an infinite polynomial-time hierarchy as such a conjecture.
\end{abstract}

\section{Introduction}
We investigate the relationship between the existence of (p-)optimal proof systems for sets and separations of complexity classes. Cook and Reckhow \cite{cr79} define a proof system for a set $L$ as a polynomial-time computable function $f$ whose range is $L$. They motivate the study of proof systems by their connection to the separation of complexity classes, because they show: $\NP = \coNP$ if and only if there are proof systems with polynomial-size proofs for the set of propositional tautologies $\TAUT$. Hence, ruling out polynomial-size proofs for all proof systems for $\TAUT$ suffices to show $\NP \neq \coNP$. This approach to tackle the $\NP$ versus $\coNP$ question is known as the Cook-Reckhow program \cite{bus96}.

Krajíček and Pudlák \cite{kp89} provide a central definition for the Cook-Reckhow program by capturing the idea of a most powerful proof system. A proof system for a set $L$ is optimal if its proofs are at most polynomially longer than proofs in any other proof system for $L$. A proof system $f$ for a set $L$ is p-optimal if proofs from any other proof system for $L$ can be translated in polynomial-time into proofs in $f$. For the Cook-Reckhow program, the existence of an optimal proof system $f$ for $\TAUT$ means that it suffices to consider lower bounds only for $f$ instead of all proof systems for $\TAUT$.

Krajíček and Pudlák \cite{kp89} raise the question of whether $\TAUT$ has (p-)optimal proof systems. Subsequently, Sadowski \cite{sad97} raises the same question for the set of true quantified Boolean formulas $\QBF$ and Köbler and Messner \cite{km98} and Messner \cite{Mes00} explicitly raise the question for any set outside $\NP$, respectively $\P$\footnote{It is obvious that all nonempty sets in $\NP$ (resp., $\P$) have optimal (resp., p-optimal) proof systems.}. Here, we are interested in the following longstanding open questions.
\begin{spreadlines}{2ex}
\begin{align*}
&\text{\textbf{Q1}: {\em Does $\TAUT$ have (p-)optimal proof systems?}}
\\
&\text{\textbf{Q2}: {\em Does $\QBF$ have (p-)optimal proof systems?}}
\\
&\text{\textbf{Q3}: {\em Are there arbitrarily complex sets with (p-)optimal proof systems?}}
\end{align*}
\end{spreadlines}

Recently, Egidy and Glaßer \cite{eg25} investigated these questions and provided oracles that show that there are no relativizable proofs for positive answers of these questions, even when assuming well-established conjectures, namely an infinite $\PH$ for Q1 and Q2, and $\NP \neq \coNP$ for Q3. We continue this line of research by constructing an oracle that shows that there are no relativizable proofs for negative answers of these questions, even when assuming an infinite $\PH$. We do this by introducing the $\SPARSE$-relativization framework, which is an application of the notion of bounded relativization by Hirahara, Lu, and Ren \cite{hlr23}. Before we make our contribution precise, we give an overview on (p-)optimal proof systems in complexity theory.

\subsection{Previous Work on (P-)Optimal Proof Systems}
\subparagraph{Separation of complexity classes.} 
Krajíček and Pudlák \cite{kp89} connect (p-)optimal proof systems to the separation of complexity classes by showing that if optimal proof systems for $\TAUT$ do not exist, then $\NE \neq \coNE$, and that if p-optimal proof systems for $\TAUT$ do not exist, then $\E \neq \NE$. Köbler, Messner, and Torán \cite{kmt03} improve both results to $\NEE \neq \coNEE$ for optimal and $\EE \neq \NEE$ for p-optimal proof systems for $\TAUT$. 
Chen, Flum, and Müller \cite{cfm14} show that the nonexistence of optimal proof systems in $\NE \setminus \NP$ implies  the Measure Hypothesis \cite{lut97} to fail, i.e., $\NP$ has measure $0$ in $\E$. 

It is an easy observation that all sets in $\NP$ (resp., $\P$) have (p-)optimal proof systems. We do not know a single set outside $\NP$ (resp., $\P$) that has (p-)optimal proof systems. However, Messner \cite{mes99,Mes00} shows that all sets hard for $\E$ do not have p-optimal proof systems and all sets hard for $\coNE$ do not have optimal proof systems. 
Furthermore, Messner \cite{Mes00} shows that $\NP = \coNP$ (resp., $\P = \NP$) implies a positive answer to Q3 for (p-)optimal proof systems.

\subparagraph{Almost optimal algorithms.} Krajíček and Pudlák \cite{kp89} show that $\TAUT$ has p-optimal proof systems if and only if there exists an almost optimal algorithm (an algorithm that is optimal on positive instances, also called optimal acceptor) for $\TAUT$. Sadowski \cite{sad99} transfers this result to $\SAT$ (the set of satisfiable Boolean formulas) and Messner \cite{mes99,Mes00} generalizes both to all paddable sets. Furthermore, Messner shows that $\P = \NP$ implies that a set has an optimal acceptor if and only if it has a p-optimal proof system. 

\subparagraph{Promise classes.} Optimal proof systems are closely related to promise classes. Razborov \cite{raz94} shows that the existence of optimal proof systems for $\TAUT$ implies the existence of complete sets in the class $\DisjNP$ (the class of disjoint $\NP$-pairs). Similarly, Sadowski \cite{sad97} shows that the existence of p-optimal proof systems for $\QBF$ implies complete sets for $\NP \cap \coNP$. Köbler, Messner, and Torán \cite{kmt03} discover a general connection between proof systems and complete sets for promise classes. They improve and generalize past results by showing that the existence of optimal and p-optimal proof systems for various sets of the polynomial-time hierarchy implies the existence of complete sets for promise classes and probabilistic classes. For example, p-optimal proof systems for $\TAUT$ imply complete sets for $\UP$ and $\NP \cap \SPARSE$, and p-optimal proof systems for complete sets of the second level of the polynomial-time hierarchy imply complete sets for $\BPP$, $\ZPP$, and $\RP$ \cite{kmt03}. Beyersdorff, Köbler, and Messner \cite{bkm09} and Pudlák \cite{pud17} connect p-optimal proof systems and function classes. Together they show that p-optimal proof systems for $\SAT$ imply complete sets for $\TFNP$. 

Pudlák \cite{pud17} initiates a research program to either prove further such implications or to disprove them relative to an oracle. Meanwhile, for all pairs of conjectures except one that Pudlák is considering, we either have a proof for the implication or a refutation of the implication relative to an oracle. Hence, there are several oracles (see \cite{deg24, dg20, dos20a, dos20b, eeg22, gssz04, kha22}) that separate conjectures about the existence of (p-)optimal proof systems from other conjectures about complete sets in promise classes.

\subsection{Our Contribution}

\subsubsection{The SPARSE-Relativization Framework}
Egidy and Glaßer \cite{eg25} combine a sparse oracle with an oracle relative to which the polynomial-time hierarchy is infinite such that the polynomial-time hierarchy remains infinite relative to the combined oracle. The following theorem states the main reason why this approach works. 
\begin{theorem}[\cite{bbs86,ls86}]\label{thm:ph-inf-intro}
    If there exists a sparse set $S$ such that the polynomial-time hierarchy relative
to $S$ collapses, then the (unrelativized) polynomial-time hierarchy collapses.
\end{theorem}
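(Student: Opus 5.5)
The plan is the standard Karp--Lipton-style argument of Balcázar--Book--Schöning and Long--Selman, carried out in three steps.

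\textbf{Step 1: make the collapse a single equality.} Suppose $S$ is sparse and $\PH^S$ collapses, say $\PH^S = \Sigma_k^{p,S}$. Then $\Sigma_{k+1}^{p,S} = \Sigma_k^{p,S}$. Hence the canonical complete set $K_{k+1}^S$ for $\Sigma_{k+1}^{p,S}$ is decided by a fixed $\Sigma_k^p$-oracle machine $M$ with oracle $S$. We fix a polynomial $q$ bounding the running time of $M$. Then on inputs of length $n$, $M$ only queries strings of length at most $q(n)$. So $M$ only needs the finite table $T_n = S^{\le q(n)}$. Because $S$ is sparse, $T_n$ has polynomially many elements and an encoding of polynomial length $p(n)$.

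\textbf{Step 2: characterise correct tables by a $\Pi_{k+1}$-type condition.} Next we express, uniformly in the relativized sense, that a candidate table $T$ (a polynomial-length list of strings) is \emph{good for length $n$}. This means that for all $x$ with $|x|\le n$, $M^{T}(x)$ agrees with the defining $\Sigma_{k+1}$-predicate of $K_{k+1}$ evaluated with oracle $T$. We use self-reducibility-style consistency conditions on $K_{k+1}^T$, for instance closure under the one-step quantifier unfolding. These guarantee that $L(M^T)$ coincides with $K_{k+1}^{T}$ on short strings, and relativizing with $T$ in place of $S$ is harmless. The point is that any good $T$, not necessarily equal to $S^{\le q(n)}$, yields a correct decision procedure for the relevant relativized complete set. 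Checking goodness is a $\Pi^p$-condition of level about $k+1$ over the polynomial-length advice.

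\textbf{Step 3: collapse the unrelativized hierarchy.} Consider an arbitrary $L \in \Sigma_{k+2}^p$. Since $L$ uses the empty oracle, it lies in $\Sigma_{k+2}^{p,S}$, so $L \in \Sigma_k^{p,S}$ via some machine with oracle $S$. We then write $x\in L$ as follows: there exists a polynomial-length table $T$ that is good for length $|x|$ (a $\Pi$-check), and $N^{T}(x)$ accepts, where queries to $T$ are answered by table lookup. This puts $L$ into a fixed finite level of the unrelativized hierarchy, roughly $\Sigma_{k+2}^p$ or $\Sigma_{k+3}^p$. Applying the same reasoning one level higher shows that $\Sigma_{j}^p = \Sigma_{j+1}^p$ for some $j$. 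Hence $\PH$ collapses.

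The main obstacle is Step 2: ensuring that a guessed table $T \neq S^{\le q(n)}$ still gives correct answers for sets whose definitions do not refer to any oracle. The cleanest fix is to guess $T$ together with a certificate that the simulated collapsed machine $M^T$ correctly decides a complete set for a suitable unrelativized level, such as $\Sigma_{k+1}^p$ or $\Sigma_{k+2}^p$, verified through self-reducibility. With that certificate, $T$ only serves as a correct advice string. This exactly mirrors the $\mathrm{P}/\mathrm{poly}$ Karp--Lipton collapse, since a sparse oracle is equivalent to polynomial advice, and the verification is what pins down the collapse level.
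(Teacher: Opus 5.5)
Your overall idea (a sparse oracle amounts to polynomial advice, followed by a Karp--Lipton-style advice-verification argument) is the right one; it is how the cited results of Balcázar--Book--Schöning and Long--Selman are proved. Steps 2--3 as written do not yield a proof, however. \emph{First}, the equivalence in Step 3, $x\in L \iff \exists T\,[\mathrm{good}(T)\wedge N^T(x)]$, fails from right to left. Your goodness test only constrains how $M^T$ behaves on $K^T_{k+1}$ and never mentions $N$. Moreover, $N$ is merely \emph{some} machine with $L(N^S)=L$. So for a table $T\neq S^{\le q(n)}$ that passes the test, $N^T$ may accept strings outside $L$. You flag this as the main obstacle, but the remedy replaces Step 3 rather than refining it. You must discard the arbitrary collapse witness $N$ and express $L$ through queries to an \emph{unrelativized} complete set, whose answers are then supplied by $M^T$. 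This can go via a fixed, oracle-independent reduction into $K^T_{k+1}$, or directly.

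\emph{Second}, the quantifier count is missing, and as stated it yields nothing; this count is the entire content of a collapse argument. Written out, ``$M^T$ agrees with $K^T_{k+1}$ on all $|x|\le n$'' is a $\Pi^p_{k+2}$ condition. The direction ``$M^T(x)$ accepts $\Rightarrow x\in K^T_{k+1}$'' alone is $\forall x\,[\Pi_k\vee\Sigma_{k+1}]$. You do not show that your unfolding conditions bring this down to $\Pi^p_{k+1}$, nor that they imply correctness. Even with a $\Pi^p_{k+1}$ test, applying your template to $L\in\Sigma^p_{k+2}$ places $L$ in $\Sigma^p_{k+2}$, where it already was. The phrase ``the same reasoning one level higher'' is exactly where the collapse would have to come from, and it is not carried out.

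A clean way to close both gaps is Yap's extension of Karp--Lipton.
\begin{itemize}
\item Assume w.l.o.g.\ $k\ge 1$, since a collapse to level $k$ is also one to level $k+1$.
\item Since $\Pi^p_k\subseteq\Pi^{p,S}_k\subseteq\PH^S=\Sigma^{p,S}_k$, a $\Pi^p_k$-complete set $C$ is decided by a $\Sigma^p_k$ machine $M$ with oracle $S$. Hence it is also decided by $M$ with the table $S^{\le q(m)}$ as advice.
\item The test ``$T$ is correct for $C$ on lengths $\le m$'' is $\Pi^p_{k+1}$ without any self-reducibility. Soundness, $\forall w\,[\neg M^T(w)\vee w\in C]$, is $\Pi^p_k$. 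Completeness, $\forall w\,[w\notin C\vee M^T(w)]$, is $\Pi^p_{k+1}$.
\item Let $L\in\Pi^p_{k+2}$ with $x\in L\iff\forall y\,\exists z\, f(x,y,z)\in C$ for a polynomial-time $f$. Then $x\in L\iff\exists T\,[\mathrm{correct}(T)\wedge\forall y\,\exists z\,M^T(f(x,y,z))]$, where $\exists z\,M^T(\cdot)$ is still $\Sigma^p_k$. The whole expression is therefore $\Sigma^p_{k+2}$.
\end{itemize}
Hence $\Sigma^p_{k+2}=\Pi^p_{k+2}$, which is exactly the level loss recorded in Corollary~\ref{cor:bbs-4.5-Cor}.
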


So when the polynomial-time hierarchy is infinite, then it is infinite relative to all sparse oracles. We introduce the notion of \emph{sparse-oracle-independent} statements, 
referring to complexity-theoretic statements that are true if and only if they are also true relative to any sparse oracle. Note that the polynomial-time hierarchy being infinite is sparse-oracle-independent by Theorem \ref{thm:ph-inf-intro} and that there are more statements of this type (cf.~Theorem~\ref{thm:sparse-relativizing-results}). 

We generalize the approach of Egidy and Glaßer by developing a framework to translate a sparse oracle for some statement $S$ into an oracle relative to which $S$ and a sparse-oracle-independent statement\footnote{More precisely, this is possible for those sparse-oracle-independent statements that hold relative to at least one oracle.} holds. In this way, the new framework provides stronger proof barriers. 

The framework's main advantages lie in its simplicity and wide applicability. In general it is very difficult to combine two existing oracle constructions with each other. However, our framework makes it possible to take two independent oracle constructions and easily perform such combinations in certain situations. More precisely, whenever a statement holds relative to a sparse oracle, our framework can probably enhance the sparse oracle by complex additional properties at little cost.
Section \ref{sec:sparse-relativization} gives an in-depth explanation of the framework. There, we also provide a simple proof for two new results that would have been difficult to achieve without the application of the framework.

\subsubsection{Contribution to Q1, Q2, and Q3}
We apply the $\SPARSE$-relativization framework and show that the oracle construction of Egidy and Glaßer \cite{eg25} can be combined with an oracle by Ko \cite{ko89} yielding the following oracle.
\begin{spreadlines}{0ex}
\begin{align*}
    &O_1\colon \text{ No set in $\PSPACE \setminus \NP$ has optimal proof systems, $\NP \subsetneq \PH \subsetneq \PSPACE$, and}
    \\
    \vspace{-100em}&\phantom{O_1\colon} \text{ $\PH$ collapses}
\end{align*}
\end{spreadlines}
Furthermore, we construct a tally oracle relative to which Q1 and Q2 are answered in the positive. Since the oracle is constructed using the $\SPARSE$-relativization framework, we obtain an oracle $O_2$ relative to which the following holds.
\begin{spreadlines}{0ex}
\begin{align*}
    &O_2\colon \text{ All sets in $\PSPACE$ have p-optimal proof systems, there are arbitrarily complex} 
    \\
    \vspace{-100em}&\phantom{O_2\colon} \text{ sets with p-optimal proof systems, and $\PH$ is infinite}
\end{align*}
\end{spreadlines}

\subparagraph{Contribution to Q1 and Q2.} By the oracle of Baker, Gill, and Solovay \cite{bgs75} relative to which $\P = \PSPACE$ and by an oracle of Egidy and Glaßer \cite{eg25}, there are no relativizable proofs of the following implications.
\begin{alignat*}{2}
    \PH \text{ collapses} &\implies \text{negative answer to Q1 and Q2} \qquad && \text{\cite{bgs75}}\\
    \PH \text{ infinite} &\implies \text{positive answer to Q1 and Q2} \qquad && \text{\cite{eg25}}
\end{alignat*}
Our oracles show that the remaining two implications also have no relativizable proofs.
\begin{alignat*}{2}
    \PH \text{ collapses} &\implies \text{positive answer to Q1 and Q2} \qquad && \text{(by $O_1$ from Theorem \ref{thm:conn-and-ph})}\\
    \PH \text{ infinite} &\implies \text{negative answer to Q1 and Q2} \qquad && \text{(by $O_2$ from Corollary \ref{cor:O2})}
\end{alignat*}
Note that since it is believed that the polynomial-time hierarchy is infinite and that Q1 and Q2 both have negative answers \cite{pud17}, both of our results rule out relativizable proofs of a ``likely premise'' implying a ``likely consequence''.  In total, regarding relativizable proof techniques, the questions Q1 and Q2 are independent of an infinite or collapsing polynomial-time hierarchy. Moreover, Pudlák \cite{pud17} searches for plausible conjectures relativizably implying a negative answer to Q1. Oracle $O_2$ rules out an infinite polynomial-time hierarchy as such a conjecture.

\subparagraph{Contribution to Q3.}
Egidy and Glaßer \cite{eg25} construct an oracle that shows that there is no relativizable proof for a positive answer of Q3, even when assuming $\NP \neq \coNP$. Messner \cite{mes99,Mes00} shows that answering Q3 in the negative implies $\NP \neq \coNP$. Oracle $O_2$ extends the barrier of Messner by showing that there is no relativizable proof for a negative answer of Q3, even when assuming an infinite polynomial-time hierarchy. This shows that proving Q3 in the negative has some inherent difficulty that does not only stem from its relation to the $\NP$ versus $\coNP$ question.

\subsection{Comparing Sparse and Random Oracles}
Our $\SPARSE$-relativization framework is in a similar spirit as Fortnow's \cite{for99} ``Book Trick'',
thus we give a brief introduction. Like ourselves, Fortnow is interested in relativized worlds where some propositions hold along with an infinite polynomial-time hierarchy. He also observes that ``creating such worlds is still a difficult task'' and achieving an infinite polynomial-time hierarchy by hand using techniques of Yao \cite{yao85} or H{\aa}stad \cite{has89} are ``complicated at best''. Hence, he introduces the Book Trick, which is based on a result by Book \cite{boo94}.
\begin{corollary}[\cite{boo94}]
    For every oracle $A$, if the polynomial-time hierarchy is infinite relative to $A$ then for random $R$, the polynomial-time hierarchy relative to $A \oplus R$ is infinite.    
\end{corollary}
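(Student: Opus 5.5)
We prove the corollary by combining Book's characterization of the classes $\mathrm{ALMOST}\text{-}\Sigma_k^p$ with a counting argument over random oracles. Write $\Sigma_k^{p,X}$ for the $k$-th level of the hierarchy relative to $X$. Book's theorem in relativized form, which we take as the result of \cite{boo94}, states that for every oracle $A$ and every $k$, if $\Sigma_k^{p,A\oplus R} = \Sigma_{k+1}^{p,A\oplus R}$ for a set of $R$ of positive measure, then $\Sigma_k^{p,A} = \Sigma_{k+1}^{p,A}$. We therefore argue contrapositively.

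First, the collapse event is measurable and obeys a zero–one law. Fix $k$. The event $E_k = \{R : \Sigma_k^{p,A\oplus R} = \Sigma_{k+1}^{p,A\oplus R}\}$ is a countable Boolean combination of events of the form ``machine $M^{A\oplus R}$ decides the same language as machine $N^{A\oplus R}$,'' so it is measurable. It is also invariant under finite changes of $R$, since finite changes can be hard-coded into the machines. By Kolmogorov's zero–one law, $E_k$ therefore has measure $0$ or $1$.

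Second, we transfer the collapse back to $A$, which is the substantive step. Suppose $E_k$ has measure $1$. Then a standard argument in the style of Bennett–Gill shows the following. Every $\Sigma_{k+1}^{p,A\oplus R}$-complete language $K^{A\oplus R}$ is decided, for measure-one many $R$, by some fixed $\Sigma_k$-machine. Countable additivity supplies a single machine that succeeds on a positive-measure set of $R$. Lebesgue density then lets us amplify this to a set of relative measure close to $1$ on some cylinder, and we hard-code that cylinder's prefix.

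Next we derandomize the machine. Taking a majority vote over polynomially many independent random strings gives $\Sigma_{k+1}^{p,A} \subseteq \mathrm{BP}\cdot\Sigma_k^{p,A}$. By Sipser–Gács–Lautemann placement, the right-hand side is contained in $\Sigma_{k+2}^{p,A}\cap\Pi_{k+2}^{p,A}$. This yields a collapse of $\PH^A$ to a finite level.

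The main obstacle is the amplification step. Agreement with $K^{A\oplus R}$ depends on $R$ itself, so the majority vote must use fresh, independent portions of the oracle. The plan is to use the standard trick of encoding independent copies $R_1,\dots,R_m$ inside $R$ at disjoint lengths, and to apply a Borel–Cantelli argument so that error events summed over input lengths stay finite.

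Third, we finish by contraposition. If $\PH^A$ is infinite, the previous step shows that no $E_k$ has measure $1$. By the zero–one law, each $E_k$ then has measure $0$. The union $\bigcup_k E_k$ is countable, so it also has measure $0$. Hence for measure-one many $R$, that is, for random $R$, the polynomial-time hierarchy relative to $A\oplus R$ is infinite.
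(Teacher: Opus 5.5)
Your Steps 1 and 3 (the zero--one law and the countable union over $k$) are fine. Step 2, however, has a genuine gap, and it sits exactly where the substance of Book's result lies.

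After the density argument you have a $\Sigma_k$-machine $M$ whose randomness is the whole oracle $R$. You then assert that a majority vote turns this into a $\mathrm{BP}\cdot\Sigma_k^{p,A}$ algorithm. But a $\mathrm{BP}\cdot\Sigma_k^{p,A}$ algorithm has only a polynomial-length random string $r$. For $k\geq 1$, the alternating computation tree of $M^{A\oplus R}(x)$ queries exponentially many distinct positions of $R$ (already an $\NP$-machine that guesses $y\in\Sigma^n$ and queries $y$ does so), and the answers must be consistent across branches.

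Answering queries from $r$ fails in one of two ways. If each branch consumes its own bits of $r$, the answers are inconsistent, and the acceptance probability can change drastically: for the machine just described it drops from nearly $1$ to $1/2$. If you make the answers consistent via some map $q\mapsto h_r(q)$, you must prove that $h_r$ fools the exponential-size computation of $M$. That is a pseudorandomness theorem, not a majority vote; majority votes only reduce error and cannot compress an exponentially long random oracle into polynomially many random bits.

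What you need is the Nisan--Wigderson inclusion $\mathrm{ALMOST}\text{-}\Sigma_k^{p,A}\subseteq\mathrm{BP}\cdot\Sigma_k^{p,A}$. For fixed $x$, with the bits of $A$ hardwired, $R\mapsto M^{A\oplus R}(x)$ is a constant-depth circuit of size $2^{\mathrm{poly}(n)}$ in the bits of $R$. Hence $R$ can be replaced by the output of the Nisan--Wigderson generator (built from the $\mathrm{AC}^0$ lower bounds for parity) on a seed of length $\mathrm{poly}(n)$, with each output bit computable in polynomial time. This derandomization is the ingredient behind \cite{boo94}, and your proposal has no substitute for it.

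Two side remarks on Step 2. The device you flag as the main obstacle (independent copies $R_1,\dots,R_m$ at disjoint lengths plus Borel--Cantelli) belongs to the converse, easy inclusion $\mathrm{BP}\cdot\Sigma_k^p\subseteq\mathrm{ALMOST}\text{-}\Sigma_k^p$. Also, the $R$-dependence of $K^{A\oplus R}$ is a non-issue once you fix $L\in\Sigma_{k+1}^{p,A}$ and use its $R$-independent reduction to $K^{A\oplus R}$.

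Second, even granting $\Sigma_{k+1}^{p,A}\subseteq\mathrm{BP}\cdot\Sigma_k^{p,A}$, your placement step does not give a collapse, since $\Sigma_{k+1}^{p,A}\subseteq\Sigma_{k+2}^{p,A}\cap\Pi_{k+2}^{p,A}$ holds for every oracle. You need the sharper Lautemann-style bound $\mathrm{BP}\cdot\Sigma_k^{p,A}\subseteq\Pi_{k+1}^{p,A}$ for $k\geq 1$ (the analogue of $\mathrm{AM}\subseteq\Pi_2^p$), which yields $\Sigma_{k+1}^{p,A}=\Pi_{k+1}^{p,A}$. That is a collapse one level higher than the same-level transfer you attributed to Book in your first paragraph; this is harmless for the corollary, which only needs some collapse.

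The paper itself does not prove this statement and only cites \cite{boo94}. If you take Book's transfer theorem as a black box, your Steps 1 and 3 alone finish the job. As a self-contained argument, though, the proposal is missing its key step.
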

\begin{lemma}[\cite{for99}, The Book Trick]
    If $P$ is a relativizable proposition such that for all $A$ and random $R$, $P^{A \oplus R}$ is true then there exists an oracle $B$ such that $P^B$ holds and the polynomial-time hierarchy is infinite relative to $B$.
\end{lemma}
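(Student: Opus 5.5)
The plan is to combine the hypothesis on $P$ with the corollary of Book stated just above. First I fix a single oracle $A$ relative to which the polynomial-time hierarchy is infinite. Such an oracle exists by the classical constructions of Yao \cite{yao85} and H{\aa}stad \cite{has89}; this is the only place where a hard lower bound enters, and it is simply quoted.

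With $A$ fixed, two probability-one statements about a random oracle $R$ are available. By the corollary of Book \cite{boo94}, the polynomial-time hierarchy relative to $A \oplus R$ is infinite for random $R$, that is, for a set of $R$ of measure one. By the hypothesis of the lemma, applied to this particular $A$, the proposition $P^{A \oplus R}$ holds for random $R$, again for a set of $R$ of measure one.

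The intersection of two measure-one sets of oracles has measure one, so it is nonempty. I pick any $R$ in the intersection and set $B = A \oplus R$. Then $P^B$ holds and the polynomial-time hierarchy is infinite relative to $B$, which is exactly the claim.

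I see no real obstacle; the proof is a measure-theoretic intersection argument. The one point to check is that ``for random $R$'' is used in the same sense in both statements, namely as ``with probability one over $R$'', so that the intersection step is valid. If instead the hypothesis only gave positive probability, the argument would break, so I would state explicitly that both properties hold with probability one.
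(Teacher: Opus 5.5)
Your proof is correct and is the intended argument. The paper gives no proof of its own and only cites Fortnow \cite{for99}, whose argument is exactly yours: fix a Yao oracle $A$, apply Book's corollary and the hypothesis to that $A$ to get two probability-one sets of $R$, and take $B = A \oplus R$ for any $R$ in their intersection. Your caveat is harmless, since ``for random $R$'' in this literature means ``with probability one over $R$''.
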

Using the Book Trick, Fortnow \cite{for99} comes up with several oracles relative to which the polynomial-time hierarchy is infinite and relative to which propositions that hold relative to random oracles are true (similar to how we use the $\SPARSE$-relativization framework in Theorem \ref{thm:conn-and-ph}). However, the breakthrough result by H{\aa}stad, Rossman, Servedio, and Tan \cite{hrst17} showing that the polynomial-time hierarchy is infinite relative to a random oracle made the Book Trick obsolete. It is clear that any pair of propositions that hold individually relative to a random oracle also hold jointly relative to a random oracle. 

In total, this gives us two comparably elegant approaches when we are interested in an oracle with some property $P$ and an infinite polynomial-time hierarchy: either show that $P$ holds relative to a random oracle or show that $P$ holds relative to a sparse oracle\footnote{Joined with an arbitrary base oracle as explained in Section \ref{sec:sparse-relativization}.}. Note that the random oracle approach does not work for a collapsing polynomial-time hierarchy (also often a desirable property), in contrast to our $\SPARSE$-relativization framework (cf.~Theorem \ref{thm:bbs86}).

\subsection{Sketch of the Construction of $\text{O}_\text{2}$}
The main difficulty is to construct a sparse oracle relative to which (a) all sets in $\PSPACE$ have p-optimal proof systems and (b) there are arbitrarily complex sets with p-optimal proof systems. Then the $\SPARSE$-relativization framework provides additionally an infinite polynomial-time hierarchy.

To (a): Cook and Krajíček \cite{ck07} show how to obtain optimal proof systems for $\TAUT$ when those are given one bit of non-uniform advice. Beyersdorff, Köbler, and Müller \cite{bkm11} generalize this to all recursively enumerable sets. For a set $L$ it works like this: Let $f$ be an arbitrary polynomial-time computable function and $n \in \N$. If for all $x \in \Sigma^n$ it holds that $f(x) \in L$, then choose the advice bit for length $\langle \textrm{code}(f), n \rangle$ as 1, otherwise as 0. A p-optimal proof system can rely on these advice bits to decide which computations it can simulate without outputting an element outside $L$. We translate this idea to oracles. Note that this translation is not straightforward, because the advice setting is asymmetric: proof systems have acces to one bit of non-uniform advice, while the sets being proved are recognized by machines without advice. In contrast, our oracle setting is symmetric: both proof systems and machines recognizing the proved sets have access to the same oracle.\footnote{This distinction becomes evident when considering $\coNE$-hard sets, which do not have optimal proof systems as shown by Messner \cite{mes99}. Messner's result relativizes, i.e., it also holds when $\coNE$-machines and proof systems have access to an arbitrary oracle. On the other hand, by Beyersdorff, Köbler, and Müller \cite{bkm11}, there is an optimal proof system with one bit of non-uniform advice for any $\coNE$-hard set.}
Since the $\PSPACE$-machines and proof systems can only pose polynomially bounded queries, we can hide advice in the oracle from the computations it encodes using polynomially long paddings. This still allows the p-optimal proof system to rely on advice and only results in polynomially longer proofs. 

To (b): For every time-constructible function $t$, we define a set $L_t$ that contains $0^n$ if some code word of length $\approx t(n)$ is inside the oracle. We diagonalize against any machine trying to decide $L_t$ without querying words of length $\approx t(n)$, i.e., we add the code word of length $\approx t(n)$ to the oracle if and only if such a machine rejects $0^n$. By this, any machine deciding $L_t$ must have runtime at least $t$. Since also runtime $t$ suffices to decide $L_t$ and the proofs for $x \in L_t$ are generic (a fixed code word is in the oracle), the existence of p-optimal proof systems for $L_t$ follows.

\section{Preliminaries}
Let $\Sigma \coloneqq \{0,1\}$ be the default alphabet and $\Sigma^*$ be the set of finite words over $\Sigma$. The set of all (positive) natural numbers is denoted by $\N$ ($\N^+$). For $a,b \in \N$, we define $a\N+b \coloneqq \{a \cdot n + b \mid n \in \N\}$. We write the empty set as $\emptyset$. The cardinality of a set $A$ is denoted by $\card{A}$. For a set $A \subseteq \Sigma^*$ and a number $n \in \N$, we define $A^{\leq n} \coloneqq \{w \in A \mid |w| \leq n\}$ and analogous for $=$. For a clearer notation we use $\Sigma ^{\leq n}$ for ${\Sigma^*}^{\leq n}$ and $\Sigma^n$ for ${\Sigma^*}^{=n}$. The operators $\cup$, $\cap$, and $\setminus$ denote the union, intersection and set-difference. We denote the marked union of sets $A,B$ by $A \oplus B \coloneqq 2A \cup (2B+1)$. The domain and range of a function $f$ are denoted by $\dom (f)$ and $\ran (f)$. We use the notion of time-constructible functions as described in \cite{ab09}. We use the default model of a Turing machine in both the deterministic and non-deterministic variant. The language decided by a Turing machine $M$ is denoted by $L(M)$. We use Turing transducer to compute functions. For a Turing transducer $F$ we write $F(x)=y$ when on input $x$ the transducer outputs $y$. Hence, a Turing transducer $F$ computes a function and we may denote ''the function computed by $F$'' by $F$ itself.

\subparagraph*{Complexity classes.}
The definition of the basic complexity classes such as $\P$, $\FP$, $\NP$, $\coNP$, $\Sigma_k^p$, $\Pi_k^p$, $\PH$, $\IP$, $\PSPACE$, $\E$, $\DTIME$ and complete sets $\TAUT$ and $\QBF$ can be found in \cite{ab09}. The class $\EXPH$ is rigorously defined in \cite{hlr23}. A set $S \subseteq \Sigma^*$ is sparse if there is a polynomial $p$ such that $\card{S^{\leq n}} \leq p(n)$ for all $n \in \N$. A set $T \subseteq 0^*$ is tally. We denote the class of all sparse sets as $\SPARSE$ and the class of all tally sets as $\TALLY$. When we use the expression ``there are arbitrarily complex sets that \dots'' then this refers to a collection $\mathcal{C}$ of sets such that for any time-constructible function $t$ there is some decidable set $A \in \mathcal{C}$ with $A \notin \DTIME(t)$. 

\subparagraph*{Proof systems.}
We use proof systems for sets defined by Cook and Reckhow \cite{cr79}. They define a function $f \in \FP$ to be a proof system for $\ran (f)$. Furthermore:
\begin{itemize}
\item A proof system $g$ is (p-)simulated by a proof system $f$, denoted by $g \leq f$ (resp., $g \psim f$), if there exists a total function $\pi$ (resp., $\pi \in \FP$) and a polynomial $p$ such that $|\pi(x)| \leq p(|x|)$ and $f(\pi(x)) = g(x)$ for all $x \in \Sigma^*$. In this context the function $\pi$ is called simulation function. Note that $g \psim f$ implies $g \leq f$.
\item A proof system $f$ is (p-)optimal for $\ran(f)$, if $g \leq f$ (resp., $g \psim f$) for all $g \in \FP$ with $\ran (g) = \ran (f)$.
\end{itemize}

\subparagraph*{Relativized notions.} We relativize the concept of Turing machines and Turing transducers by giving them access to a write-only oracle tape. We relativize complexity classes, proof systems, machines, and (p-)simulation by defining them over machines with oracle access, i.e., whenever a Turing machine or Turing transducer is part of a definition, we replace them by an oracle Turing machine or an oracle Turing transducer.
For space-bounded Turing machines we utilize the model presented by Simon \cite{sim77}, in which the oracle tape also underlies the polynomial space-bound restriction. 
We indicate the access to some oracle $O$ in the superscript of the mentioned concepts, i.e., $\mathcal{C}^O$ for a complexity class $\mathcal{C}$ and $M^O$ for a Turing machine or Turing transducer $M$. We sometimes omit the oracles in the superscripts, e.g., when sketching ideas in order to convey intuition, but never in actual proofs. We also transfer all notations to the respective oracle concepts.

\section{The SPARSE-Relativization Framework}\label{sec:sparse-relativization}
\subparagraph{Bounded relativization.}Recently, Hirahara, Lu, and Ren \cite{hlr23} introduced the concept of \emph{bounded relativization}, which provides a framework to obtain proof barriers stronger than usual relativization.
\begin{definition}(Bounded relativization \cite{hlr23})\label{def:bounded-relativization}
    For a complexity class $\mathcal{C}$, a complexity-theoretic statement is $\mathcal{C}$-relativizing if it is true relative to every oracle $O \in \mathcal{C}$.
\end{definition}
Hirahara, Lu, and Ren raise the hypothesis that ``current proof techniques'' are $\PSPACE$-relativizing, i.e., the statements provable by ``current techniques'' hold relative to all oracles $O \in \PSPACE$. They underpin their hypothesis by showing that typical results proven by non-relativizing techniques like arithmetization are $\PSPACE$-relativizing, e.g., $\IP = \PSPACE$. Furthermore, they explain how this hypothesis provides strong proof barriers:
\begin{itemize}
    \item If some statement $S$ fails relative to an oracle $O \in \PSPACE$, it follows that ``current techniques'' (which are $\PSPACE$-relativizing by their hypothesis) can not prove $S$. 
    \item Furthermore, if the statement $S$ fails relative to an oracle $O \in \EXPH$, then ``current techniques'' can not prove $S$ unless the proof implies $\PSPACE \neq \EXPH$, which would be a breakthrough result in complexity theory. This follows from their hypothesis, since a proof for $S$ based on ``current techniques'' works relative to all oracles in $\PSPACE$, but it does not work relative to all oracles in $\EXPH$, thus separating $\PSPACE$ from $\EXPH$.  
\end{itemize}

\subparagraph{SPARSE-relativization framework.} We continue this line of research by proposing the notion of sparse-oracle-independence, from which we also obtain proof barriers stronger than usual relativization. 
\begin{definition}\label{def:sparse-oracle-independent}
    A complexity theoretic statement $S$ is sparse-oracle-independent if $S$ being true is equivalent to $S$ being true relative to every oracle from $\SPARSE$.
\end{definition}
Intuitively, we do not know whether a sparse-oracle-independent statement is true, but if it is true, then the statement is $\SPARSE$-relativizing in the sense of \cite{hlr23}. 
So the correctness of the statement is independent of whether it has access to a sparse oracle if it is a true statement.
We will first explain the usefulness of sparse-oracle-indepedent statements using an example, and then generalize it. 

Long and Selman \cite{ls86} and independently Balcázar, Book, and Schöning \cite{bbs86} show the following generalization of the Karp-Lipton theorem \cite{kl80}.
\begin{theorem}[\cite{bbs86,ls86}]\label{thm:bbs86}
    \begin{romanenumerate}
        \item If there exists a sparse set $S$ such that the polynomial-time hierarchy relative
to $S$ collapses, then the (unrelativized) polynomial-time hierarchy collapses.
        \item If there exists a sparse set $S$ such that the polynomial-time hierarchy relative to $S$ is infinite, then the (unrelativized) polynomial-time hierarchy is infinite.
    \end{romanenumerate}
\end{theorem}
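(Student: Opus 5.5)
Part (i) is the Balcázar--Book--Schöning / Long--Selman generalization of Karp--Lipton. I would prove it by describing the sparse oracle with polynomial-size advice and then eliminating the advice by guessing and verifying it inside the hierarchy.

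Suppose $S$ is sparse with census bounded by a polynomial $p$, and $\Sigma_k^{p,S} = \Pi_k^{p,S}$ for some $k$.

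First, every $\Sigma_j^{p,S}$ computation on inputs of length $n$ only queries $S^{\leq q(n)}$ for some polynomial $q$. This finite segment is a list of at most $p(q(n))$ words, each of length at most $q(n)$. So it is a polynomial-size advice string $a_{q(n)}$, and with this advice the oracle computation can be simulated without the oracle.

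Second, I would relativize the standard complete sets. Let $K_j^S$ denote the canonical $\Sigma_j^{p,S}$-complete set. The collapse gives a $\Pi_k^{p,S}$-machine for $K_{k+1}^S$; once $S^{\leq q(n)}$ is supplied as advice, this becomes an unrelativized $\Pi_k^p$ predicate in $x$ and the advice.

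Third, I would show $\Sigma_{k+c}^p \subseteq \Pi_{k+c}^p$ for a suitable constant $c$, say $c = 2$. Take $L \in \Sigma_{k+c}^p \subseteq \Sigma_{k+c}^{p,S}$ (oracle ignored). Since $\PH^S$ collapses to level $k$, we have $L \in \Pi_k^{p,S}$ via some machine $M$ querying only $S^{\leq q(n)}$.

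The obstacle is that we cannot verify a guessed list against $S$ without $S$. The fix is to use $L$'s own unrelativized definition as the check. Consider the predicate
\[ C(x,a) \iff M \text{ with advice } a \text{ accepts } x \text{ and agrees with the } \Sigma_{k+c}^p \text{ definition of } L \text{ on all inputs of length } |x|. \]
Agreement on all inputs of length $|x|$ is a $\Pi_{k+c+1}^p$-type condition, which is too high. The standard remedy is different. Since $\Sigma_k^{p,S} = \Pi_k^{p,S}$ implies $\PH^S = \Sigma_k^{p,S}$, every language in $\PH$ reduces to $S$-advice computations. One then uses self-reducibility of the relativized complete sets, as in Karp--Lipton: guess advice $a$ (existentially), and verify it through the self-reduction. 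That is, check that the $\Pi_k$-advice machine for $K_{k+1}^S$ behaves consistently, via $\forall$ over the instances and their one-step expansions, with the $\Sigma_{k+1}$ definition evaluated using the same advice.

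This check is itself unrelativized $\Pi_{k+1}^p$ once $a$ is fixed. For the correct advice $a = S^{\leq q(n)}$ the consistency holds. Conversely, any consistent $a$ yields a correct decision procedure for the advice-simulated relativized complete sets, and hence for $L$, which embeds via the trivial reduction. This places $\Sigma_{k+2}^p$ inside $\Sigma_{k+2}^p \cap \Pi_{k+2}^p$ with a uniform bound and collapses $\PH$ to a fixed level, roughly $\Sigma_{k+2}^p$.

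The main obstacle is making the consistency check faithful when the guessed $a$ differs from the true $S^{\leq q(n)}$. I would handle it exactly as in Karp--Lipton, by guessing the whole circuit-like object, here the list, and checking it locally against the recursive definition.

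Part (ii) is the contrapositive of the trivial direction. Taking $S = \emptyset$, which is sparse, $\PH^\emptyset = \PH$. More to the point, if $\PH$ collapsed, then by (i) applied contrapositively we get nothing, so (ii) needs its own argument. But (ii) is immediate in the reverse sense: if $\PH$ collapsed unrelativized, then relativizing the same proof to $S$ does not work in general. So I would instead read (ii) as: the infinite $\PH^S$ together with (i) and $\emptyset \in \SPARSE$ gives the equivalence. Concretely, if $\PH$ collapses, then $\PH^{\emptyset}$ collapses, and I would argue that $\PH^S$ collapses as well by the same advice-guessing technique in the reverse direction. A $\Sigma_j^{p,S}$ language is $\Sigma_j^p$ in the input and the sparse advice, and the advice can be guessed and verified in $\Sigma_j^{p,S}$ with queries to $S$ directly. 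So unrelativized collapse $\Sigma_k^p = \Pi_k^p$, applied to the advice-parameterized predicates, yields $\Sigma_{k+1}^{p,S} = \Pi_{k+1}^{p,S}$.
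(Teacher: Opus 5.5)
The paper gives no proof of this theorem; it imports it from \cite{bbs86,ls86}. So the comparison here is with the standard argument. Your part (i) is correct in substance. However, the step you defer to ``exactly as in Karp--Lipton'' is the whole trick, and your earlier sentences already contain it, so state it outright. The consistency check is made against $K_{k+1}^{S_a}$, where $S_a$ is the finite oracle encoded by the guessed list $a$, not against $K_{k+1}^{S}$. A consistent but wrong $a$ is harmless because $L$ is oracle-free: for $L\in\Sigma_{k+2}^p$ one has $x\in L\iff\exists y\,[g(x,y)\notin K_{k+1}^{O}]$ with a reduction $g$ that works for \emph{every} oracle $O$.

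Write $M'(u,a)$ for the $\Pi_k^p$ advice predicate for $K_{k+1}^S$. The first-block one-step self-reduction makes $\mathrm{Cons}(a)$ a universal quantifier over Boolean combinations of $\Pi_k^p$ predicates, hence $\Pi_{k+1}^p$. Then $x\in L\iff\forall a\,[\neg\mathrm{Cons}(a)\lor\exists y\,\neg M'(g(x,y),a)]$ is $\Pi_{k+2}^p$. Guessing $a$ existentially, as you describe, only yields the trivial $\Sigma_{k+2}^p$ form. You need this universal form, or else apply the existential guess to $\Pi_{k+2}^p$ languages.

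A shorter route avoids self-reducibility. We have $\Sigma_k^p\subseteq\Pi_k^{p,S}\subseteq\Pi_k^p/\mathrm{poly}$, and the Karp--Lipton--Yap argument only has to verify \emph{soundness} of the advice. That means checking that the $\Sigma_k^p$ non-membership certificates $\neg R(u,a)$ are never wrong, which is a $\Pi_k^p$ condition. Both routes give $\Sigma_{k+2}^p=\Pi_{k+2}^p$, matching the level shift in Corollary~\ref{cor:bbs-4.5-Cor}.

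Part (ii) goes wrong before it goes right. It is not the contrapositive of the trivial direction. Its contrapositive is ``if $\PH$ collapses, then $\PH^S$ collapses for \emph{every} sparse $S$'', which is the nontrivial half; the case $S=\emptyset$ only handles the converse. Also, (i) together with $\emptyset\in\SPARSE$ does not yield (ii), so discard those remarks.

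Your final ``concretely'' paragraph is the right argument, but its verification step is misstated. A guessed list $a$ cannot in general be verified inside $\Sigma_j^{p,S}$. The condition $a=S^{\leq q(n)}$ requires both that every listed word is in $S$ (a $\P^S$ check) and that no unlisted word of length at most $q(n)$ is in $S$ (a $\coNP^S$ check). Sparseness only bounds the length of $a$. So the predicate $V(x,a)$ expressing $a=S^{\leq q(|x|)}$ is $\Pi_1^{p,S}$.

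Now suppose $\PH=\Sigma_k^p$ and $x\in L\iff R(x,S^{\leq q(|x|)})$ with $R\in\Sigma_k^p\cap\Pi_k^p$. Then $x\in L\iff\exists a\,[V(x,a)\land R(x,a)]$ puts $L$ in $\Sigma_{k+1}^{p,S}$. Likewise $x\in L\iff\forall a\,[\neg V(x,a)\lor R(x,a)]$ puts $L$ in $\Pi_{k+1}^{p,S}$. This extra universal check is why your conclusion lands at $\Sigma_{k+1}^{p,S}=\Pi_{k+1}^{p,S}$ rather than at level $k$.
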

In other words, we do not know whether the polynomial-time hierarchy is infinite or collapses, but we do know that either case is $\SPARSE$-relativizing. So both ``$\PH$ is infinite'' and ``$\PH$ collapses'' are sparse-oracle-independent statements. 

As Bovet, Crescenzi, and Silvestri \cite{bcs95} point out, an analysis of the corresponding proofs shows that Theorem \ref{thm:bbs86} itself relativizes, which leads to the following result.
\begin{theorem}\label{thm:ph-trick}
Let $A \subseteq \Sigma^*$ and $B \in \SPARSE$. $\PH^A$ collapses if and only if $\PH^{A \oplus B}$ collapses. 
\end{theorem}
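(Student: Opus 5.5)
We prove the two directions separately, using the relativized version of the Long--Selman / Balcázar--Book--Schöning argument behind Theorem~\ref{thm:bbs86}. Throughout, fix $A \subseteq \Sigma^*$ and $B \in \SPARSE$ with census polynomial $p$. Write $\Sigma_k^{p,X}$ for the relativized hierarchy levels.

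\emph{Easy direction: if $\PH^A$ collapses then $\PH^{A\oplus B}$ collapses.} Suppose $\PH^A = \Sigma_k^{p,A}$. For a language in $\Sigma_j^{p,A\oplus B}$ with $j$ large, the sparse part $B$ can be handled by advice that is existentially guessed. Queries of a polynomial-time computation on inputs of length $n$ reach only strings of length at most $q(n)$. We existentially guess the finite list $B^{\leq q(n)}$, which has polynomial size. The only thing this list must satisfy is that it agrees with $B$, and this check is where sparseness is used.

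\emph{Main obstacle.} The obstacle is that $B$ is an arbitrary sparse set, not one computable from $A$, so the guessed list cannot be verified by an $A$-machine. I would therefore not verify it. Instead I would recast the statement in the advice form that the original proofs actually establish relative to $A$:
\[
\Sigma_j^{p,A\oplus B} \subseteq \Sigma_j^{p,A}/\mathrm{poly}
\quad\text{and}\quad
\Sigma_k^{p,A}/\mathrm{poly} \text{ is closed in the appropriate sense.}
\]
Concretely, with the correct table of $B^{\leq q(n)}$ given as advice, every $\Sigma_j^{p,A\oplus B}$ computation becomes a $\Sigma_j^{p,A}$ computation. The collapse $\Sigma_j^{p,A}=\Sigma_k^{p,A}$, which is uniform, can then be applied pointwise with the advice as part of the input. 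This yields $\Sigma_j^{p,A\oplus B} \subseteq \Sigma_k^{p,A}/\mathrm{poly}$.

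To remove the advice we need a collapse of the $A\oplus B$ hierarchy itself. Take a complete set $K_{k+1}$ for $\Sigma_{k+1}^{p,A\oplus B}$, which has a $\Sigma_k^{p,A\oplus B}$ circuit-style characterization with $B$-table advice. Now apply the Karp--Lipton/Yap style argument relative to $A\oplus B$. Guess the polynomial advice, which in this setting is a circuit with $A\oplus B$ gates. Then verify it universally via self-reducibility of the complete set $K_k^{A\oplus B}$. This verification only uses $A\oplus B$ queries and is legitimate. The result is $\Sigma_{k+2}^{p,A\oplus B} = \Pi_{k+2}^{p,A\oplus B}$, as in Yap's theorem, which relativizes verbatim.

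\emph{Converse: if $\PH^{A\oplus B}$ collapses then $\PH^A$ collapses.} This is the relativization of Theorem~\ref{thm:bbs86}(i) with base oracle $A$. Suppose $\Sigma_k^{p,A\oplus B}=\Pi_k^{p,A\oplus B}$. Then $\Sigma_k^{p,A} \subseteq \Sigma_k^{p,A\oplus B}$, and the latter is contained in $\Pi_k^{p,A}$ relative to the polynomial advice $B^{\leq q(n)}$. Hence $\Sigma_k^{p,A} \subseteq \Pi_k^{p,A}/\mathrm{poly}$, which is the same as $\Sigma_k^{p,A}$ having polynomial-size circuits with $\Sigma_{k-1}^{p,A}$-style gates. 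The relativized Yap theorem then gives $\Sigma_{k+1}^{p,A} = \Pi_{k+1}^{p,A}$, or a collapse at a slightly higher level. The key point is that the circuit-verification step uses only self-reducibility of a $\Sigma_k^{p,A}$-complete set relative to $A$, so it never needs access to $B$.

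Beyond the main obstacle above, I expect the remaining work to be bookkeeping: checking that each step of the Yap and Karp--Lipton proofs treats oracle queries as black boxes, and is therefore oracle-uniform.
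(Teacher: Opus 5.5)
Your proof is correct. The direction ``$\PH^{A\oplus B}$ collapses $\Rightarrow$ $\PH^{A}$ collapses'' is what the paper relies on. The paper gives no separate proof: it only invokes the observation of Bovet, Crescenzi, and Silvestri that both parts of Theorem~\ref{thm:bbs86} relativize to a base oracle $A$. Your chain $\Sigma_k^{p,A}\subseteq\Pi_k^{p,A\oplus B}\subseteq\Pi_k^{p,A}/\mathrm{poly}$, followed by the relativized Yap theorem, is exactly that relativized argument. It gives $\Sigma_{k+2}^{p,A}=\Pi_{k+2}^{p,A}$, matching the two-level loss recorded in Corollary~\ref{cor:bbs-4.5-Cor}. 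Two side remarks there are inaccurate but unused: $\Pi_k^{p,A}/\mathrm{poly}$ is not the same as circuits with $\Sigma_{k-1}^{p,A}$ gates, and Yap's argument does not go through self-reducibility.

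For the other direction you take a heavier route, prompted by an obstacle that is not there. The hierarchy you want to collapse is relative to $A\oplus B$, so its machines may query $B$ directly. A guessed table $T$ is therefore checked by the $\Pi_1^{p,A\oplus B}$ condition $\forall z\,(|z|\le q(n)\Rightarrow(z\in B\leftrightarrow z\in T))$. Sparseness is needed only to make the correct $T$ polynomially long, not for the check itself. Let $L'$ be the set of pairs $(x,T)$ accepted by the simulation that answers $B$-queries from $T$. Then $L'\in\Sigma_j^{p,A}=\Sigma_k^{p,A}$, and $x\in L\iff\exists T\,[\forall z(\dots)\wedge(x,T)\in L']$. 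This gives $\PH^{A\oplus B}=\Sigma_{\max(k,2)}^{p,A\oplus B}$ in one step.

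Your detour is nevertheless sound. From $\Pi_k^{p,A\oplus B}\subseteq\Sigma_{k+1}^{p,A\oplus B}\subseteq\Sigma_k^{p,A}/\mathrm{poly}$ you get $\Sigma_k^{p,A\oplus B}\subseteq\Pi_k^{p,A\oplus B}/\mathrm{poly}$. Yap's theorem relative to $X=A\oplus B$ then yields $\Sigma_{k+2}^{p,A\oplus B}=\Pi_{k+2}^{p,A\oplus B}$, as you state. The gain is that both directions run through the same relativized Yap theorem. The cost is extra levels and a nontrivial theorem where one additional universal quantifier suffices.
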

This shows that combining an arbitrary oracle with a sparse oracle will not change the (in-)finiteness of the polynomial-time hierarchy relative to the arbitrary oracle. By Yao \cite{yao85} and Baker, Gill, and Solovay \cite{bgs75}, there is one oracle $A_{\text{Yao}}$ relative to which the polynomial-time hierarchy is infinite and one oracle $A_{\text{BGS}}$ relative to which it collapses. 

Assume that we want to prove the hardness of some statement $S$. A traditional oracle against $S$ tells us that
\begin{center}
    ``the statement $S$ is not relativizably provable''.
\end{center} 
However, by Theorem \ref{thm:ph-trick}, simply solving the task
\begin{center}
    \textbf{Task} constructSPARSE(statement $S$, set $A$):\\
    Construct $B \in \SPARSE$ such that $S$ fails relative to $A \oplus B$
\end{center}
for $S$ and $A_{\text{Yao}}$ immediately provides the stronger proof barrier
\begin{center}
    ``even when assuming an infinite $\PH$, the statement $S$ is not relativizably provable''.
\end{center}
Since an infinite polynomial-time hierarchy is a standard assumption in structural complexity theory, 
ruling out relativizable proofs for $S$ under this assumption is further evidence for the hardness of proving $S$. 

We can generalize this approach to arbitrary statements that are sparse-oracle-independent relative to all oracles, i.e., sparse-oracle-independent statements that if they are themselves relativized by an oracle, are still sparse-oracle-independent\footnote{Since the proofs that show ``a statement is sparse-oracle-independent'' usually relativize themselves, a sparse-oracle-independent statement usually holds relative to all oracles.}. 
\begin{theorem}\label{thm:sparse-relativization}
    Let $S_1$ be a statement that is sparse-oracle-independent relative to all oracles, let $A$ be an oracle relative to which $S_1$ holds, and let $B$ be according to \emph{constructSPARSE($S_2,A$)} for some statement $S_2$. Then $S_1$ holds and $S_2$ fails relative to $A \oplus B$. 
\end{theorem}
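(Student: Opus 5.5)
The proof is essentially an unfolding of the definitions, so I would simply chain them together. The failure of $S_2$ is immediate. By the specification of the task constructSPARSE($S_2,A$), the oracle $B$ is a set in $\SPARSE$ such that $S_2$ fails relative to $A \oplus B$. So only the claim that $S_1$ holds relative to $A \oplus B$ needs an argument.

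For $S_1$, I would use the hypothesis that $S_1$ is sparse-oracle-independent relative to all oracles, instantiated with the oracle $A$. Relativizing Definition~\ref{def:sparse-oracle-independent} to $A$, this says that $S_1^A$ holds if and only if $S_1$ holds relative to $A$ combined with every sparse oracle $B'$, that is, $S_1^{A \oplus B'}$ holds for every $B' \in \SPARSE$. Since $S_1$ holds relative to $A$ by assumption, the forward direction of this equivalence gives $S_1^{A \oplus B'}$ for all $B' \in \SPARSE$. Applying this to $B' = B$ yields $S_1^{A \oplus B}$, and combining the two parts proves the theorem.

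The only step needing care is fixing the precise meaning of ``sparse-oracle-independent relative to $A$''. I would state explicitly that relativizing the statement to $A$ means that ``the unrelativized world'' becomes the world with oracle $A$, and ``relative to a sparse oracle $B'$'' becomes relative to $A \oplus B'$. This is exactly the form in which Theorem~\ref{thm:ph-trick} relativizes Theorem~\ref{thm:bbs86}, so I would point to that case as the model instance. With this convention written down, no further obstacle remains and the proof is a two-line argument.
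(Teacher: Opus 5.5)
Your proposal is correct and matches the paper's proof. The failure of $S_2$ relative to $A \oplus B$ is part of the specification of constructSPARSE($S_2,A$), and $S_1^{A \oplus B}$ follows from $S_1^A$ by sparse-oracle-independence relativized to $A$; your explicit convention for that relativization (as in Theorem~\ref{thm:ph-trick}) is exactly what the paper's one-line argument uses implicitly, and you correctly conclude $S_1$ where the paper's text has a typo reading $S_2$.
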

\begin{proof}
    The oracle $B$ is constructed such that $B \in \SPARSE$ and $S_2$ fails relative to $A \oplus B$. Furthermore, $S_1$ holds relative to $A$ and is sparse-oracle-independent relative to all oracles, hence $S_2$ also holds relative to $A \oplus B$.
\end{proof}

The requirement of constructSPARSE($S,A$) that ``$S$ fails relative to $A \oplus B$'' instead of $S$ just failing relative to $B$ is necessary for Theorem \ref{thm:sparse-relativization} as otherwise we get a contradiction. Either $\PH^\emptyset$ is infinite or collapsing. Then Theorem \ref{thm:sparse-relativization} would give that either $\PH^{\emptyset \oplus A_{\text{BGS}}}$ or $\PH^{\emptyset \oplus A_{\text{Yao}}}$ is both infinite and collapsing. 

However, in many cases it is not necessarily harder to achieve that ``$S$ fails relative to $A \oplus B$'' compared to ``$S$ fails relative to $B$''. This holds at least for oracle constructions that use diagonalization arguments that are independent of whether some base oracle $A$ already exists at the beginning of the construction. In fact, our oracle construction in Section~\ref{sec:oracle-construction} is of this type. This lets us describe a single oracle construction that works for all sets $A \subseteq \Sigma^*$ as base oracle. So, in these cases, constructing a sparse oracle relative to which $S$ fails lets us apply Theorem \ref{thm:sparse-relativization} for many different sparse-oracle-independent statements.

Besides Theorem \ref{thm:bbs86}, there are further sparse-oracle-independent statements. 
The following theorem presents some of them.

\begin{theorem}[sparse-oracle-independent statements by \cite{bbs86,ls86,kstt92}]\label{thm:sparse-relativizing-results}
    \ 
    \begin{enumerate}
        \item $\PH = \PSPACE$ if and only if $\PH^S = \PSPACE^S$ for all $S \in \SPARSE$. \cite[Prop.~3.13, Thm.~3.3]{ls86, bbs86}\label{thm:sparse-relativizing-results-1}
        \item $\PH \neq \PSPACE$ if and only if $\PH^S \neq \PSPACE^S$ for all $S \in \SPARSE$. \cite[Thm.~4.8]{bbs86}\label{thm:sparse-relativizing-results-2}
        \item $\PP \subseteq \PH$ if and only if $\PP^S \subseteq \PH^S$ for all $S \in \SPARSE$. \cite[Thm.~3.4]{bbs86}\label{thm:sparse-relativizing-results-3}
        \item $\PP \not \subseteq \PH$ if and only if $\PP^S \not \subseteq \PH^S$ for all $S \in \SPARSE$. \cite[Thm.~4.6]{bbs86}\label{thm:sparse-relativizing-results-4}
        \item $\NP \subseteq \CeqP$ if and only if $\NP^S \subseteq \CeqP^S$ for all $S \in \SPARSE$. \cite[Thm.~6.3]{kstt92}\label{thm:sparse-relativizing-results-5}
        \item $\NP \subseteq \modP$ if and only if $\NP^S \subseteq \modP^S$ for all $S \in \SPARSE$. \cite[Thm.~6.3]{kstt92}\label{thm:sparse-relativizing-results-6}
        \item $\modP \subseteq \PP$ if and only if $\modP^S \subseteq \PP^S$ for all $S \in \SPARSE$. \cite[Thm.~6.3]{kstt92}\label{thm:sparse-relativizing-results-7}
    \end{enumerate}
\end{theorem}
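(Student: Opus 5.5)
For each item, one direction is immediate. The right-hand side quantifies over all $S \in \SPARSE$, and $\emptyset \in \SPARSE$. Relative to $\emptyset$ every oracle machine behaves like the corresponding unrelativized machine, so $\mathcal{C}^\emptyset = \mathcal{C}$ for every class involved. Hence specialising the right-hand side to $S = \emptyset$ gives the left-hand side. The real content is that a positive (resp.\ negative) relationship between the unrelativized classes transfers to every sparse oracle. For this I will simply invoke the cited results \cite{bbs86,ls86,kstt92}, but I want to record the two techniques I would use to re-derive them.

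\emph{Positive inclusions (items \ref{thm:sparse-relativizing-results-1}, \ref{thm:sparse-relativizing-results-3}, \ref{thm:sparse-relativizing-results-5}--\ref{thm:sparse-relativizing-results-7}): guess and verify the relevant part of $S$.} Fix $S$ with $\card{S^{\leq m}} \leq q(m)$, and let $L$ be in the larger class relative to $S$, say $L = L(M^S)$ for a $\PSPACE$- or $\PP$-machine $M$. In Simon's model all queries of $M$ have length at most $p(n)$. Let $T$ be the unrelativized machine that receives $(x, \ell)$, where $\ell$ is a list of words, and answers each oracle query by membership in $\ell$. Then $L(T)$ lies in the corresponding unrelativized class, and by hypothesis in $\Sigma_k^p$ (resp.\ the smaller class). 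Now $x \in L$ iff there is a list $\ell$ of at most $q(p(|x|))$ words of length at most $p(|x|)$ such that:
\begin{itemize}
\item every word of $\ell$ lies in $S$;
\item every word $w$ of length at most $p(|x|)$ with $w \in S$ lies in $\ell$;
\item $(x,\ell) \in L(T)$.
\end{itemize}
The first two conditions form a $\Pi_1^{p,S}$ predicate, so $L \in \Sigma_{k+2}^{p,S} \subseteq \PH^S$. This settles items \ref{thm:sparse-relativizing-results-1} and \ref{thm:sparse-relativizing-results-3}.

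For items \ref{thm:sparse-relativizing-results-5}--\ref{thm:sparse-relativizing-results-7} the target classes $\CeqP$, $\modP$, $\PP$ are not closed under this existential-universal guessing. Instead one must verify the guessed list through a counting argument on the census, which is the technique of \cite{kstt92}. I expect this to be the main obstacle if one wants a self-contained proof, so here I would rely on \cite[Thm.~6.3]{kstt92}.

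\emph{Negative statements (items \ref{thm:sparse-relativizing-results-2} and \ref{thm:sparse-relativizing-results-4}): contrapositive plus self-reducibility.} For item \ref{thm:sparse-relativizing-results-2}, suppose $\PH^S = \PSPACE^S$ for some sparse $S$. Then $\QBF \in \PSPACE \subseteq \PSPACE^S \subseteq \Sigma_k^{p,S}$ for some $k$. Replacing oracle answers by the advice $S^{\leq p(n)}$ gives $\QBF \in \Sigma_k^p/\mathrm{poly}$. A Karp--Lipton style argument then applies, using the downward self-reducibility of $\QBF$: existentially guess polynomial-size advice for all lengths up to $n$, and universally verify its consistency via self-reduction in $\PH$. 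This places $\QBF$ in $\PH$, so $\PH = \PSPACE$. Item \ref{thm:sparse-relativizing-results-4} follows in the same way from $\PP^S \subseteq \PH^S$, using a self-reducible $\PP$-complete (or $\#\P$-complete) problem such as the permanent, whose instance checker makes the guessed advice verifiable inside $\PH$ \cite{bbs86}.
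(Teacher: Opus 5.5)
Your proposal is correct and takes essentially the same approach as the paper: the paper gives no argument of its own and simply cites \cite{bbs86,ls86,kstt92} item by item, which are exactly the results you invoke for the nontrivial directions. Your additions are accurate sketches of what those sources do: the easy direction follows by taking $S=\emptyset\in\SPARSE$, the inclusions by guessing and verifying $S^{\leq p(n)}$, and the non-inclusions by Karp--Lipton-style self-reducibility.
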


Bovet, Cresczeni, and Silvestri \cite[Thm.~4.3]{bcs95} generalize these types of sparse-oracle-independent statements by providing sufficent conditions for a statement to be sparse-oracle-independent.
Furthermore, the properties of the statements \ref{thm:sparse-relativizing-results-2} and \ref{thm:sparse-relativizing-results-4} of Theorem \ref{thm:sparse-relativizing-results} are true relative to $A_{\text{Yao}}$\footnote{Relative to this oracle, $\PSPACE$ and $\PP$ have complete sets, contrary to $\PH$.} and the properties of the remaining statements are true relative to $A_{\text{BGS}}$. 

In total, by Theorem \ref{thm:sparse-relativization}, for any statement $S_2$ and any sparse-oracle-independent statement $S_1$ of Theorems \ref{thm:bbs86} and \ref{thm:sparse-relativizing-results}, by constructing an oracle $B$ according to constructSPARSE($S_2,A$) with $A \in \{A_{\text{Yao}},A_{\text{BGS}}\}$, we obtain the barrier
\begin{center}
    ``even when assuming $S_1$, the statement $S_2$ is not relativizably provable''.
\end{center}
In Section \ref{sec:oracle-construction} we apply this approach to the statement ``all sets in $\PSPACE$ have p-optimal proof systems and there are arbitrarily complex sets with p-optimal proof systems''.

\subparagraph{TALLY-relativization framework.}
By augmenting Definition \ref{def:sparse-oracle-independent} using the class $\TALLY$ instead of $\SPARSE$, we say that a complexity theoretic statement $S$ is tally-oracle-independent if $S$ being true is equivalent to $S$ being true relative to every oracle from $\TALLY$. 
Using the same reasoning as for the $\SPARSE$-relativization framework, we can derive the $\TALLY$ versions of constructSPARSE (called constructTALLY) and Theorem \ref{thm:sparse-relativization}. Let us present some known tally-oracle-independent statements from the literature.
\begin{theorem}[Tally-oracle-independent statements by \cite{ls86}]\label{thm:tally-relativizing-results}
    \ 
    \begin{enumerate}
        \item $\P = \NP$ if and only if $\P^T = \NP^T$ for all $T \in \TALLY$. \cite[Cor.~3.5]{ls86}\label{thm:tally-relativizing-results-1}
        \item $\NP = \coNP$ if and only if $\NP^T = \coNP^T$ for all $T \in \TALLY$. \cite[Cor.~3.3]{ls86}\label{thm:tally-relativizing-results-2}
        \item $\P = \PSPACE$ if and only if $\P^T = \PSPACE^T$ for all $T \in \TALLY$. \cite[Prop.~3.11]{ls86}\label{thm:tally-relativizing-results-3}
        \item $\NP = \PSPACE$ if and only if $\NP^T = \PSPACE^T$ for all $T \in \TALLY$. \cite[Prop.~3.12]{ls86}\label{thm:tally-relativizing-results-4}
        \item $\NP \subseteq \E$ if and only if $\NP^T \subseteq \E^T$ for all $T \in \TALLY$. \cite[Prop.~3.10]{ls86}\label{thm:tally-relativizing-results-5}
    \end{enumerate}
\end{theorem}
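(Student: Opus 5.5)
Each of the five items is an ``if and only if''. The direction from right to left is trivial: take $T=\emptyset\in\TALLY$, so that $\mathcal{C}^T=\mathcal{C}$ for each class involved. The plan is therefore to prove the direction from left to right. Assume the unrelativized equality (or inclusion) holds, fix an arbitrary tally set $T$, and transfer the statement to $T$. The common tool is this: a polynomial-time oracle machine with oracle $T\subseteq 0^*$ on an input of length $n$ only queries strings $0^i$ with $i\le p(n)$. Its whole oracle dependence is therefore captured by the polynomially long advice word $a_T(m)=\chi_T(0^0)\chi_T(0^1)\cdots\chi_T(0^m)$, with $m=p(n)$.

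For an $\NP^T$ (resp.\ $\PSPACE^T$, $\coNP^T$) language $L$ accepted by $M^T$ with time (space) bound $p$, I define the unrelativized set
\[
L' = \{\langle x, w\rangle \mid |w| = p(|x|)+1 \text{ and } M \text{ accepts } x \text{ when queries } 0^i \text{ are answered by bit } w_i\}.
\]
Then $L'\in\NP$ (resp.\ $\PSPACE$, $\coNP$), because the simulation only reads the explicitly given bits. Hence by hypothesis $L'$ lies in the smaller class $\P$, $\NP$, $\coNP$, or in $\E$, as the case may be. Moreover $x\in L \iff \langle x, a_T(p(|x|))\rangle\in L'$. To finish, I show that the smaller class relativized to $T$ can produce $a_T(p(|x|))$ itself: querying $0^0,\dots,0^{p(|x|)}$ takes polynomially many queries of polynomial length. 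The machine then runs the machine for $L'$ on $\langle x,a_T\rangle$. This places $L$ in $\P^T$, $\NP^T$, or $\coNP^T$. Items 1 to 4 all follow this template, the only change being which classes are fed in. For item 2, complementation handles $\coNP^T\subseteq\NP^T$ symmetrically. For item 4, I use the space model of Simon, where oracle queries are also polynomially space bounded, so only polynomially many tally bits matter and the encoding still works.

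Item 5 is the main obstacle. An $\NP^T$ machine on input $x$ still needs only $p(|x|)+1$ advice bits, so $L'\in\NP\subseteq\E$ by hypothesis, via a machine running in time $2^{c(|x|+|w|)}$. The $\E^T$ machine then computes $a_T$ with $p(|x|)$ queries and runs this machine. The difficulty is the running time: $2^{c(|x|+p(|x|))}$ is not $2^{O(|x|)}$ when $p$ is superlinear. I would repair this by padding. Define $L''=\{\langle x,w,1^{q}\rangle\}$ with padding length $q$ chosen so that $L''$ is still in $\NP$, then apply $\NP\subseteq\E$ and translate the time bound. If padding does not suffice, I would use that the hypothesis gives $\NP\subseteq\E$ for every $\NP$ language. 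I would apply it to the language $L'$ directly, encoding $w$ so that the exponential time is measured in $|x|$ only, or fall back on the cited argument of \cite{ls86}, Prop.~3.10, which handles exactly this length bookkeeping. I expect this exponent bookkeeping to be the step requiring the most care; the other items are routine once the tally-advice encoding is in place.
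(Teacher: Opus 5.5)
Items 1--4 are fine. The tally-advice set $L'$, the observation that the smaller class relative to $T$ can compute $a_T(p(|x|))$ with polynomially many short queries, and the use of Simon's space model for items 3 and 4 give a correct (and standard) proof of the results the paper cites from \cite{ls86}. The right-to-left directions via $T=\emptyset$ are also fine.

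Item 5, however, is not proved, and the obstacle is a missing idea rather than exponent bookkeeping. Any $\NP$ set to which you apply $\NP\subseteq\E$ must receive the advice $a_T(p(|x|))$ as part of its instance. Because $T$ is arbitrary, these are $p(|x|)+1$ incompressible bits, so the hypothesis only yields time $2^{O(p(|x|))}$. Padding makes instances longer, so it only worsens the bound. ``Encoding $w$ so that the time is measured in $|x|$ only'' fails for the same reason.

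What you would actually need is to decide an $\NP$ predicate on instances of length $p(n)$ in time $2^{O(n)}$, i.e., subexponentially in the instance length. The hypothesis $\NP\subseteq\E$ provides nothing of this sort. In fact the right-hand side of item 5 alone forces such a consequence. Let $x\in L^T$ iff the bits of $T$ at positions $|x|^k,\dots,(|x|+1)^k-1$ encode a satisfiable formula; then $L^T\in\NP^T$. Now diagonalize against all $2^{cn}$-clocked oracle machines: at a fresh length $n_j$, pick a formula of size $\Theta(n_j^{k-1})$ on which the $j$-th machine errs, viewing that machine as a circuit of size $2^{O(n_j)}$ with the rest of $T$ hardwired. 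This produces a tally $T$ with $L^T\notin\E^T$, unless $\SAT$ has circuits of size $2^{O(m^{1/(k-1)})}$ at almost all relevant lengths $m$.

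Hence ``$\NP^T\subseteq\E^T$ for all $T\in\TALLY$'' implies that $\SAT$ has circuits of size $2^{m^{\varepsilon}}$ for every $\varepsilon>0$. Any proof of the forward direction of item 5 must therefore extract this from $\NP\subseteq\E$. Your template cannot do that, and falling back on \cite[Prop.~3.10]{ls86} is a citation, not an argument; check exactly what that proposition states.
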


Since all statements of Theorem \ref{thm:tally-relativizing-results} hold relative to the oracle $A_{\text{BGS}}$, we can derive the following application of above tally-oracle-independent statements. For any statement $S_2$ and any tally-oracle-independent statement $S_1$ of Theorem \ref{thm:tally-relativizing-results}, by constructing an oracle $B$ according to constructTALLY($S_2,A_{\text{BGS}}$), we obtain the barrier
\begin{center}
    ``even when assuming $S_1$, the statement $S_2$ is not relativizably provable''.
\end{center}

\subparagraph{Oracles from SPARSE$\cap$EXPH.} It is well-known that ``relativizable proof techniques'' do not capture our ``current techniques'' anymore. Indeed, $\SPARSE$-relativizing techniques also do not capture ``current techniques'' like arithmetization. By Shamir \cite{sha92} it holds that $\IP = \PSPACE$, but Fortnow and Sipser \cite{fs88} show the existence of a sparse oracle\footnote{The oracle constructed by Fortnow and Sipser is not sparse, but is easily transformed into a sparse oracle.} $A$ such that $\IP^A \neq \PSPACE^A$.
For this reason it is interesting to combine the bounded relativization approaches of Hirahara, Lu, and Ren \cite{hlr23} with ours.

Let $S_1$ be a sparse-oracle-independent statement that is true relative to an $\EXPH$-computable oracle $A$. For a statement $S_2$, let $B$ be constructed according to a $\SPARSE \cap \EXPH$ version of constructSPARSE($S_2,A$). If ``current techniques'' are $\PSPACE$-relativizing and can not show $\PSPACE \neq \EXPH$, we obtain the proof barrier 
\begin{center}
    ``even when assuming $S_1$, `current techniques' can not prove the statement $S_2$'',
\end{center}
because $A \oplus B \in \EXPH$ and thus \cite{hlr23} applies and also Theorem \ref{thm:sparse-relativization} for $S_1$, $S_2$, $A$, and $B$ applies. So oracles of this type provide the strongest proof barrier for statements, which motivates the following two tasks.
\begin{enumerate}
    \item Show the existence of $\EXPH$-computable oracles relative to which sparse-oracle-independent statements are true.
    \item Construct oracles according to a $\SPARSE \cap \EXPH$ version of constructSPARSE for statements\footnote{As mentioned before, this is possibly only as hard as constructing an oracle from $\SPARSE \cap \EXPH$ against statements.}.
\end{enumerate}
In particular, we are interested in the following question.
\begin{spreadlines}{0ex}
\begin{align*}
&\text{\textbf{Open}: Is there an $\EXPH$-computable oracle relative to which the }\\
&\text{\phantom{\textbf{Open}: }polynomial-time hierarchy is infinite?}
\end{align*}
\end{spreadlines}
To the best of our knowledge, we only know that the polynomial-time hierarchy is infinite relative to random oracles by H{\aa}stad, Rossman, Servedio, and Tan \cite{hrst17} and to some kind of space-random oracles by Hitchcock, Sekoni, and Shafei \cite[Thm.~4.4]{hss21}.

\subparagraph{Application of the SPARSE-relativization framework.}
Egidy and Glaßer \cite{eg25} prove the following theorem.
\begin{theorem}[\cite{eg25}]\label{thm:eg25}
    Let $A \subseteq 2\N$ be arbitrary. There is a sparse oracle $B \subseteq 2\N+1$ such that all $L \in \PSPACE^{B \cup A} \setminus \NP^{B \cup A}$ have no optimal proof system relative to $B \cup A$.
\end{theorem}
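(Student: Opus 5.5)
We build $B \subseteq 2\N+1$ in stages by diagonalization against pairs of machines. The construction treats the base oracle $A \subseteq 2\N$ as fixed and arbitrary; $A$ is simply part of every oracle we query, so the argument is independent of what $A$ is. The goal is the following property. For every nondeterministic polynomial-space machine $M$ and every polynomial-time transducer $F$ with $\ran(F^{B\cup A}) = L(M^{B\cup A})$, either $L(M^{B\cup A}) \in \NP^{B\cup A}$, or some proof system for $L(M^{B\cup A})$ is not simulated by $F$. The tool for this is \emph{hidden witnesses}.

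\textbf{The witness words.} For a pair $(M,F)$ we reserve lengths $n$ and define a modified set. Fix a padded code word $c$ of length $p(n)$, placed in $2\N+1$, where $p$ is a polynomial chosen so that $M$ and $F$ cannot query $c$ on inputs of length $n$. Then $x$ of length $n$ with $x \in L(M)$ gets a short proof "$x, c$", provided $c \in B$. We add at most polynomially many such words per length, which keeps $B$ sparse. Concretely, the proof system $g$ will output $x$ on input $(x,w)$ if $w \in B$ and $w$ encodes $x$ together with a certificate (a computation of $M$ on $x$ relative to the oracle) that can be verified in polynomial time. So $g$ is a valid proof system for $L(M^{B\cup A})$, since it only outputs members.

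\textbf{The diagonalization.} At the stage for $(M,F)$, with $F$ claimed optimal, we ask whether we can choose an extension and an input $x$ of fresh length such that $x \in L(M)$ but $F$ has no short proof of $x$. If so, we put the encoding word for $x$ into $B$. This gives $g$ a short proof of $x$ while $F$ has none. The word is long enough that $M$'s acceptance of $x$ and $F$'s behaviour on short proofs are unaffected, except for polynomially bounded queries that we freeze.

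If no such extension exists, then for all extensions, every $x \in L(M)$ at those lengths has a short $F$-proof. This yields an $\NP$ algorithm for $L(M)$: guess a short proof $y$ and check $F(y)=x$. We then argue, as in standard oracle constructions, that we can stop treating this pair, since $L(M) \in \NP$ provided the construction respects $M$ thereafter. The crux is to define the requirement robustly, so that this dichotomy holds for all future extensions. The standard trick is to require the property only for extensions that also preserve the frozen finite part.

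\textbf{Main obstacle.} The hardest step is the dichotomy. If we cannot force a non-simulation, we must conclude $L(M^{B\cup A}) \in \NP^{B\cup A}$ for the \emph{final} oracle, not merely for the current finite part. Here I would first try the Egidy–Glaßer approach. We quantify over all sparse-respecting extensions and show that if every extension lets $F$ prove all of $L(M)$ with short proofs, then the $\NP$ machine "guess $y$, check $F(y)=x$" is correct in the limit. Soundness is automatic because $\ran F = L(M)$. Completeness comes from the universal quantification over extensions, combined with later stages only adding words of length beyond $F$'s and $M$'s reach on the relevant inputs.

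A second delicate point is sparsity. Each stage adds polynomially many long words, and stages use widely separated lengths, so $|B^{\le n}|$ stays polynomial. Finally, because all added words lie in $2\N+1$ and the construction never inspects $A$ beyond what machines query, it works uniformly for every $A \subseteq 2\N$.
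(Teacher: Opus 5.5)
The paper does not prove this theorem; it imports it from \cite{eg25}. Your sketch therefore has to stand on its own, and its central step fails.

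\textbf{The code word cannot be hidden from $F$.} You assume the padded code word $c$ for $x$ can be hidden from $F$ (``$p$ chosen so that $M$ and $F$ cannot query $c$ on inputs of length $n$''). But the inputs of $F$ that matter are not of length $n$. To refute optimality you must rule out $F$-proofs of $x$ of length $q(|z|)$, where $z$ is the $g$-proof and $q$ is an \emph{arbitrary} polynomial. Meanwhile $|c|$ is bounded by the running time of the fixed machine $g$ on $z$, and in your format $c$ is even part of $z$. So $F$ can read $c$.

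Concretely, let $F$ be $g$ itself: the reader that outputs $x$ on $\langle x,w\rangle$ whenever $w\in B$ is a code word for $x$, plus an exponentially padded fallback that simulates $M$. Under your invariant (code words only for members), this $F$ is a proof system for $L(M^{B\cup A})$ and simulates $g$ trivially. Your stage finds $x$ with no short $F$-proof \emph{before} $c$ is inserted, and inserting $c$ then hands $F$ a short proof. So the ``yes'' branch yields no non-simulation.

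If you instead pose the question with $c$ already inserted, the ``no'' branch breaks. You then only learn that $F$ has short proofs of $x$ in oracles that contain a code word for $x$. A sparse $B$ contains such code words for only polynomially many $x$ per length, so ``guess $y$, check $F(y)=x$'' relative to $B\cup A$ is not complete. Either way, the dichotomy does not give $L(M^{B\cup A})\in\NP^{B\cup A}$.

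\textbf{What is missing.} You need a way to exploit $F$'s dependence on the code words. The witness system $g$ for the pair $(M,F)$ only has to be sound if $F$ really is a proof system for $L(M)$. So the construction may first try to make $F$ unsound, e.g.\ by planting a code word of this pair for some $x\notin L(M)$ to which $F$ reacts. Only if that is impossible in every admissible extension, and the diagonalization with the code word present also fails, can one conclude $L(M^{B\cup A})\in\NP^{B\cup A}$. The $\NP$ machine guesses a short $y$ and simulates $F$ on the oracle with the code word for $x$ hypothetically inserted. Completeness comes from the failed diagonalization, and soundness comes from the failed attempt to make $F$ unsound. Defining the admissible extensions so that this modified oracle is covered is the genuinely delicate part, and your sketch does not address it.

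\textbf{Two smaller points.}
\begin{itemize}
\item $g$ cannot verify ``a computation of $M$ on $x$'' in polynomial time, since polynomial-space computations are exponentially long. The soundness of $g$ must rest on the construction's invariant alone.
\item One stage per $(M,F)$ yields only finitely many witnesses. Non-optimality needs a failing $x$ for every simulation polynomial $q$, so the stages must range over triples $(M,F,q)$.
\end{itemize}
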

Furthermore, analyzing the proof of Theorem 4.5 of Balcázar, Book, and Schöning \cite{bbs86} gives the following result that also holds relative to all oracles.
\begin{corollary}[\cite{bbs86}, Corollary from Thm.~4.5]\label{cor:bbs-4.5-Cor}
    For all $k \geq 3$ it holds that
    \[\Sigma_k^{p} \neq \Pi_k^{p} \Longrightarrow \Sigma_{k-2}^{p,S} \neq \Pi_{k-2}^{p,S} \text{ for all } S \in \SPARSE.\]
\end{corollary}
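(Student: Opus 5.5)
Taking the contrapositive, the statement says: if $\Sigma_{k-2}^{p,S} = \Pi_{k-2}^{p,S}$ for some $S \in \SPARSE$, then $\Sigma_k^p = \Pi_k^p$. This is a Karp--Lipton-type argument with a sparse oracle, and I would follow the standard Balcázar--Book--Schöning approach.

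Fix $S$ sparse with $\Sigma_{k-2}^{p,S} = \Pi_{k-2}^{p,S}$ and $k \ge 3$. The case $k=3$ (so $k-2=1$) is the core case; larger $k$ follows by the same argument lifted with extra quantifiers, or via $\Sigma_{j}^{p} = \Sigma_{j-2}^{p}\!\circ$-style upward translation. I therefore describe $k=3$.

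\textbf{Step 1 (guess the oracle).} Relative to $S$, the hierarchy collapses to level $k-2$, so $\Sigma_{k-1}^{p,S} = \Sigma_{k-2}^{p,S}$. The sparse set $S$ up to length $p(n)$ is a polynomial-size list of strings. Any polynomial-time computation querying $S$ up to length $p(n)$ can therefore be simulated by a machine that receives the list $S^{\le p(n)}$ as a polynomial-size advice string $a$.

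\textbf{Step 2 (make the guess checkable).} To replace the true advice by an existentially guessed list, we need a way to verify that a guessed list $a$ behaves correctly. Take a $\Pi_k^p$ set $L = \{x : \forall y\, \exists z\, \forall w \ldots R(x,y,\ldots)\}$. Its inner $\Sigma_{k-1}^p$ part lies in $\Sigma_{k-1}^{p,S}$, hence in $\Pi_{k-2}^{p,S}$ via some fixed machine $M$ (the unrelativized set is trivially in the relativized class). Define ``$a$ is good'' to mean that, for all queries of length $\le q(n)$, the machine $M$ run with oracle answers from $a$ agrees with the $\Sigma_{k-1}^p$ characterization. The aim is for goodness to be checkable in $\Pi_{k-1}^p$, or rather for the use of a good $a$ to contribute only one extra quantifier. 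The true $a = S^{\le p(n)}$ is good. Moreover, any good $a$ yields the correct answers, because agreement is defined against the unrelativized truth.

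\textbf{Step 3 (count quantifiers).} Then $x \in L \iff \exists a\,[\,a \text{ good} \wedge M^{a}\text{-simulation accepts}\,]$. Goodness is a universal statement over queries of an equivalence between a $\Sigma_{k-1}^p$ predicate and a $\Pi_{k-2}^{p}$ predicate (with $a$ as advice). This places the whole expression in $\Sigma_k^p$, so $\Pi_k^p \subseteq \Sigma_k^p$ and hence $\Sigma_k^p = \Pi_k^p$.

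\textbf{Main obstacle.} The hard part is the quantifier bookkeeping in Step 3. An equivalence between $\Sigma_{k-1}$ and $\Pi_{k-2}$ predicates, sitting under a $\forall$, naively costs $\Pi_k$, and the outer $\exists a$ then gives $\Sigma_{k+1}$. This is exactly why the loss is two levels rather than one, and I would need to arrange the quantifiers carefully so that the total lands in $\Sigma_k^p$; the slack from $k-2$ is what should make this possible. I would also check that every step uses only black-box simulation, so the argument relativizes to an arbitrary base oracle, as the paper claims.
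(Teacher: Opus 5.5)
There is a genuine gap, and it is the one you flag under ``Main obstacle''. Your Step~3 asserts membership in $\Sigma_k^p$, but at the level where you use the hypothesis the quantifier count does not give this.

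You replace the inner $\Sigma_{k-1}^p$ predicate $B$ by a $\Pi_{k-2}^p$ predicate $M(\cdot,a)$ with a guessed list $a$. For $\exists a\,[\mathrm{good}(a)\wedge\forall y\,M((x,y),a)]$ to be sound you need at least $M(\cdot,a)\subseteq B$, i.e.\ $\forall q\,[\neg M(q,a)\vee B(q)]$. This is $\forall[\Sigma_{k-1}]=\Pi_k$, and the full equivalence is no cheaper. So you end at $\Sigma_{k+1}^p$, which proves nothing. The cheap direction of the check, $B\subseteq M(\cdot,a)$, is only $\Pi_{k-1}$, but it supports only a $\forall a$ formulation, which gives back $\Pi_k$. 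The ``slack from $k-2$'' is never used in your argument; you only use the consequence $\Sigma_{k-1}^{p,S}=\Pi_{k-2}^{p,S}$.

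The missing idea is to apply the hypothesis at level $k-2$, to a $\Pi_{k-2}^p$-complete set $C$. This is the Yap/Karp--Lipton argument underlying the cited Theorem~4.5 of \cite{bbs86}, which is all the paper offers as proof. Since $C\in\Pi_{k-2}^{p,S}=\Sigma_{k-2}^{p,S}$, there is a $\Sigma_{k-2}^p$ predicate $N(q,a)$ that agrees with $C$ on all $|q|\le m$ whenever $a$ is the true list $S^{\le r(m)}$. In other words, $\Pi_{k-2}^p\subseteq\Sigma_{k-2}^p/\mathrm{poly}$. For $L\in\Pi_k^p$ write $x\in L\iff\forall y\,\exists z\,[g(x,y,z)\in C]$. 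Then
\[
x\in L \iff \exists a\,\bigl[\forall q\,(N(q,a)\rightarrow q\in C)\ \wedge\ \forall y\,\exists z\,N(g(x,y,z),a)\bigr],
\]
with $q$ ranging over $|q|\le m(|x|)$.

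Only one-sided soundness of $a$ has to be checked, because the true list witnesses the direction from left to right. That check is a $\Pi_{k-2}$ condition. The second conjunct is $\Pi_{k-1}$, because $\exists z$ is absorbed into the $\Sigma_{k-2}$ predicate $N$; this is where $k-2\ge 1$ is used. Hence $\Pi_k^p\subseteq\Sigma_k^p$.

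This works uniformly for every $k\ge 3$, so no separate lifting from $k=3$ is needed. It is purely black-box, so it relativizes as the paper requires.
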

Note that above corollary is different from Theorems \ref{thm:bbs86} and \ref{thm:sparse-relativizing-results}. The statement $\Sigma_k^{p} \neq \Pi_k^{p}$ may not be sparse-oracle-independent, but it is almost, since a sparse oracle preserves the separation of level $k$ of the polynomial-time hierarchy at least to level $k-2$.

Using our $\SPARSE$-relativization framework, we obtain the following theorem.
\begin{theorem}\label{thm:conn-and-ph}
    There is an oracle $O_1$ such that $\NP^{O_1} \subsetneq \PH^{O_1} \subsetneq \PSPACE^{O_1}$, $\PH^{O_1}$ collapses and all $L \in \PSPACE^{O_1} \setminus \NP^{O_1}$ have no optimal proof system relative to $O_1$.
\end{theorem}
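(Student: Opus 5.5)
We combine Theorem~\ref{thm:eg25} with an oracle of Ko~\cite{ko89} and use Corollary~\ref{cor:bbs-4.5-Cor} together with Theorem~\ref{thm:sparse-relativizing-results} (relativized) to transfer the structural properties. First, we fix a base oracle $A'$ from Ko's construction relative to which the polynomial-time hierarchy collapses at a sufficiently high level but not earlier. Concretely, $\Sigma_k^{p,A'} \neq \Pi_k^{p,A'}$ for some $k \geq 3$, say $k=3$ so that $\Sigma_1^{p}$ versus $\Pi_1^{p}$ survives. At the same time $\PH^{A'} = \Sigma_{m}^{p,A'}$ for some finite $m$, and $\PH^{A'} \neq \PSPACE^{A'}$. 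Ko constructs, for every $k$, oracles where $\PH$ collapses exactly to level $k$ and $\PH \neq \PSPACE$, so such an $A'$ exists. We then shift it to even positions: $A \coloneqq \{2w : w \in A'\}$ (i.e., $A = A' \oplus \emptyset$ viewed inside $2\N$). Relative to $A$ all these properties still hold, since $A$ and $A'$ are polynomial-time equivalent.

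Next we apply Theorem~\ref{thm:eg25} to this $A \subseteq 2\N$ to obtain a sparse $B \subseteq 2\N+1$, and we set $O_1 \coloneqq A \cup B$, which is essentially $A' \oplus B'$ for a sparse $B'$. By Theorem~\ref{thm:eg25}, no $L \in \PSPACE^{O_1}\setminus\NP^{O_1}$ has an optimal proof system relative to $O_1$. It remains to verify the structural claims relative to $O_1$, using the relativized versions of the sparse-oracle results (Theorem~\ref{thm:ph-trick} and its analogues, which the paper notes hold relative to all oracles). First, $\PH^{O_1}$ collapses by Theorem~\ref{thm:ph-trick}, because $\PH^{A}$ collapses. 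Second, $\PH^{O_1} \neq \PSPACE^{O_1}$ by item~\ref{thm:sparse-relativizing-results-2} of Theorem~\ref{thm:sparse-relativizing-results} relativized to $A$, because $\PH^{A}\neq\PSPACE^{A}$. Third, $\NP^{O_1}\neq\PH^{O_1}$ follows from Corollary~\ref{cor:bbs-4.5-Cor} relativized to $A$: since $\Sigma_3^{p,A}\neq\Pi_3^{p,A}$, we get $\Sigma_1^{p,A\oplus B}\neq\Pi_1^{p,A\oplus B}$, so $\NP^{O_1}\neq\coNP^{O_1}$, and therefore $\NP^{O_1}\subsetneq\PH^{O_1}$.

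The main obstacle is the bookkeeping around these relativized transfers. Corollary~\ref{cor:bbs-4.5-Cor} and the Bovet--Crescenzi--Silvestri observation have to apply with base oracle $A$ and sparse oracle $B$, where $A\cup B$ with disjoint parity parts plays the role of $A\oplus B$. I would handle this by noting that $A\cup B$ and $A'\oplus B'$ are polynomial-time Turing equivalent, with $B'=\{w : 2w+1\in B\}$ still sparse. All the relevant classes are closed under such equivalence, so the relativized statements transfer directly. A second check is that Ko's oracle indeed gives a collapse at a level $\geq 3$ together with $\PH\neq\PSPACE$. If Ko's theorem only gives the collapse, I would fall back on combining it with the fact that a finite collapse at level $m$ plus $\PSPACE$ having complete sets already makes $\PH\neq\PSPACE$ equivalent to $\Sigma_m^{p}\neq\PSPACE$, which Ko's construction ensures by diagonalization.
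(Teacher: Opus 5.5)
Your proposal is correct and follows essentially the same route as the paper. You take Ko's oracle with $\PSPACE \neq \Sigma_4^p = \Pi_4^p \neq \Sigma_3^p$, join it with the sparse oracle from Theorem~\ref{thm:eg25}, and transfer three properties through the relativized sparse-oracle results: the collapse of $\PH$, $\PH \neq \PSPACE$, and $\NP \neq \coNP$ (via Corollary~\ref{cor:bbs-4.5-Cor} with $k=3$). Your explicit shift of Ko's oracle into $2\N$, so that $A \cup B = A' \oplus B'$, handles correctly a bookkeeping detail the paper leaves implicit.
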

\begin{proof}
    Let $S$ be the statement ``all $L \in \PSPACE \setminus \NP$ have no optimal proof system''. By Theorems \ref{thm:ph-trick} and \ref{thm:sparse-relativizing-results}, the statements ``$\PH$ collapses'' and ``$\PSPACE \neq \PH$'' are sparse-oracle-independent relative to all oracles. By Ko \cite[Thm.~4.4]{ko89}, there is an oracle $A_{\text{Ko}}$ relative to which $\PSPACE^{A_\text{Ko}} \neq \Sigma_4^{p,A_{\text{Ko}}} = \Pi_4^{p,A_{\text{Ko}}} \neq \Sigma_3^{p,A_{\text{Ko}}}$. By Theorem \ref{thm:eg25}, there is an oracle $B$ constructed according to constructSPARSE($\neg S,A_{\text{Ko}}$). Then Theorem \ref{thm:sparse-relativization} gives that there exists an oracle $O_1 \coloneqq A_{\text{Ko}} \oplus B$ such that $\PSPACE^{O_1} \neq \PH^{O_1}$, $\PH^{O_1}$ collapses, and all $L \in \PSPACE^{O_1} \setminus \NP^{O_1}$ have no optimal proof system relative to $O_1$.

    It remains to show that $\NP^{O_1} \neq \PH^{O_1}$. By Corollary \ref{cor:bbs-4.5-Cor} and $\Sigma_3^{p,A_{\text{Ko}}} \neq \Pi_3^{p,A_{\text{Ko}}}$ and $B \in \SPARSE$, it holds that $\NP^{O_1} = \Sigma_1^{p,O_1} \neq \Pi_1^{p,O_1} = \coNP^{O_1}$. Hence, $\NP^{O_1} \neq \PH^{O_1}$.
\end{proof}
We want to point out that constructing such an oracle by hand would be a particular challenge compared to the simple application of the $\SPARSE$-relativization framework. Theorem \ref{thm:conn-and-ph} gives the following insight on Q1 and Q2, because $\TAUT^{O_1}, \QBF^{O_1} \in \PSPACE^{O_1} \setminus \NP^{O_1}$.
\begin{corollary}
    A negative answer to \emph{Q1} and \emph{Q2} does not relativizably imply that $\PH$ is infinite.
\end{corollary}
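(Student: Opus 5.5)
The corollary follows directly from Theorem~\ref{thm:conn-and-ph}, reading ``relativizably implies'' in the usual way: a statement $P$ relativizably implies $Q$ if $P^X \Rightarrow Q^X$ holds for every oracle $X$. So it suffices to exhibit one oracle relative to which Q1 and Q2 have negative answers while $\PH$ collapses. The oracle $O_1$ from Theorem~\ref{thm:conn-and-ph} is the natural candidate, since relative to it $\PH^{O_1}$ collapses.

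The plan has two steps. First, check that $\TAUT^{O_1}$ and $\QBF^{O_1}$, the relativized versions of the sets in Q1 and Q2, lie in $\PSPACE^{O_1}\setminus\NP^{O_1}$.
\begin{itemize}
\item Membership in $\PSPACE^{O_1}$ is immediate. Relativized tautologies and relativized quantified formulas, with oracle atoms evaluated by queries, can be evaluated by exhaustive search in polynomial space. In Simon's model this is fine, because each oracle query has polynomial length.
\item Non-membership in $\NP^{O_1}$ uses the separation $\NP^{O_1}\neq\coNP^{O_1}$, which was obtained in the proof of Theorem~\ref{thm:conn-and-ph} via Corollary~\ref{cor:bbs-4.5-Cor}. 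The complete-set characterizations relativize, so $\TAUT^{O_1}$ is $\coNP^{O_1}$-complete. Hence $\TAUT^{O_1}\in\NP^{O_1}$ would give $\coNP^{O_1}\subseteq\NP^{O_1}$, and therefore equality, a contradiction.
\item The set $\QBF^{O_1}$ is $\PSPACE^{O_1}$-complete. It cannot lie in $\NP^{O_1}$, since that would put $\coNP^{O_1}\subseteq\PSPACE^{O_1}$ inside $\NP^{O_1}$.
\end{itemize}

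Second, apply Theorem~\ref{thm:conn-and-ph}: every set in $\PSPACE^{O_1}\setminus\NP^{O_1}$ has no optimal proof system relative to $O_1$. So neither $\TAUT^{O_1}$ nor $\QBF^{O_1}$ has an optimal proof system. Since every p-optimal proof system is optimal, they have no p-optimal proof systems either. Thus Q1 and Q2 are answered negatively relative to $O_1$ in both the optimal and the p-optimal variants, while $\PH^{O_1}$ collapses. The implication ``negative answer to Q1 and Q2 $\Rightarrow$ $\PH$ infinite'' therefore fails relative to $O_1$.

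The only step needing care is the first: confirming that the relativized $\TAUT$ and $\QBF$ are the appropriate complete sets relative to $O_1$, and that they sit strictly outside $\NP^{O_1}$. This is why the separation $\NP^{O_1}\neq\coNP^{O_1}$ from Theorem~\ref{thm:conn-and-ph} is essential. The rest is direct bookkeeping.
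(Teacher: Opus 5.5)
Your proposal is correct and matches the paper's argument. The paper also derives the corollary from Theorem~\ref{thm:conn-and-ph} by observing that $\TAUT^{O_1}, \QBF^{O_1} \in \PSPACE^{O_1} \setminus \NP^{O_1}$, and you fill in the details: why this holds (via $\NP^{O_1} \neq \coNP^{O_1}$, which also follows directly from $\NP^{O_1} \subsetneq \PH^{O_1}$ in the theorem's statement), and why the absence of optimal proof systems rules out p-optimal ones too.
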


\section{P-Optimal Proof Systems and Tally Oracles}\label{sec:oracle-construction}
In this section, we construct a tally oracle relative to which all sets in $\PSPACE$ have p-optimal proof systems and there are arbitrarily complex sets with p-optimal proof systems. By using an arbitrary oracle $A \subseteq \Sigma^*$ as base oracle at the beginning of the construction, we can apply Theorem \ref{thm:sparse-relativization} to obtain oracles relative to which additionally the tally-oracle-independent and sparse-oracle-independent statements of Theorems \ref{thm:bbs86}, \ref{thm:sparse-relativizing-results}, \ref{thm:tally-relativizing-results} hold.

\subsection{Notation for the Oracle Construction}
\subparagraph*{Words and sets.}
We denote the length of a word $w \in \Sigma^*$ by $|w|$. The empty word has length $0$ and is denoted by $\varepsilon$. The $(i+1)$-th letter of a word $w$ for $0 \leq i < |w|$ is denoted by $w(i)$, i.e., $w=w(0)w(1)\cdots w(|w|-1)$. If $v$ is a (strict) prefix of $w$, we write $v \sqsubseteq w$ ($v \sqsubsetneq w$) or $w \sqsupseteq v$ ($w \sqsupsetneq v$). 

We identify $\Sigma^*$ with $\N$ through a polynomial-time computable and polynomial-time invertible bijection $\mathrm{enc} \colon \Sigma^* \to \N$ defined as $\mathrm{enc}(w) = \sum _{i<|w|}(1+w(i))2^{|w|-1-i}$. Thus, we can treat words from $\Sigma^*$ as numbers from $\N$ and vice versa, which allows us to use notations, relations and operations of words for numbers and vice versa (e.g., we can define the length of a number by this). Expressions like $0^i$ and $1^i$ are ambiguous, because they can be interpreted as words from $\Sigma ^*$ or numbers. We resolve this ambiguity by the context or explicitly naming the interpretation. So a word $w \in \Sigma ^*$ can be interpreted as a set $\{i \in \N \mid w(i) = 1\}$
and subsequently as a partial oracle, which is defined for all words up to $|w|-1$. So $|w|$ is the first word that $w$ is not defined for, i.e., $|w| \in w1$ and $|w| \notin w0$. During the oracle construction we often use words from $\Sigma ^*$ to denote partially defined oracles. In particular, oracle queries for undefined words of a partial oracle are answered negatively.

\subparagraph*{Functions.}
A function $f'$ is an extension of a function $f$, denoted by $f \sqsubseteq f'$ or $f' \sqsupseteq f$, if $\dom(f) \subseteq \dom(f')$ and $f(x) = f'(x)$ for all $x \in \dom (f)$. If $x \notin \dom (f)$, then $f \cup \{x \mapsto y\}$ denotes the extension $f'$ of $f$ such that $f'(z) = f(z)$ for $z \not = x$ and $f'(x)=y$. If $y \in \ran (f)$ for some proof system $f$, then we denote the shortest $f$-proof for $y$ by $y_f$, i.e., $f(y_f)=y$ and for all $x < y_f$ it holds that $f(x) \not = y$. If $y \notin \ran (f)$, then $y_f$ is not defined.

We define polynomial functions $p_k \colon \N \to \N$ for $k \in \N^+$ by $p_k(n) \coloneqq n^k + k$. Also, let $\langle \cdot \rangle \colon \bigcup _{i \geq 0} \N^i \to \N$ be an injective polynomial-time computable and polynomial-time invertible sequence encoding function such that $|\langle u_1, \dots , u_n \rangle | = 2(|u_1| + \cdots + |u_n| + n)$.

\subparagraph{Definite computations}
A computation $F^w(x)$ is called definite, if $w$ is defined for all words of length less than or equal to the runtime bound of $F^w(x)$.
Additionally, a space-bounded computation $M^w(x)$ with bound $p(|x|)$ is definite, if $w$ is defined for all words of length less than or equal to $p(|x|)$. A computation definitely accepts (resp., rejects), if it is definite and accepts (resp., rejects). We may combine computations to more complex expressions and call those definite, if every individual computation is definite, e.g., ``$F^w(x) \in L(M^w)$ is definite'' means $F^w(x)$ outputs some $y$ and is definite, and $M^w(y)$ is definite.

\subparagraph*{Enumerations.} For any oracle $O$ let $\{F_k^O\}_{k \in \N^+}$ be a standard enumeration of polynomial-time oracle Turing transducers, where $F_k^O$ has running time exactly $p_k$. For any oracle $O$ let $\{S_k^O\}_{k \in \N^+}$ denote a standard enumeration of polynomial space oracle Turing machines, where $S_k^O$ has space-bound exactly $p_k$. Let $t_1,t_2,\dots$ be a not necessarily computable enumeration of all time-constructible functions.

\subsection{Oracle Construction}\label{sec:oracle-construction-2}
For Section \ref{sec:oracle-construction-2}, let $A \subseteq 2\N$ be arbitrary. Note that by the definition of $\mathrm{enc}$ it holds that $A \cap 0^* = \emptyset$.

\subparagraph*{Relativized PSPACE-complete set.}
\begin{definition}[$\PSPACE^O$-complete set]
    For all oracles $O$, define
    \begin{align*}
        K^O \coloneqq \{\langle 0^i,0^{t},x\rangle \mid i,t \in \N^+, x \in \Sigma^* \text{ and } S_i^O(x) \text{ accepts using space} \leq t\}.
    \end{align*}
    Furthermore, fix $\kappa$ as the number of the first machine $S_\kappa$ such that $L(S_\kappa^O) = K^O$.
\end{definition}
\begin{observation}
    For all oracles $O$, $K^O$ is $\PSPACE^O$-complete.
\end{observation}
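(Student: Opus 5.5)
We need to show two things: $K^O \in \PSPACE^O$, and every $L \in \PSPACE^O$ is polynomial-time many-one reducible to $K^O$. Both follow the standard unrelativized argument for the universal space-bounded language, using the fact that all machines in the enumeration $\{S_k^O\}$ access the same oracle $O$ under Simon's model, where the oracle tape counts toward the space bound.

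\textbf{Membership.} On input $\langle 0^i,0^t,x\rangle$, a universal machine with oracle $O$ simulates $S_i^O(x)$ while keeping track of the space used. It rejects as soon as the simulation exceeds $t$ cells (oracle tape included), and otherwise accepts iff $S_i^O$ accepts. The simulation overhead is a constant factor in the description length of $S_i$ times $t$, which is polynomial in the input length since $i$ and $t$ are given in unary.

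Termination needs care. A $t$-space-bounded machine has at most $2^{O(t+\log|x|)}$ configurations, oracle tape contents included, with the constant depending on $i$. The universal machine keeps a binary step counter of polynomial size and rejects once the count exceeds this bound, since the computation must then be looping. Queries are forwarded verbatim to $O$, so the simulation is faithful. Hence $K^O \in \PSPACE^O$.

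\textbf{Hardness.} Let $L \in \PSPACE^O$. Then $L = L(S_i^O)$ for some $i$, because $\{S_k^O\}$ enumerates all polynomial-space oracle machines, each with space bound exactly $p_i$. The map $x \mapsto \langle 0^i, 0^{p_i(|x|)}, x\rangle$ is computable in polynomial time, does not depend on $O$, and satisfies $x \in L \iff S_i^O(x)$ accepts using space $\le p_i(|x|)$, which is equivalent to the image lying in $K^O$. This is a many-one reduction that works uniformly for every oracle $O$.

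Finally, the number $\kappa$ is well defined: since the universal machine is a fixed oracle machine independent of $O$, it appears in the enumeration as some $S_\kappa$, possibly after padding its space bound to some $p_\kappa$. This gives $L(S_\kappa^O)=K^O$ for all $O$. The only subtle step is the termination and space accounting for the oracle tape in Simon's model, and the configuration-counting argument above handles it.
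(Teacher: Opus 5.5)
Your proof is correct and is the standard argument the paper takes for granted; the paper records this as an observation without proof. Membership follows from a space-bounded universal simulation that forwards queries verbatim and cuts off via configuration counting, and hardness follows from the oracle-independent reduction $x \mapsto \langle 0^i, 0^{p_i(|x|)}, x\rangle$. Your simulation also gives the property the paper notes right after the observation, namely that $S_\kappa^O(\langle 0^i,0^t,x\rangle)$ only queries words of length at most $t$.
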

Note that any computation $S_\kappa^O(\langle 0^i, 0^t, x \rangle)$ only depends on oracle words of length at most $t$.

\subparagraph*{Code words for PSPACE-complete set.} For machines $F_k$ and $n \in \N$, we want to encode facts of the form ``$F_k^O$ is a sound proof system for $K^O$ on inputs of length $n$''. We pad these encodings such that they can not be queried by some computations. 

\begin{definition}[Code words for $K^O$]\label{def:cK} Define 
    \[c_K \colon \N^+ \times \N \to \Sigma^*,\quad c_K(k,n) \coloneqq 0^w \text{ where }w = {\langle 0,k,n,p_{k}(n) \rangle}.\]
\end{definition}
\begin{observation}\label{obs:cK-polytime}
    For fixed $k$, the function $n \mapsto c_K(k,n)$ is polynomial-time computable with respect to $n$.
\end{observation}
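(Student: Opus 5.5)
The statement is Observation~\ref{obs:cK-polytime}: for each fixed $k$, the map $n \mapsto c_K(k,n)$ is computable in time polynomial in $n$. Recall $c_K(k,n) = 0^w$ with $w = \langle 0,k,n,p_k(n)\rangle$, where $w$ is read as the number $\mathrm{enc}(w)$. So the plan is to (i) compute the tuple encoding $w$, (ii) bound the numerical value $\mathrm{enc}(w)$ polynomially in $n$, and (iii) write $\mathrm{enc}(w)$ zeros.

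For step (i), treat $k$ as a constant. Computing $p_k(n) = n^k + k$ in binary takes time polynomial in $\log n$. Then $\langle 0,k,n,p_k(n)\rangle$ is obtained by the polynomial-time computable sequence encoding $\langle\cdot\rangle$. Its input has length $O(\log n)$, so this costs only polylogarithmic time in $n$.

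Step (ii) is the main point, since the output is unary in $\mathrm{enc}(w)$ rather than in $|w|$. By the length property of $\langle\cdot\rangle$,
\[
|w| = 2\bigl(|0|+|k|+|n|+|p_k(n)|+4\bigr).
\]
Here $|k|$ is constant, $|n| = O(\log n)$, and $|p_k(n)| = O(k\log n)$. Hence $|w| \le c_k\log n + d_k$ for constants $c_k, d_k$ depending only on $k$. From the definition of $\mathrm{enc}$ we have $\mathrm{enc}(w) < 2^{|w|+1}$. Therefore $\mathrm{enc}(w) \le 2^{c_k\log n + d_k + 1} = O(n^{c_k})$, which is polynomial in $n$ for fixed $k$.

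For step (iii), writing $0^{\mathrm{enc}(w)}$ then takes $O(n^{c_k})$ steps, for example by decrementing a binary counter initialised to $\mathrm{enc}(w)$. The total running time is therefore polynomial in $n$.

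The only subtle point is step (ii). The exponent $c_k$ depends on $k$, because $|p_k(n)| \approx k\log n$. This is exactly why the observation is stated for fixed $k$. It also explains the choice of padding: including $p_k(n)$ makes $|c_K(k,n)|$ exceed $p_k(n)$, so these code words lie beyond the reach of $F_k$ on inputs of length $n$, while their length stays polynomial in $n$.
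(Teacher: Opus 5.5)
Your proof is correct and is the routine unpacking of the definitions that the paper leaves implicit, since it states this as an observation without proof. The key point is the one you make: $|\langle 0,k,n,p_k(n)\rangle| = O(k\log n)$, so its numerical value, which is the length of $c_K(k,n)$, is polynomial in $n$ for fixed $k$, and writing that many zeros takes polynomial time.
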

\begin{observation}\label{obs:cK-definite}
    For $k \in \N^+$ and $O \subseteq \Sigma^*$, $S_\kappa^O(F_k^O(x))$ does not depend on words of length $\geq |c_K(k,|x|)|$.
\end{observation}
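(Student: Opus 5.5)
The plan is a length-counting argument. Everything $S_\kappa^O(F_k^O(x))$ can depend on is bounded by $p_k(|x|)$, and $|c_K(k,|x|)|$ is by construction strictly larger than $p_k(|x|)$. Fix $k \in \N^+$, an oracle $O$, and $x$ with $n \coloneqq |x|$.

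First, I treat the transducer. $F_k^O$ runs for exactly $p_k(n)$ steps, so every query it writes has length at most $p_k(n)$. Its output $y \coloneqq F_k^O(x)$ also satisfies $|y| \leq p_k(n)$. Hence both the computation $F_k^O(x)$ and the value $y$ depend only on oracle words of length $\leq p_k(n)$.

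Second, I treat $S_\kappa^O(y)$. If $y = \langle 0^i, 0^t, z\rangle$ is well formed, then the note after the definition of $K^O$ says that $S_\kappa^O(y)$ depends only on oracle words of length $\leq t$. Since $t \leq |y| \leq p_k(n)$, this gives the bound $p_k(n)$ again. If $y$ is not of this form, then $y \notin K^O$ for every oracle $O$, so whether $S_\kappa^O$ accepts $y$ does not depend on $O$ at all. This is the step I expect to need the most care. I will read the note (and the choice of $\kappa$) as covering the acceptance behaviour of $S_\kappa$ on all inputs, since the observation is only about the outcome. Alternatively, one can fix $\kappa$ so that $S_\kappa$ first checks the syntax of its input without querying the oracle.

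Third, I compare lengths. By Definition~\ref{def:cK}, $c_K(k,n) = 0^w$, so $|c_K(k,n)| = w$, where $w$ is the number $\langle 0,k,n,p_k(n)\rangle$. By the length property of $\langle\cdot\rangle$, the word encoding $w$ has length $2(|0|+|k|+|n|+|p_k(n)|+4) > |p_k(n)|$. Since $\mathrm{enc}$ is monotone in length (a longer word encodes a larger number), this gives $w > p_k(n)$. So every oracle word relevant to $S_\kappa^O(F_k^O(x))$ has length $\leq p_k(n) < |c_K(k,n)|$, which proves the observation.
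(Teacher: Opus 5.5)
Your argument is correct and is the same length-counting the paper leaves implicit: $F_k^O(x)$ and its output only involve lengths up to $p_k(|x|)$, and the note after the definition of $K^O$ bounds $S_\kappa$'s dependence by $t<|F_k^O(x)|\le p_k(|x|)$. The padding in $c_K$, which encodes $p_k(n)$ inside the tuple, gives $|c_K(k,|x|)|>p_k(|x|)$; your reading of ``depends'' as the acceptance outcome also settles malformed inputs, since these lie outside $K^O$ for every $O$, and it is how the observation is used in the proof of Lemma~\ref{lem:extension}.
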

\subparagraph*{Code words for arbitrary complex sets.} Relative to our oracle, we want to have p-optimal proof systems for arbitrary complex sets. We define these sets by using a second type of code word.
\begin{definition}[Code words for arbitrary complex sets]\label{def:cA} Define 
    \[c_L \colon \N^+ \times \N \to \Sigma^*, \quad c_L(i,n) \coloneqq 0^w \text{ where } w = {\langle 1,i,n,t_i(n) \rangle}.\]
\end{definition} 
\begin{definition}[Arbitrary complex sets]\label{def:Ai} Let $i \in \N^+$ and $O$ be an arbitrary oracle. We define
    \[L_i^O \coloneqq \{0^n \mid c_L(i,n) \in O\}.\]
    For a partial oracle $w$, we say that $x \in L_i^{w \cup A}$ or $x \notin L_i^{w \cup A}$ is definite if $w$ is defined for all words of length $\leq |c_L(i,|x|)|$.
\end{definition}
Intuitively, the complexity of $L_i^O$ depends on $t_i$. We construct the oracle such that this holds true.
\begin{observation}\label{obs:cA-polytime}
    For fixed $i$, the function $n \mapsto c_L(i,n)$ is polynomial-time computable with respect to $|c_L(i,n)|$, because $t_i$ is time-constructible.
\end{observation}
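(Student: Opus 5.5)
The claim is Observation~\ref{obs:cA-polytime}: for fixed $i$, the map $n \mapsto c_L(i,n)$ is computable in time polynomial in $|c_L(i,n)|$. The plan is to compute $c_L(i,n)$ directly from Definition~\ref{def:cA} in three stages: first compute $t_i(n)$, then the encoding $w = \langle 1,i,n,t_i(n)\rangle$, and finally write out $0^w$, bounding each stage by a polynomial in $|c_L(i,n)|$.

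\textbf{Computing $t_i(n)$.} Since $t_i$ is time-constructible (in the sense of \cite{ab09}), the value $t_i(n)$ can be computed from the input in $O(t_i(n))$ steps. With $i$ fixed, $|1|$ and $|i|$ are constants.

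\textbf{Computing $w$.} By the length property of $\langle\cdot\rangle$, we have $|w| = 2(|1|+|i|+|n|+|t_i(n)|+4) = O(\log n + \log t_i(n))$. Because $\langle\cdot\rangle$ is polynomial-time computable, $w$ is obtained from $(1,i,n,t_i(n))$ in time polynomial in $|w|$.

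\textbf{Writing $0^w$.} Here $w$ is read as a number via $\mathrm{enc}$, so $0^w$ has $w$ symbols and can be written out in $O(w)$ steps. That gives a total running time of $O(t_i(n)) + \mathrm{poly}(|w|) + O(w)$.

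The main step is to show that every term is bounded by a polynomial in $|c_L(i,n)| = w$. A number whose binary-like encoding has length $|w|$ satisfies $w \geq 2^{|w|-1}$ roughly, because $\mathrm{enc}$ makes a word of length $\ell$ correspond to a number of size at least $2^\ell - 1$. Since $|w| \geq 2|t_i(n)|$ and $|t_i(n)| \approx \log t_i(n)$, this yields $w \geq 2^{2|t_i(n)|-1} \geq t_i(n)$, and the same comparison gives $w \geq n$. Hence $t_i(n) \leq w$, $\mathrm{poly}(|w|) \leq \mathrm{poly}(w)$, and $O(w)$ is trivially linear in $w$.

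The only care needed is that the input presents $n$ and the output length is $w$, so the time to compute $t_i(n)$ must be charged against the output length. This works precisely because $t_i(n)$ is encoded inside $w$, so $w \geq t_i(n)$ and the running time is polynomial (indeed linear up to the polynomial-time encoding overhead) in $|c_L(i,n)|$.
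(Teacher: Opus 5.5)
Your proposal is correct and takes essentially the paper's approach; the paper justifies this observation only by the phrase ``because $t_i$ is time-constructible'', and you fill that in by computing $t_i(n)$ via time-constructibility, encoding $w=\langle 1,i,n,t_i(n)\rangle$, writing $0^w$, and charging every cost to $w=|c_L(i,n)|\ge \max(n,t_i(n))$. Your bound $n\le w$ also covers the $O(n)$ cost of producing a length-$n$ input for the time-constructing machine when $n$ is given in binary.
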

\begin{observation}\label{obs:cK-cA-properties}
    It holds that $c_K$ and $c_L$ are total, injective, polynomial-time invertible, $\ran(c_K) \cap \ran(c_L) = \emptyset$, $\ran(c_K) \cup \ran(c_L) \subseteq 0^*$ and $\ran(c_K) \in \P$.
\end{observation}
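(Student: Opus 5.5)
The plan is to unfold Definitions~\ref{def:cK} and~\ref{def:cA} and check each property in turn. The only ingredients are the stated properties of the sequence encoding $\langle\cdot\rangle$: it is injective, polynomial-time computable and polynomial-time invertible. We also use the convention that $0^w$ denotes the all-zero word whose length is the number $w=\langle\dots\rangle$.

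\emph{Totality and membership in $0^*$.} For all $k\in\N^+$, $i\in\N^+$ and $n\in\N$, the values $p_k(n)=n^k+k$ and $t_i(n)$ are defined, since each $t_i$ is a total time-constructible function. Hence $w$ is defined and so is $0^w$. By construction every value is a word consisting only of zeros, so $\ran(c_K)\cup\ran(c_L)\subseteq 0^*$.

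\emph{Injectivity and disjointness.} The map $m\mapsto 0^m$ is injective from $\N$ to $\Sigma^*$. So $c_K(k,n)=c_K(k',n')$ forces $\langle 0,k,n,p_k(n)\rangle=\langle 0,k',n',p_{k'}(n')\rangle$, and injectivity of $\langle\cdot\rangle$ gives $(k,n)=(k',n')$. The same argument works for $c_L$. A common value of $c_K$ and $c_L$ would give $\langle 0,\dots\rangle=\langle 1,\dots\rangle$, which contradicts injectivity of $\langle\cdot\rangle$ because the first components differ. Hence $\ran(c_K)\cap\ran(c_L)=\emptyset$.

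\emph{Polynomial-time invertibility.} Given an input $y$, first check that $y\in 0^*$ and compute $m=|y|$ in binary in linear time. Then decode $m$ via the polynomial-time inverse of $\langle\cdot\rangle$. This takes time polynomial in $|m|\le|y|$ (in fact logarithmic size). From the decoded tuple read off $(k,n)$, respectively $(i,n)$. No evaluation of $t_i$ is needed for the inverse, since the fourth component is simply discarded.

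\emph{$\ran(c_K)\in\P$.} On input $y$, run the decoding above. Accept iff the following all hold: $y\in 0^*$; $m$ decodes to a $4$-tuple $(0,k,n,r)$ with $k\ge1$; and $r=n^k+k$. The only point needing care is computing $n^k+k$ within polynomial time. Since $|m|=2(|0|+|k|+|n|+|r|+4)$, the numbers $k$ and $n$ have bit length $O(\log|y|)$. We can therefore compute $n^k$ by repeated multiplication, aborting as soon as the intermediate value exceeds $r$; at that point $r\neq p_k(n)$ and we reject. All intermediate values then have bit length at most $|r|+|n|$, and there are at most $k\le m\le|y|$ multiplications, so the total time is polynomial in $|y|$.

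I expect this bounding of the check $r=p_k(n)$ to be the only mildly delicate step. No analogous membership claim is made for $\ran(c_L)$, which is consistent with $t_i$ possibly being expensive.
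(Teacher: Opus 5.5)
Your verification is correct and is the routine unfolding of Definitions~\ref{def:cK} and~\ref{def:cA} that the paper leaves implicit; the paper states this observation without giving any proof. One step you leave tacit is deciding whether $m$ encodes a tuple at all, which is done by re-encoding the decoded tuple $(0,k,n,r)$ and comparing the result with $m$; this is standard because $\langle\cdot\rangle$ is polynomial-time computable.
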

\subparagraph*{Valid oracles.}
During the construction of the oracle, we successively add requirements that we maintain. These are specified by a partial function $r \colon \N^+ \times \N^+ \mapsto \N$, called requirement function. An oracle $w$ is $r$-valid for a requirement function $r$, if it satisfies the following requirements.
\begin{enumerate}[V1]
    \item\label{V1} $w \subseteq \ran(c_K) \cup \ran(c_L)$.

    (Meaning: The oracle is tally and contains only code words.)
    \item\label{V2} If $c_K(k,n) < |w|$, then $c_K(k,n) \in w$ if and only if for all $x \in \Sigma^n$, $F_k^{w \cup A}(x) \in K^{w \cup A}$.

    (Meaning: The oracle indicates the soundness of proof systems for $K^{w \cup A}$.)

    \item\label{V3} If $r(k,i) = 0$, then there is some $x$ such that $F_k^{w \cup A}(x) \notin L_i^{w \cup A}$ is definite.

    (Meaning: If $r(k,i)=0$, then $F_k$ is no proof system for $L_i$ relative to $w \cup A$ and all its extensions.)
    \item\label{V4} If $r(k,i) = m > 0$, then for all $x \in L_i^{w \cup A}$ with $|x| > m$ it holds that $x_{F_k^{w \cup A}}$ is not defined or $p_k(|x_{F_k^{w \cup A}}|) \geq |c_L(i,|x|)|$.

    (Meaning: If $r(k,i)>0$, then $F_k$ does not have short proofs for elements from $L_i$.)
    \item\label{V5} For all $i \in \N^+$ it holds that $\card{L_i^{w \cup A}} \geq \card{\{k \in \N^+ \mid (k,i) \in \dom(r)\}}$. 
    
    (Meaning: The sets $L_i$ contain a minimum amount of elements.)
\end{enumerate}
We will prove that V\ref{V1} to V\ref{V5} are satisfied for various oracles and requirement functions. To prevent confusion, we use the notation V\ref{V1}($w$), V\ref{V2}($w$), V\ref{V3}($w,r$), V\ref{V4}($w,r$), and V\ref{V5}($w,r$) to clearly state the referred oracle $w$ and requirement function $r$.
\subparagraph{Properties of $r$-valid oracles}
The following lemma shows how to extend $r$-valid oracles such that they remain $r$-valid.
\begin{lemma}\label{lem:extension}
    If $w$ is an $r$-valid oracle, then there exists $b \in \{0,1\}$ such that $wb$ is $r$-valid.
\end{lemma}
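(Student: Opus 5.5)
Let $z \coloneqq |w|$ be the word whose membership is decided by the extension, and split into cases by the kind of word $z$ is. In every case the goal is to choose $b$ so that V\ref{V1}($wb$), V\ref{V2}($wb$), V\ref{V3}($wb,r$), V\ref{V4}($wb,r$) and V\ref{V5}($wb,r$) hold. The key observation is a monotonicity fact: appending one bit only fixes the answer to one more query, and all computations that were definite relative to $w \cup A$ stay definite with the same outcome relative to $wb \cup A$. Also, since $A \subseteq 2\N$ and every code word is in $0^*$, the base oracle $A$ never interacts with the code words.

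\emph{Case 1: $z \notin \ran(c_K)\cup\ran(c_L)$.} Choose $b=0$. Then V\ref{V1} is immediate. V\ref{V2} only constrains code words $c_K(k,n)<|wb|=z+1$, and these are all $<z$, so its conditions are inherited from $w$. The conditions themselves do not change: by Observation~\ref{obs:cK-definite}, each computation $S_\kappa(F_k(x))$ with $|x|=n$ only queries words shorter than $c_K(k,n)<z$, so it behaves the same under $w\cup A$ and $wb\cup A$. V\ref{V3} is preserved because definite computations are unaffected. V\ref{V4} and V\ref{V5} are preserved because $L_i^{wb\cup A}=L_i^{w\cup A}$, as no code word is added. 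For V\ref{V4} one must also note that shortest proofs $x_F$ could change. A proof $y$ with $p_k(|y|)<|c_L(i,|x|)|$ queries only words of length below $|c_L(i,|x|)|$. If that length is at most $z$, the proof is already definite under $w$. If the length exceeds $z$, the element $x$ is not yet in $L_i$, which is decided later. So adding $0$ creates no new short proofs of elements of $L_i$.

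\emph{Case 2: $z=c_K(k,n)$.} Here $b$ is forced by V\ref{V2}. Set $b=1$ iff $F_k^{w\cup A}(x)\in K^{w\cup A}$ for all $x\in\Sigma^n$. This is well-defined and unchanged under $wb$, again by Observation~\ref{obs:cK-definite}, since the padding $p_k(n)$ inside $c_K$ exceeds the runtime and space of all queries involved. V\ref{V1} holds. V\ref{V3} holds by monotonicity. V\ref{V5} holds because $L_i$ is unchanged. For V\ref{V4}, with $L_i$ unchanged, the only risk is a new short proof. A computation $F_k'(y)$ that queries $z$ and produces an $x\in L_i$ with $p_{k'}(|y|)<|c_L(i,|x|)|$ would need $x$ to already be in $L_i$, that is, $c_L(i,|x|)<z$. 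But then $|c_L(i,|x|)|\le|z|$, while the query to $z$ forces $p_{k'}(|y|)\ge|z|$, which is a contradiction.

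\emph{Case 3: $z=c_L(i,n)$.} This is the main case, and I expect it to be the main obstacle. Adding $z$ puts $0^n$ into $L_i$. This helps V\ref{V5} but may violate V\ref{V4}: some $F_k$ with $r(k,i)=m>0$ and $n>m$ might have a short proof of $0^n$. It may also violate V\ref{V3} if $r(k,i)=0$ but $F_k$'s witness output was $0^n$; however, a witness was definite, so its non-membership was already fixed. Hence $0^n\notin L_i$ is definite only if $z<|w|$, which is impossible here, and so V\ref{V3} witnesses are outputs whose $L_i$-status is already decided. The plan is therefore to take $b=0$ by default, which is always safe by the argument of Case~1. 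Take $b=1$ only if this is needed for V\ref{V5} and no $F_k$ with $0<r(k,i)<n$ has a proof $y$ with $p_k(|y|)<|z|$ and $F_k^{w1\cup A}(y)=0^n$.

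The difficulty is to argue that V\ref{V5} still holds with $b=0$. V\ref{V5} as stated is a requirement about the final count, and it is not checked length by length. So I would read V\ref{V5} on the current partial oracle $w$: $L_i^{w\cup A}$ already has enough elements, and keeping $b=0$ does not decrease it. Then $b=0$ always works in Case~3 as well. The burden of ensuring V\ref{V5} when $r$ is extended falls to the construction step rather than to this lemma. Consequently $b=0$ suffices in every case except $c_K$ words, where $b$ is dictated by V\ref{V2}.
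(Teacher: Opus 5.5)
Your proof is correct and follows the paper's approach: once you settle on $b=0$ in the $c_L$ case, your choice is exactly the paper's ($b$ forced by V2 at $c_K$ words, $b=0$ everywhere else), and you verify V1--V5 the same way, including the length argument that a computation querying a newly added $c_K$ word is too long to be a short proof of an element already in $L_i$. Your hesitation about V5 is unnecessary, since V5 is literally a condition on the current $w$ and $r$, and appending $0$ leaves the oracle set unchanged, so every $L_i$ and every shortest proof stay the same and there is nothing to re-check.
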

\begin{proof}
    Let $w$ be an arbitrary $r$-valid oracle and let $z \coloneqq |w|$ be the word whose membership is decided next. We choose $b$ according to the following procedure:
    \begin{romanenumerate}
        \item\label{lem:extension-i} If $z = c_K(k,n)$ for some $k \in \N^+$ and $n \in \N$ and for all $x \in \Sigma^n$ it holds that $F_k^{w \cup A} (x) \in K^{w \cup A}$, then $b \coloneqq 1$.
        \item\label{lem:extension-ii} Otherwise, $b \coloneqq 0$. 
    \end{romanenumerate}
    We show that $wb$ is $r$-valid.

    \emph{To V\ref{V1}($wb$)}: Since V\ref{V1}($w$) is satisfied, it holds that $w \subseteq \ran(c_K) \cup \ran(c_L)$. Also, $b$ is only chosen as $1$ when $z = c_K(k,n)$ for some $k$ and $n$. Thus, $wb \subseteq \ran(c_K) \cup \ran(c_L)$.
    
    \emph{To V\ref{V2}($wb$)}: V\ref{V2}($w$) is satisfied. By Observation \ref{obs:cK-definite}, for any $c_K(k,n) < |w|$ it holds that for all $x \in \Sigma^n$, $S_{\kappa}^{w \cup A}(F_k^{w \cup A}(x))$ only depends on words of length $<|c_K(k,n)|$. So V\ref{V2}($wb$) remains satisfied for such code words $c_K(k,n)$. 
    
    If $z = c_K(k,n)$ for some $k$ and $n$, then by the injectivity of $c_K$ (Observation \ref{obs:cK-cA-properties}), $k$ and $n$ are unique. Furthermore, again by Observation \ref{obs:cK-definite}, for all $x \in \Sigma^n$ and any $c \in \{0,1\}$ it holds that
    \[S_\kappa^{w \cup A}(F_k^{w \cup A}(x)) = S_\kappa^{wc \cup A}(F_k^{wc \cup A}(x)).\]
    Thus, case (\ref{lem:extension-i}) extends $w$ by adding $c_K(k,n)$ precisely if and only if it is necessary for V\ref{V2}($wb$).
   
    \emph{To V\ref{V3}($wb,r$)}: Since V\ref{V3}($w,r$) is satisfied and the respective computations are definite, V\ref{V3}($wb,r$) is satisfied for any choice of $b$.
    
    \emph{To V\ref{V4}($wb,r$)}: Let $r(k,i) = m > 0$ and let $x \in L_i^{w \cup A}$ with $|x| > m$ be arbitrary. By the definition of $L_i$, if $x \in L_i^{w \cup A}$, then $c_L(i,|x|) \in w$. Thus $||w|| \geq |c_L(i,|x|)|$ and $F_k^{w \cup A}(y)$ is definite for all $y$ such that $p_k(|y|) < |c_L(i,|x|)|$. By V\ref{V4}($w,r$), it holds that $x_{F_k^{w \cup A}}$ is not defined or $p_k(|x_{F_k^{w \cup A}}|) \geq |c_L(i,|x|)|$. Hence, $F_k^{wb \cup A}(y) = F_k^{w \cup A}(y) \neq x$ for all $y$ such that $p_k(|y|) < |c_L(i,|x|)|$. Finally, either $x_{F_k^{wb \cup A}}$ is not defined or $p_k(|x_{F_k^{wb \cup A}}|) \geq |c_L(i,|x|)|$. This shows that V\ref{V4}($wb,r$) is satisfied for all words $x \in L_i^{w \cup A}$. By the disjointness of $\ran(c_K)$ and $\ran(c_L)$ (Observation \ref{obs:cK-cA-properties}) it holds that if $z = c_L(i,n)$ for some $i$ and $n$, then case (\ref{lem:extension-ii}) extends $w$ to $w0$. Consequently, $L_i^{wb \cup A} = L_i^{w \cup A}$ and V\ref{V4}($wb,r$) is also satisfied for all words in $x \in L_i^{wb \cup A}$. 

    \emph{To V\ref{V5}($wb,r$)}: Since V\ref{V5}($w,r$) is satisfied and $wb \sqsupseteq w$, also V\ref{V5}($wb,r$) is satisfied.
\end{proof}
The following lemma shows that if we can construct an $r$-valid oracle $O$ with $\dom(r) = \N^+ \times \N^+$, then $O$ has the desired properties.
\begin{lemma}\label{lem:properties-of-valid-oracle}
   Let $r$ be a requirement function with $\dom(r) = \N^+ \times \N^+$ and let $O$ be an $r$-valid oracle. Then the following holds:
   \begin{romanenumerate}
    \item\label{lem:properties-of-valid-oracle-i} $O \in \TALLY$. 
    \item\label{lem:properties-of-valid-oracle-ii} $K^{O \cup A}$ has p-optimal proof systems.
    \item\label{lem:properties-of-valid-oracle-iii} For all $i \in \N^+$, $L_i^{O \cup A}$ has p-optimal proof systems.
    \item\label{lem:properties-of-valid-oracle-iv} For any time-constructible function $t$ there is some $L_i^{O \cup A} \notin \DTIME^{O \cup A}(t)$.
    \end{romanenumerate} 
\end{lemma}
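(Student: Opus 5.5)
We treat the four items separately, using the validity conditions V1--V5 for $O$ with $\dom(r)=\N^+\times\N^+$.

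Item (i) is immediate from V1($O$): $O\subseteq\ran(c_K)\cup\ran(c_L)\subseteq 0^*$ by Observation~\ref{obs:cK-cA-properties}.

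For (ii) we follow the advice idea of Cook--Krajíček and Beyersdorff--Köbler--Müller, with the advice supplied by the code words $c_K(k,n)$. Define $f^{O\cup A}$ on inputs $\langle 0^k, x\rangle$ as follows. With $n\coloneqq|x|$, compute $c_K(k,n)$; this is polynomial in $|x|$ for fixed $k$ by Observation~\ref{obs:cK-polytime}, but to obtain a single polynomial bound we pad the input. That is, we accept only inputs of the form $\langle 0^k, x, 0^{p_k(|x|)}\rangle$, so that $|c_K(k,n)|$ and the running time of $F_k$ are polynomial in the input length. Then query $c_K(k,n)$. If it is in $O$, output $F_k^{O\cup A}(x)$; otherwise output a fixed element of $K^{O\cup A}$, which is nonempty.

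Soundness follows from V2($O$): if $c_K(k,n)\in O$, then $F_k^{O\cup A}$ maps all of $\Sigma^n$ into $K^{O\cup A}$. Completeness holds because any $F_k$ that is a proof system for $K^{O\cup A}$ has all its outputs in $K$, so $c_K(k,n)\in O$ for every $n$ by V2. This gives $\ran(f)\supseteq\ran(F_k)=K$, and hence $f$ is a proof system. For p-optimality, given $g=F_k$ with $\ran(F_k)=K^{O\cup A}$, the map $x\mapsto\langle 0^k,x,0^{p_k(|x|)}\rangle$ is a polynomial-time translation. The main care needed is that $f$'s query $c_K(k,n)$ has polynomial length in its input; this is exactly what the padding ensures, since $|c_K(k,n)|=2^{O(|\langle0,k,n,p_k(n)\rangle|)}$ is polynomial in $n$ as a number encoded in unary.

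For (iii), fix $i$. By V5, $L_i^{O\cup A}$ is infinite, in particular nonempty. Define $h(\langle 0^n, 1^{|c_L(i,n)|}\rangle)$ to query $c_L(i,n)$ and output $0^n$ if it is in $O$, and some fixed element of $L_i$ otherwise; all other inputs are also mapped to that fixed element. By Observation~\ref{obs:cA-polytime}, $h$ runs in polynomial time, and its range is $L_i^{O\cup A}$. Now let $g=F_k$ be any proof system for $L_i^{O\cup A}$. Then $r(k,i)$ is defined, and $r(k,i)\neq0$ by V3: a definite non-element output would persist in $O$. So $r(k,i)=m>0$, and V4 gives, for every $0^n\in L_i$ with $n>m$, that every $F_k$-proof $y$ of $0^n$ satisfies $p_k(|y|)\ge|c_L(i,n)|$. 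We apply V4 to the shortest proof, whose length is at most that of any proof. Hence the translation $y\mapsto\langle F_k(y),1^{|c_L(i,|F_k(y)|)|}\rangle$, computed in time polynomial in $|y|$, yields a valid $h$-proof of polynomial length. Finitely many lengths $n\le m$ are handled by a table. This step is the crux, since V4 is what makes the proofs long enough to afford writing the padding.

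For (iv), given time-constructible $t$, we choose $i$ with $t_i$ growing faster than $t$, for instance $t_i(n)=2^{t(n)}$, which is time-constructible. Suppose $M$ decides $L_i^{O\cup A}$ in time $t$. Build the proof system $F_k(y)$ that, on $y=\langle 0^n,1^{s}\rangle$ with $s\ge t(n)$, runs $M(0^n)$ and outputs $0^n$ if it accepts, and otherwise outputs the fixed element. Padding with $1^{t(n)}$ makes $F_k$ polynomial-time, and it is a proof system for $L_i$ with proofs of length $O(t(n))$. By V4 we must have $p_k(O(t(n)))\ge|c_L(i,n)|\ge t_i(n)=2^{t(n)}$ for all large $0^n\in L_i$, which V5 guarantees to exist. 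This is a contradiction.
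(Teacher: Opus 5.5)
Your proposal is correct and follows the paper's proof. Items (i)--(iii) match it up to cosmetic choices of padding: the paper uses the code words $c_K(k,|x'|)$ and $c_L(i,n)$ themselves as padding inside the proofs, and in (ii) it also embeds an arbitrary proof system $f$ for $K^{O\cup A}$ to secure completeness, where you instead rely on some $F_k$ with range $K^{O\cup A}$. In (iv) you deviate slightly by applying V3 and V4 directly to the decider-based proof system, whereas the paper lets the p-optimal $h$ from (iii) simulate it and notes that $h$-proofs of $0^n$ have length at least $|c_L(i,n)|\ge t_i(n)$; both rest on the same lower bound from V4, and your version is a bit more direct.
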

\begin{proof}
    \emph{To (\ref{lem:properties-of-valid-oracle-i})}: Follows by V\ref{V1}($O$) and Observation \ref{obs:cK-cA-properties}.
    \bigskip

    \centerline{\emph{To (\ref{lem:properties-of-valid-oracle-ii})}:}
    Let $f$ be any proof system for $K^{O \cup A}$. Define
    \[h(x) \coloneqq 
    \begin{cases}
        F_k^{O \cup A}(x') & \text{if } x = \langle x', c_K(k,|x'|) \rangle \text{ and } c_K(k,|x'|) \in O\\
        f(x') & \text{if } x = \langle x', 1 \rangle\\
        f(0) & \text{otherwise}
    \end{cases}\]
    \begin{claim}\label{claim:h-in-FP}
        $h \in \FP^{O \cup A}$.
    \end{claim}
    \begin{claimproof}
        The encoding $\langle \rangle$ is polynomial-time invertible, $\ran(c_K) \in \P$ and $c_K$ is polynomial-time invertible (Observation \ref{obs:cK-cA-properties}). If the second part of the input is a code word, we can query the oracle in linear time and compute the output in polynomial time, since in this case the input has length at least $p_k(|x'|)$. 
    \end{claimproof}
    \begin{claim}\label{claim:ranh-equals-KO}
        $\ran(h) = K^{O \cup A}$. 
    \end{claim}
    \begin{claimproof}
        Since the second line of the definition of $h$ ensures $\ran(h) \supseteq \ran(f)$ and $f$ is a proof system for $K^{O \cup A}$, $\ran(h) \supseteq K^{O \cup A}$.

        If the input $x$ does not meet the requirements of the first line of the definition of $h$, then the output of $h$ is an element from $K^{O \cup A}$. Otherwise, $x = \langle x', c_K(k,|x'|)\rangle$ and $c_K(k,|x'|) \in O$. By V\ref{V2}($O$), it follows that for all $z \in \Sigma^{|x'|}$, $F_k^{O \cup A}(z) \in K^{O \cup A}$. Hence, $F_k^{O \cup A}(x') \in K^{O \cup A}$. This shows $\ran(h) \subseteq K^{O \cup A}$.        
    \end{claimproof}
    \begin{claim}\label{claim:h-psimulates-all-ps}
        All proof systems for $K^{O \cup A}$ are p-simulated by $h$.
    \end{claim}
    \begin{claimproof}
        Let $g$ be an arbitrary proof system for $K^{O \cup A}$. Let $F_k^{O \cup A}$ be the machine computing $g$. Consider the following simulation function $\pi$:
        \[\pi (x) \coloneqq \langle x, c_K(k,|x|)\rangle\]
        By Observation \ref{obs:cK-polytime}, for fixed $k$, we can compute $c_K(k,|x|)$ in polynomial time in $|x|$ and thus, $\pi \in \FP$. Furthermore, it holds that
        \[h(\pi(x)) = h(\langle x, c_K(k,|x|)\rangle) = F_k^{O \cup A}(x) = g(x).\]
        We have to show that the second equality holds. By $\ran(g) = K^{O \cup A}$, for all $x \in \Sigma^*$, $F_k^{O \cup A}(x) \in K^{O \cup A}$. Together with V\ref{V2}($O$), this means that $c_K(k,n) \in O$ for all $n \in \N$. Hence, $c_K(k,|x|) \in O$ and the second equality holds.
    \end{claimproof}
    The Claims \ref{claim:h-in-FP}, \ref{claim:ranh-equals-KO}, and \ref{claim:h-psimulates-all-ps} show that $h$ is a p-optimal proof system for $K^{O \cup A}$.
    \bigskip

    \centerline{\emph{To (\ref{lem:properties-of-valid-oracle-iii})}:}
    Let $i \in \N^+$ be arbitrary and let $a$ be any element of $L_i^{O \cup A}$ (which exists by V\ref{V5}($O,r$) and $r$ being total). Define
    \[h(x) \coloneqq 
    \begin{cases}
        0^n & \text{if } x = c_L(i,n) \in O\\
        a & \text{otherwise}
    \end{cases}
    \]
    Since by V\ref{V1}($O$), $O \subseteq \ran(c_K) \cup \ran(c_L)$ and $\ran(c_K) \in \P$ (Observation \ref{obs:cK-cA-properties}), we can check for line 1 in polynomial time. Since we can compute $0^n$ from $c_L(i,n)$, we get that $h \in \FP^{O \cup A}$. Furthermore, $\ran(h) = L_i^{O \cup A}$, because $a \in L_i^{O \cup A}$ and the first line is according to the definition of $L_i^{O \cup A}$. So $h$ is a proof system for $L_i^{O \cup A}$. 

    It remains to show that $h$ is p-optimal. Let $g$ be an arbitrary proof system for $L_i^{O \cup A}$. Let $F_k^{O \cup A}$ be the machine computing $g$. Consider the following simulation function $\pi$:
    \[\pi(x) \coloneqq c_L(i,|F_k^{O \cup A}(x)|)\]
    Since $L_i^{O \cup A} \in \TALLY$ and $F_k^{O \cup A}$ is a proof system for $L_i^{O \cup A}$, it holds that for all $x$
    \[F_k^{O \cup A}(x) = 0^{|F_k^{O \cup A}(x)|} \in L_i^{O \cup A} \quad \text{and} \quad c_L(i,|F_k^{O \cup A}(x)|) \in O.\]
    Using this, we obtain for all $x$
    \[h(\pi(x)) = h(c_L(i,|F_k^{O \cup A}(x)|)) = 0^{|F_k^{O \cup A}(x)|} = F_k^{O \cup A}(x) = g(x).\]
    It remains to show that $\pi \in \FP^{O \cup A}$. By the prerequisite that $\dom(r) = \N^+ \times \N^+$, it holds that $(k,i) \in \dom(r)$. Moreover, $r(k,i) = m > 0$, because otherwise by V\ref{V3}($O,r$), there is some $x$ such that $F_k^{O \cup A}(x) \notin L_i^{O \cup A}$, a contradiction that $F_k^{O \cup A}$ is a proof system for $L_i^{O \cup A}$. By V\ref{V4}($O,r$), it holds that for $y \in L_i^{O \cup A}$ with $|y| > m$, $p_k(|y_{F_k^{O \cup A}}|) \geq |c_L(i,|y|)|$. We use this to bound the complexity of $\pi$.
    
    Let $x$ be arbitrary. If $|F_k^{O \cup A}(x)| \leq m$, then $c_L(i,|F_k^{O \cup A}(x)|)$ is bounded by some constant, because there are only finitely many such code words. Otherwise, let $y \coloneqq F_k^{O \cup A}(x)$ with $|y| > m$. Then it holds that 
    \[p_k(|x|) \geq p_k(|y_{F_k^{O \cup A}}|) \geq |c_L(i,|y|)|.\]
    So $c_L(i,|F_k^{O \cup A}(x)|)$ is at most polynomially longer than $|x|$. Since $x \mapsto |F_k^{O \cup A}(x)|$ can be computed in polynomial time in $|x|$ and $|F_k^{O \cup A}(x)| \mapsto c_L(i,|F_k^{O \cup A}(x)|)$ can be computed in polynomial time in $|c_L(i,|F_k^{O \cup A}(x)|)|$ (Observation \ref{obs:cA-polytime}), we can compute $x \mapsto c_L(i,|F_k^{O \cup A}(x)|)$ in polynomial time in $|x|$. Thus, $\pi \in \FP^{O \cup A}$ and $h$ is p-optimal.
    \bigskip

    \centerline{\emph{To (\ref{lem:properties-of-valid-oracle-iv}):}} First, we have to show that for all $i \in \N^+$, $L_i^{O \cup A}$ is not finite.
    \begin{claim}\label{claim:Ai-infinite}
       For all $i \in \N^+$ it holds that $\card{L_i^{O \cup A}} = \infty$.
    \end{claim}
    \begin{claimproof}
        Follows by V\ref{V5}($O,r$) and $\dom(r) = \N^+ \times \N^+$.
    \end{claimproof} 
    Let $t$ be an arbitrary time-constructible function and let $t' \coloneqq 2^t$. Then $t'$ is also time-constructible, so there is some $i$ such that $t' = t_i$.
    
    Suppose that $L_i^{O \cup A} \in \DTIME^{O \cup A}(t)$ via a Turing machine $M^{O \cup A}$. Let $a$ be any element of $L_i^{O \cup A}$ (which exists by Claim \ref{claim:Ai-infinite}). Consider the following proof system $g$:
    \[g(x) \coloneqq 
    \begin{cases}
        y & \text{if } x \text{ is an accepting computation path of } M^{O \cup A}(y)\\
        a & \text{otherwise}    
    \end{cases}
    \]
    Since by assumption $M^{O \cup A}$ decides $L_i^{O \cup A}$ in time $O(t)$, $g$ has proofs for the elements of $L_i^{O \cup A}$ of  length $\leq c \cdot t$ for some constant $c$. By (\ref{lem:properties-of-valid-oracle-iii}), $L_i^{O \cup A}$ has a p-optimal proof system $h$ that for almost all $x \in L_i^{O \cup A}$ has proofs of length $|c_L(i,|x|)| \geq t_i(|x|)$. Since $h$ is a p-optimal proof system, $h$ p-simulates $g$ via some $\pi \in \FP^{O \cup A}$. Let $p$ denote the runtime polynomial for $\pi$. Then $\pi$ can increase the length of proofs at most by $p$. This gives a contradiction to the p-optimality of $h$, because $\card{L_i^{O \cup A}} = \infty$ (Claim \ref{claim:Ai-infinite}), thus $h$ has infinitely often proofs of length $\geq t_i$. Since there is some $m$ such that for all $n \geq m$ it holds that 
    \[p(c \cdot t(n)) < 2^{t(n)} = t_i(n),\]
    $\pi$ can not translate all $g$-proofs into $h$-proofs. Hence, $L_i^{O \cup A} \notin \DTIME^{O \cup A}(t)$.
\end{proof}
The remaining part of this section deals with the construction of a total requirement function $r$ and an $r$-valid oracle.

\subparagraph{Oracle construction.}
The oracle construction will take care of the following set of tasks $\{\tau_{k,i} \mid k,i \in \N^+\}$. Let $T$ be an enumeration of these tasks. In each step we treat the smallest task in the order specified by $T$, and after treating a task we remove it from $T$.

In step $s = 0$ we define $w_0 \coloneqq \varepsilon$ and $r_0$ as the nowhere defined requirement function. In step $s > 0$ we define $w_s \sqsupsetneq w_{s-1}$ and $r_s \sqsupsetneq r_{s-1}$ such that $w_s$ is $r_s$-valid by treating the earliest task in $T$ and removing it from $T$. We do this according to the following procedure:

\begin{itemize}
\item[] {\bf Task} $\tau_{k,i}$: Let $r^0 \coloneqq r_{s-1} \cup \{(k,i) \mapsto 0\}$. If there exists an $r^0$-valid partial oracle $v \sqsupsetneq w_{s-1}$, then assign $r_s \coloneqq r^0$ and $w_s \coloneqq v$. 

Otherwise, $n \in \N$ such that $c_L(i,n)$ is the lexicographic smallest code word for which $w_{s-1}$ is not defined. Extend $w_{s-1}$ to $v$ via Lemma~\ref{lem:extension} such that $|v| = c_L(i,n)$. Let $w_s \coloneqq v1$ and $r_s \coloneqq r_{s-1} \cup \{(k,i) \mapsto ||w_{s}||+1\}$.

(Meaning: Try to extend the oracle such that $F_k$ is no proof system for $L_i$. If this is not possible, then enforce that $F_k$ does not have short proofs for elements from $L_i$. In either case, $L_i$ receives an additional element.)
\end{itemize}
This procedure yields two series $w_0, w_1, \dots$ and $r_0, r_1, \dots$, from which we define our desired oracle and its associated requirement function.
\begin{definition}[Desired Oracle]\label{def:oracle-definition}
    Define $O \coloneqq \bigcup_{s\in \N}w_s$ and $r \coloneqq \bigcup_{s\in \N}r_s$.   
\end{definition}
The following claims show that $O$ and $r$ are well-defined and that $O$ is $r$-valid. 

\begin{claim}\label{claim:O-r-well-defined}
    $O$ and $r$ are well-defined, $O$ is defined for all words from $\Sigma^*$ and $\dom(r) = \N^+ \times \N^+$.
\end{claim}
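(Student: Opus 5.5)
The plan is an induction on the step number $s$. The invariant is that each step is well-defined and produces $w_s \sqsupsetneq w_{s-1}$ and $r_s \sqsupsetneq r_{s-1}$ with $w_s$ being $r_s$-valid. From this invariant, $O$ is the limit of a strictly increasing chain of partial oracles, so it is defined on all words. Likewise $r$ is a union of a chain of compatible partial functions, so it is a well-defined function. Since every task $\tau_{k,i}$ is eventually the earliest remaining task in $T$ and is treated exactly once (adding $(k,i)$ to the domain), we obtain $\dom(r) = \N^+ \times \N^+$.

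The base case is immediate. $w_0=\varepsilon$ is $r_0$-valid: V1 and V2 hold vacuously for the empty oracle, V3 and V4 are vacuous since $r_0$ is nowhere defined, and V5 requires $\card{L_i}\ge 0$.

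For the inductive step, fix a step $s>0$ treating $\tau_{k,i}$, and assume $w_{s-1}$ is $r_{s-1}$-valid. If the first branch applies, the definition directly gives an $r^0$-valid $v \sqsupsetneq w_{s-1}$. Otherwise I must check that the second branch succeeds.

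\begin{enumerate}
\item The index $n$ exists. Infinitely many code words $c_L(i,\cdot)$ exist and $w_{s-1}$ is finite. The chosen $c_L(i,n)$ is the smallest undefined one, so $c_L(i,n)\ge |w_{s-1}|$.
\item Reaching $v$. Repeatedly applying Lemma~\ref{lem:extension} yields an $r_{s-1}$-valid $v \sqsupseteq w_{s-1}$ with $|v| = c_L(i,n)$.
\item Validity of $w_s = v1$ for V1, V3 and V5.
  \begin{itemize}
  \item V1 holds because $c_L(i,n) \in \ran(c_L)$.
  \item V3 for old requirements persists because the witnessing computations are definite.
  \item V5 holds because $L_i$ gains the element $0^n$, exactly matching the one new pair $(k,i)$ in $\dom(r_s)$; other $L_j$ are unchanged or grow.
  \end{itemize}
\item Validity of $w_s = v1$ for V2. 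Adding $c_L(i,n)$ does not affect any $c_K(k',n') < |v|$, because by Observation~\ref{obs:cK-definite} those computations only depend on shorter words. Code words $c_K$ at or beyond $|v1|$ are not yet covered by V2.
\item Validity of $w_s = v1$ for V4, old requirements $r(k',i')=m'$. Here I argue as in the V4 part of Lemma~\ref{lem:extension}. The only newly added element is $0^n \in L_i$. If $i'=i$ and $n>m'$, V4 must hold for $x=0^n$. Suppose instead that some $y$ with $p_{k'}(|y|) < |c_L(i,n)|$ had $F_{k'}^{v1\cup A}(y)=0^n$. This computation does not query $c_L(i,n)$, so it coincides with the one relative to $v$. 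I expect to need a comparison with the earlier step that set $r(k',i)=m'$, and this might not go through directly. A cleaner route, if it works, is to note that V4 only constrains short proofs of $0^n$ with $p_{k'}(|y|)<|c_L(i,n)|$. Those are computations not seeing the code word; they could already have been used to make an $r^0$-valid extension for $\tau_{k',i}$ impossible. I am not sure this route closes the gap.
\item The new requirement $r_s(k,i)=\|w_s\|+1$. Every $x\in L_i^{w_s\cup A}$ has $|x|\le \|w_s\|$, so the constraint for $|x|>m$ is vacuous at the moment. It must persist under later extensions, where it is then maintained by Lemma~\ref{lem:extension} and later steps.
\end{enumerate}

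The main obstacle is precisely this: showing that adding $c_L(i,n)$ in the second branch never violates V4 for existing requirements $r(k',i)=m'>0$. I expect this to need the failure of the first branch. If $F_{k}$ produced $0^n$ via a short proof, or more generally if some extension violated a constraint, one should be able to build an $r^0$-valid extension. The route is to avoid adding $c_L(i,n)$ so that the short proof's output lies outside $L_i$, which definitely certifies V3 for $(k,i)$. That contradicts the first branch failing. Pinning down this case analysis, especially the bookkeeping for $k'\ne k$, is the delicate part.
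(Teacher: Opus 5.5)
The argument this claim actually needs is already in your first paragraph, and it is the paper's proof. In either branch of $\tau_{k,i}$ the oracle grows by at least one bit: $v\sqsupsetneq w_{s-1}$ in the first branch, the appended $1$ in the second. Also, $r_{s-1}$ is extended at $(k,i)$, which is not yet in its domain because $T$ has no duplicates and treated tasks are removed. Finally, every task eventually becomes the earliest one in $T$. Hence $|w_s|\ge s$, the unions of the two chains are well-defined, $O$ is defined on all of $\Sigma^*$, and $\dom(r)=\N^+\times\N^+$.

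The problem is that you made $r_s$-validity part of the inductive invariant and then left its V4 part open. As written, your induction therefore never establishes its own invariant. Nothing in this claim depends on validity, though. The second branch only needs \emph{some} extension of $w_{s-1}$ to length $c_L(i,n)$. The bit rule (i)/(ii) from the proof of Lemma~\ref{lem:extension} is an explicit procedure that makes sense for any $w$, valid or not. Validity is the content of Claim~\ref{claim:ws-is-rs-valid}, which the paper proves separately. Drop it from the invariant and your proof of this claim is complete.

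For the record, the V4 step does close. It goes through your first suggestion (comparing with the earlier step), not through the current step's first branch. Note that $k'\neq k$ automatically, since $(k,i)\notin\dom(r_{s-1})$.
\begin{itemize}
\item[] Let $s'<s$ be the step that treated $\tau_{k',i}$ and set $r(k',i)>0$. Suppose $F_{k'}^{v1\cup A}(y)=0^n$ with $p_{k'}(|y|)<|c_L(i,n)|$.
\item[] This computation only sees words below $c_L(i,n)$. So $F_{k'}^{v0\cup A}(y)=0^n\notin L_i^{v0\cup A}$, and this is definite.
\item[] Moreover, $v0$ is $r_{s-1}$-valid by Lemma~\ref{lem:extension}. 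Hence it is valid for $r_{s'-1}\cup\{(k',i)\mapsto 0\}$; V5 survives because $(k',i)$ is already in $\dom(r_{s-1})$.
\item[] Also $v0\sqsupsetneq w_{s'-1}$, so the first branch would have fired at step $s'$, contradicting $r(k',i)>0$.
\end{itemize}
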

\begin{claimproof}
    For all $s \in \N$, we show that $w_{s+1} \sqsupsetneq w_{s}$, $r_{s+1} \sqsupsetneq r_{s}$, and $\dom(r) = \N^+ \times \N^+$, from which the claim follows.
    
    Let $s \in \N$ be arbitrary and $\tau_{k,i}$ be the task treated at step $s+1$. When treating any task at step $s+1$, $w_s$ gets extended by at least one bit and $r_s$ gets defined for $(k,i)$. The enumeration $T$ does not have duplicate tasks and when a task is treated, it gets removed from $T$ in the same step. Hence, $r_s$ can not already be defined for $(k,i)$. Furthermore, every task in $T$ gets treated once, thus $\dom(r) = \N^+ \times \N^+$.
\end{claimproof}

\begin{claim}\label{claim:ws-is-rs-valid}
    For every $s \in \N$ it holds that $w_s$ is $r_s$-valid.
\end{claim}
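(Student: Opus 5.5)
We argue by induction on $s$. For $s=0$, $w_0=\varepsilon$ and $r_0$ is nowhere defined. Then V\ref{V1}, V\ref{V2} hold vacuously since $|w_0|=0$, and V\ref{V3}, V\ref{V4} have no premises. V\ref{V5} holds because the right-hand side is $0$.

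For the inductive step, assume $w_{s-1}$ is $r_{s-1}$-valid and let $\tau_{k,i}$ be the task treated in step $s$.

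\emph{First case:} some $r^0$-valid $v\sqsupsetneq w_{s-1}$ exists. Then $w_s=v$ and $r_s=r^0$, so $w_s$ is $r_s$-valid by definition, and nothing remains to show.

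\emph{Second case:} no such $v$ exists. First we extend $w_{s-1}$ bit by bit up to length $c_L(i,n)$, applying Lemma~\ref{lem:extension} repeatedly with $r_{s-1}$. This gives an $r_{s-1}$-valid $v$ with $|v|=c_L(i,n)$. It remains to check that $w_s=v1$ is $r_s$-valid, where $r_s=r_{s-1}\cup\{(k,i)\mapsto m\}$ and $m=||w_s||+1$.
\begin{itemize}
\item V\ref{V1}($v1$) holds because the added word $c_L(i,n)$ is a code word.
\item V\ref{V2}($v1$): the newly decided word is not in $\ran(c_K)$, by disjointness (Observation~\ref{obs:cK-cA-properties}). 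For earlier $c_K(k',n')<|v|$, the relevant computations do not depend on words of length $\geq|c_K(k',n')|$ (Observation~\ref{obs:cK-definite}). They are therefore unchanged, and V\ref{V2} carries over.
\item V\ref{V3}($v1,r_s$): the entries with value $0$ all come from $r_{s-1}$. Their witnessing computations are definite relative to $v$, so they remain valid in $v1$.
\item V\ref{V5}($v1,r_s$): the only count that rises is that for index $i$, and it rises by one. $L_i$ gains the new element $0^n$, and $0^n\notin L_i^{v\cup A}$ because $c_L(i,n)$ was undefined in $w_{s-1}$ and $v$ put $0$ there (case (ii) of Lemma~\ref{lem:extension}). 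So V\ref{V5} is preserved.
\end{itemize}

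\textbf{V\ref{V4} for old requirements.} Take $r_{s-1}(k',i')=m'>0$. For elements already in $L_{i'}^{v\cup A}$, the argument is exactly the one in the V\ref{V4} part of Lemma~\ref{lem:extension}: the relevant short computations are definite and do not change. The only new element is $0^n$, which lies in $L_i$, so it matters only for requirements with $i'=i$. For those we must show $F_{k'}^{v1\cup A}$ has no proof $y$ of $0^n$ with $p_{k'}(|y|)<|c_L(i,n)|$. Computations with $p_{k'}(|y|)<|c_L(i,n)|$ only query words of length $<|c_L(i,n)|$, that is, words numerically below $c_L(i,n)$, on which $v1$ and $v$ agree. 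Hence such a $y$ would give $F_{k'}^{v\cup A}(y)=0^n$.

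This is the delicate point, and we would close it using the choice of $m'$. When $(k',i)$ was handled in the second case at an earlier step $s'$, we set $m'=||w_{s'}||+1$. Since $c_L(i,n)$ is undefined in $w_{s-1}\sqsupseteq w_{s'}$, we have $|c_L(i,n)|>||w_{s'}||$, so $n$ exceeds $m'$ and the requirement genuinely applies. We would then argue: if $F_{k'}$ could output $0^n$ with such a short proof, then the failure in step $s'$ of the first case for $(k',i)$ is contradicted. Indeed, we could extend $w_{s'}$ while keeping $c_L(i,n)$ out of the oracle (validity is maintainable via Lemma~\ref{lem:extension}). Then $F_{k'}$ definitely outputs the nonmember $0^n$, which makes the extension $r^0$-valid for $(k',i)$. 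The point is that the output of a short proof is fixed already by words below $c_L(i,n)$, so this yields a definite non-member witness for V\ref{V3}. The care needed is that the oracle prefix at step $s'$ differs from $v$; I expect to resolve this by noting that $v$ itself, truncated just before $c_L(i,n)$, is an $r_{s'}$-valid extension of $w_{s'}$, which is a legitimate candidate for the first case at step $s'$.

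\textbf{V\ref{V4} for the new requirement $(k,i)\mapsto m$.} Since $m>||w_s||$, every $x\in L_i^{v1\cup A}$ has $|x|<m$, so the condition is vacuous for the current oracle.

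The main obstacle is V\ref{V4} for older requirements with $i'=i$ when $0^n$ is added. Making rigorous the contradiction with the failure of the first case at the earlier step (ensuring the witnessing oracle is valid for the old $r_{s'}$ and extends $w_{s'}$) is where most of the work lies.
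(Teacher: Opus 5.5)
Your decomposition matches the paper's proof. The first case holds by construction. In the second case Lemma~\ref{lem:extension} gives an $r_{s-1}$-valid $v$ with $|v|=c_L(i,n)$, and V\ref{V1}, V\ref{V2}, V\ref{V3}, V\ref{V5} and V\ref{V4} for the new entry $(k,i)$ are routine. The only real issue is V\ref{V4} for an older entry $(k',i)\mapsto m'>0$ at the new element $0^n$. The paper also refutes this by showing that the first case would have applied when $\tau_{k',i}$ was treated at the earlier step $s'$. However, you leave exactly this step as a plan, and as written it does not go through.

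\emph{Where the plan fails.} An arbitrary extension of $w_{s'}$ that keeps $c_L(i,n)$ out need not work. The short proof $y$ is only known to satisfy $F_{k'}^{v\cup A}(y)=0^n$ relative to the prefix $v$. And $v$ itself is not a legitimate candidate. Since $|v|=c_L(i,n)$, the word $c_L(i,n)$ is still undecided in $v$. Your V\ref{V5} remark that $v$ `put $0$ there' has the same slip; there $0^n\notin L_i^{v\cup A}$ holds only because undefined words are answered negatively. So $0^n\notin L_i^{v\cup A}$ is not definite, and $v$ provides no V\ref{V3} witness for $(k',i)\mapsto 0$.

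\emph{The fix.} Use $v0$; this is exactly the paper's argument.
\begin{itemize}
\item Since $|v|\in\ran(c_L)$ is not in $\ran(c_K)$, case (ii) of Lemma~\ref{lem:extension} shows that $v0$ is $r_{s-1}$-valid. Hence it is $r_{s'}$-valid, as $r_{s'}\sqsubseteq r_{s-1}$.
\item Let $r^0=r_{s'-1}\cup\{(k',i)\mapsto 0\}$. Because $\dom(r_{s'})=\dom(r^0)$, $r_{s'}$-validity yields every part of $r^0$-validity except V\ref{V3} at $(k',i)$.
\item That part comes from $y$. As $p_{k'}(|y|)<|c_L(i,n)|$, the computation $F_{k'}^{v0\cup A}(y)=0^n$ is definite, and $c_L(i,n)\notin v0$ is now decided. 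If you read the definition of definiteness for $L_i$ literally, extend $v0$ further by Lemma~\ref{lem:extension}; neither fact changes.
\item Since $v0\sqsupseteq w_{s-1}\sqsupseteq w_{s'}\sqsupsetneq w_{s'-1}$, the first case would have been taken at step $s'$. This contradicts $r_{s-1}(k',i)>0$.
\end{itemize}

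Finally, drop the inference that $n>m'$. The fact $|c_L(i,n)|>||w_{s'}||$ does not give it, since $|c_L(i,n)|$ is far larger than $n$. Nor is it needed: if $n\le m'$, V\ref{V4} imposes nothing on $0^n$, and the contradiction above never uses the value of $m'$.
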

\begin{claimproof}
    Observe that V\ref{V1}($w_0$) and V\ref{V2}($w_0$) are satisfied and that V\ref{V3}($w_0,r_0$), V\ref{V4}($w_0,r_0$), and V\ref{V5}($w_0,r_0$) do not require anything.

    Let $s \in \N^+$ be arbitrary. By induction hypothesis, $w_{s-1}$ is $r_{s-1}$-valid. Let $\tau_{k,i}$ be the task treated at step $s$. 

    Either, $w_s$ was chosen to be $r_s$-valid. Otherwise, $w_{s-1}$ got extended to $v$ via Lemma \ref{lem:extension}, resulting in  $v$ being $r_{s-1}$-valid. 
    \begin{claim}
        $w_s$ is $r_{s-1}$-valid.
    \end{claim}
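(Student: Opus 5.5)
We are in the second case of task $\tau_{k,i}$: no $r^0$-valid proper extension of $w_{s-1}$ exists, $v \sqsupseteq w_{s-1}$ is $r_{s-1}$-valid with $|v| = c_L(i,n)$, and $w_s = v1$. We must show that adding the single code word $z \coloneqq c_L(i,n)$ keeps V\ref{V1}--V\ref{V5} true for $r_{s-1}$. The plan is to go through the conditions one by one, comparing with the proof of Lemma~\ref{lem:extension}. The only differences between $v0$ and $v1$ are that $z$ becomes an oracle member and that $0^n$ enters $L_i$.

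\emph{V\ref{V1}, V\ref{V3}, V\ref{V5}.} V\ref{V1}($v1$) holds because $z \in \ran(c_L)$. V\ref{V3}($v1,r_{s-1}$) holds because the witnessing computations were definite already relative to $v$, so they are unchanged. V\ref{V5} is monotone under adding elements to $L_i$, so it also holds.

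\emph{V\ref{V2}.} Every $c_K(k',n') < |v1|$ satisfies $c_K(k',n') < z$, since $\ran(c_K) \cap \ran(c_L) = \emptyset$. By Observation~\ref{obs:cK-definite}, the computations $S_\kappa(F_{k'}(x))$ with $|x| = n'$ depend only on words shorter than $|c_K(k',n')| \le |z|$. Adding $z$ therefore does not change them, and V\ref{V2}($v$) transfers to $v1$.

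\emph{V\ref{V4}.} This is the main obstacle. Fix $(k',i')$ with $r_{s-1}(k',i') = m > 0$. For elements $x \in L_{i'}^{v \cup A}$ we argue as in Lemma~\ref{lem:extension}. Such an $x$ has $c_L(i',|x|) < z$, so every $F_{k'}$-computation on inputs $y$ with $p_{k'}(|y|) < |c_L(i',|x|)|$ queries only words shorter than $z$. These computations are therefore unchanged by adding $z$, and no new short proof of $x$ can appear.

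The new element $x = 0^n \in L_i^{v1 \cup A}$ needs a separate argument, and only when $i' = i$ and $n > m$. We would use that $m = r_{s-1}(k',i) = \|w_{s'}\| + 1$ for the step $s' < s$ at which that task was treated. Since $|v| > \|w_{s'}\|$, one hopes that $z = c_L(i,n)$ is long compared to $m$, so that $n$ is large.

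We then need to show that no $y$ with $p_{k'}(|y|) < |z|$ has $F_{k'}^{v1 \cup A}(y) = 0^n$. Such computations never query $z$, because their queries have length at most $p_{k'}(|y|) < |z|$. Hence $F_{k'}^{v1 \cup A}(y) = F_{k'}^{v \cup A}(y)$.

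If some such $y$ produced $0^n$ already relative to $v$, we would like to derive a contradiction from the failure of the $r^0$ branch at step $s'$. The idea is that $v0\cdots$ could then have witnessed that $F_{k'}$ outputs a non-member of $L_i$. This part is delicate, and I expect the actual proof to require a careful argument here. If that route fails, the fallback is to choose $n$ so that $|c_L(i,n)|$ exceeds $p_{k'}$ of every relevant $m$. The lexicographic-smallest choice combined with $m = \|w_{s'}\|+1$ should give this.
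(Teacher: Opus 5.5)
Your treatment of V1, V2, V3, V5, and of the elements already in $L_{i'}^{v\cup A}$ for V4, is correct and matches the paper. The gap is the step you flag as delicate, which is the only nontrivial part of the claim. For the new element $0^n$ (with $r_{s-1}(k',i)=m>0$ and $n>m$) you name the right idea but do not carry it out.

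Let $s'<s$ be the step that treated $\tau_{k',i}$, and let $r^0=r_{s'-1}\cup\{(k',i)\mapsto 0\}$. Suppose there is a short proof $y$, i.e.\ $p_{k'}(|y|)<|c_L(i,n)|$ and $F_{k'}^{v1\cup A}(y)=0^n$. You must show that $v0$ is then an $r^0$-valid proper extension of $w_{s'-1}$. This requires the following checks.
\begin{itemize}
\item $v0$ is $r_{s-1}$-valid by Lemma~\ref{lem:extension}: its procedure picks $b=0$ because $|v|\in\ran(c_L)$. Hence $v0$ is also $r_{s'-1}$-valid, since $r_{s'-1}\sqsubseteq r_{s-1}$.
\item $v0\sqsupseteq w_{s-1}\sqsupseteq w_{s'}\sqsupsetneq w_{s'-1}$, so $v0$ is a proper extension of $w_{s'-1}$.
\item V3 holds for the new pair. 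All queries of $F_{k'}(y)$ are shorter than $c_L(i,n)$, so $F_{k'}^{v0\cup A}(y)=F_{k'}^{v1\cup A}(y)=0^n$ is definite. Meanwhile $c_L(i,n)\notin v0$, so $0^n\notin L_i^{v0\cup A}$.
\item V4 for $r^0$ is just V4 for $r_{s'-1}$.
\item V5 for $r^0$ asks for one more element of $L_i$ than $r_{s'-1}$ does. This is supplied by the $r_{s-1}$-validity of $v0$, because $(k',i)\in\dom(r_{s-1})\setminus\dom(r_{s'-1})$.
\end{itemize}
With these checks, the construction would have taken the first branch at step $s'$ and set $r(k',i)=0$, contradicting $m>0$.

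Your fallback should be discarded. First, $n$ is not yours to choose: it is forced as the smallest code word $c_L(i,\cdot)$ on which $w_{s-1}$ is undefined. Second, a larger $|c_L(i,n)|$ only makes V4 harder to satisfy, since V4 demands $p_{k'}(|y|)\ge|c_L(i,n)|$ for every $F_{k'}$-proof $y$ of $0^n$. No length consideration rules out such short proofs. For the same reason, hoping that $n$ is large is beside the point: $n>m$ is exactly the case where V4 applies. The value $m=||w_{s'}||+1$ plays no role in this claim; only the fact that the $r^0$-branch failed at step $s'$ does.
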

    \begin{claimproof}
    Since $|v| \in \ran(c_L)$, V\ref{V1}($w_s$) is satisfied. Using the same argument as in Lemma \ref{lem:extension} and that $\ran(c_L)$ is disjoint from $\ran(c_K)$, V\ref{V2}($w_s$) is also satisfied. As argued in Lemma \ref{lem:extension}, V\ref{V3}($w_s,r_{s-1}$) and V\ref{V5}($w_s,r_{s-1}$) remain satisfied, because the oracle is only getting extended.

    For V\ref{V4}($w_s,r_{s-1}$), we use a proof by contradiction. Suppose that V\ref{V4}($w_s,r_{s-1}$) is violated. Since V\ref{V4}($v,r_{s-1}$) is not violated, no $x \in L_{j}^{w_s \cup A}$ with $c_L(j,|x|) < |v|$ is a witness for the violation, since otherwise the respective shortest proof for $x$ would be long enough to depend on words of length $||v|| \geq |c_L(j,|x|)|$. This contradicts that this proof can not depend on $c_L(j,|x|)$ when violating V\ref{V4}($w_s,r_{s-1}$).
    
    Hence, only $0^n$ with $c_L(i,n) \coloneqq |v|$ can be such a witness. More precisely, there is some $\ell$ such that $r_{s-1}(\ell,i) = m > 0$, $n > m$, and $p_\ell(|0^n_{F_\ell^{w_s \cup A}}|) < |c_L(i,n)|$. We show that in step $s'<s$ in the oracle construction, when treating the task $\tau_{\ell,i}$, the construction would have extended $w_{s'-1}$ to $v0$ and $r_{s'-1}$ to $r^0 = r_{s'-1} \cup \{(\ell,i) \mapsto 0\}$, contradicting $r_{s-1}(\ell,i) > 0$. 
    
    By Lemma \ref{lem:extension}, $v0$ is $r_{s-1}$-valid and thus also $r_{s'-1}$-valid. Also, $c_L(i,n) \notin v0$, so $0^n \notin L_i^{v0 \cup A}$. By
    \[p_\ell(|0^n_{F_\ell^{w_s \cup A}}|) < |c_L(i,n)| = ||v|| \leq ||v0|| = ||w_s||,\]
    it holds that
    \[0^n_{F_\ell^{w_s \cup A}} = 0^n_{F_\ell^{v0 \cup A}}.\] 
    So there exists some $x$ (namely $0^n_{F_\ell^{v0 \cup A}}$) such that $F_\ell^{v0 \cup A}(x) = 0^n \notin L_i^{v0 \cup A}$ and this is definite. Consequently, V\ref{V3}($v0,r^0)$ is satisfied for $(\ell,i)$. Hence, $v0$ is $r^0$-valid. In step $s'$, the oracle construction would have chosen $v0 \sqsupsetneq w_{s'-1}$ and $r^0 = r_{s'-1} \cup \{(\ell,i) \mapsto 0\}$ instead of $w_{s'}$ and $r_{s'}$ with $r_{s'}(\ell,i) > 0$, a contradiction to $r_{s-1} \sqsupseteq r_{s'}$. This shows that also V\ref{V4}($w_s,r_{s-1}$) is satisfied.
    \end{claimproof}
    It remains to show that $w_s$ is even $r_s$-valid. Since $r_s(k,i) > ||w_s||$, we only have to argue for V\ref{V4}($w_s,r_s$) and V\ref{V5}($w_s,r_s$). Since $L_i^{w_s \cup A}$ contains no word of length $>||w_s||$, going from V\ref{V4}($w_s,r_{s-1}$) to V\ref{V4}($w_s,r_s$) does not make a difference. Since $w_s$ is chosen as $v1$ with $c_L(i,n) = |v|$, it holds that 
    \begin{align*}
        \card{L_i^{w_s \cup A}} &\geq \card{L_i^{w_{s-1} \cup A}}+1 \geq \card{\{k \in \N^+ \mid (k,i) \in \dom(r_{s-1})\}}+1\\
        &= \card{\{k \in \N^+ \mid (k,i) \in \dom(r_{s})\}}.
    \end{align*}
    Hence, V\ref{V5}($w_s,r_s$) is satisfied and thus $w_s$ is $r_s$-valid.   
\end{claimproof}

\begin{claim}\label{claim:O-r-valid}
    $O$ is $r$-valid.
\end{claim}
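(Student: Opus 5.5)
We prove Claim~\ref{claim:O-r-valid} by a standard limit argument: every requirement V\ref{V1}--V\ref{V5} for $(O,r)$ only concerns finitely much of $O$ (or is monotone), so it can be read off from a sufficiently late $r_s$-valid stage $w_s$. By Claim~\ref{claim:O-r-well-defined}, the sequences $w_s$ and $r_s$ are strictly increasing and exhaust $O$ and $r$. By Claim~\ref{claim:ws-is-rs-valid}, each $w_s$ is $r_s$-valid. We will repeatedly use that $O \sqsupseteq w_s$ and that $|w_s|\to\infty$. Hence every definite computation relative to $w_s \cup A$ gives the same result relative to $O \cup A$, since undefined words in $w_s$ are answered negatively and the definite computations never ask them.

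V\ref{V1}($O$) holds because each word of $O$ lies in some $w_s \subseteq \ran(c_K)\cup\ran(c_L)$.

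For V\ref{V2}($O$), fix $c_K(k,n)$ and choose $s$ with $|w_s| > c_K(k,n)$. By Observation~\ref{obs:cK-definite}, $S_\kappa^{w_s\cup A}(F_k^{w_s\cup A}(x))$ for $x\in\Sigma^n$ depends only on words shorter than $|c_K(k,n)|$, and $w_s$ is defined on all of these. So this computation agrees with the one relative to $O\cup A$. Also $c_K(k,n)\in O \iff c_K(k,n)\in w_s$. V\ref{V2}($w_s$) therefore transfers to $O$.

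For V\ref{V3}($O,r$), if $r(k,i)=0$ then $r_s(k,i)=0$ for some $s$. V\ref{V3}($w_s,r_s$) then gives a definite computation witnessing $F_k^{w_s\cup A}(x)\notin L_i^{w_s\cup A}$, and definiteness makes the same hold relative to $O\cup A$.

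For V\ref{V5}($O,r$), note that $\{k \mid (k,i)\in\dom(r)\}$ is infinite because $\dom(r)=\N^+\times\N^+$. For any finite $N$, some $r_s$ already contains $N$ pairs $(k,i)$. Then V\ref{V5}($w_s,r_s$) gives $\card{L_i^{w_s\cup A}}\ge N$, and membership is monotone in the oracle since $w_s\subseteq O$. So $L_i^{O\cup A}$ is infinite.

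The main obstacle is V\ref{V4}($O,r$), because $x_{F_k}$ is a shortest proof and could in principle change as the oracle grows. Fix $r(k,i)=m>0$ and $x\in L_i^{O\cup A}$ with $|x|>m$. Choose $s$ so large that $r_s(k,i)=m$ and $w_s$ is defined on all words of length $\le |c_L(i,|x|)|$. Then $c_L(i,|x|)\in w_s$, so $x\in L_i^{w_s\cup A}$.

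Now suppose $y$ is an $F_k$-proof of $x$ relative to $O\cup A$ with $p_k(|y|)<|c_L(i,|x|)|$. The computation $F_k^{O\cup A}(y)$ is definite on $w_s$, so $F_k^{w_s\cup A}(y)=x$. Hence $x_{F_k^{w_s\cup A}}$ exists and has length at most $|y|$. This contradicts V\ref{V4}($w_s,r_s$).

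Therefore every $O$-proof of $x$ satisfies $p_k(|y|)\ge |c_L(i,|x|)|$. In particular either $x_{F_k^{O\cup A}}$ is undefined or $p_k(|x_{F_k^{O\cup A}}|)\ge |c_L(i,|x|)|$, which establishes V\ref{V4}($O,r$) and completes the proof of the claim.
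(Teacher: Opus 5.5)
Your proof is correct and follows the paper's argument: each of V\ref{V1}--V\ref{V5} for $(O,r)$ is transferred from a sufficiently late $r_s$-valid stage $w_s$ (Claim~\ref{claim:ws-is-rs-valid}), using definiteness of the relevant computations and the monotonicity of $L_i$ in the oracle. Your handling of V\ref{V4} shows directly that any short $F_k$-proof relative to $O\cup A$ is already a proof relative to $w_s\cup A$. This is a slightly more careful version of the paper's step, which instead asserts that the two shortest proofs coincide.
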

\begin{claimproof}
    Intuitively, any violation of V\ref{V1} to V\ref{V5} for $O$ and $r$ leads to a contradiction to Claim~\ref{claim:ws-is-rs-valid}, i.e., that $w_s$ is $r_s$-valid for all $s \in \N$.

    \emph{To V\ref{V1}($O$)}: Suppose there is some $a \notin \ran(c_K) \cup \ran(c_L)$ such that $a \in O$, then $a \in w_s$ for some $s \in \N$. Consequently, V\ref{V1}($w_s$) would not be satisfied, a contradiction to Claim \ref{claim:ws-is-rs-valid}.

    \emph{To V\ref{V2}($O$)}: Suppose there is $k$ and $n$ such that $c_K(k,n) \notin O$ if and only if for all $x \in \Sigma^n$, $F_k^{O \cup A}(x) \in K^{O \cup A}$, then consider some oracle $w_s$ such that $||w_s|| > |c_K(k,n)|$. By Observation~\ref{obs:cK-definite}, $S_\kappa^{O \cup A}(F_k^{O \cup A}(x))$ only query words of length $<|c_K(k,n)|$. Thus, $c_K(k,n) \notin w_s$ if and only if for all $x \in \Sigma^n$, $F_k^{w_s \cup A}(x) \in K^{w_s \cup A}$, contradicting that $w_s$ is $r_s$-valid. 

    \emph{To V\ref{V3}($O,r$)}: If $r(k,i) = 0$, let $s$ be the step of the oracle construction that treated the task $\tau_{k,i}$. Since $w_s$ is $r_s$-valid, there is some $x$ such that $F_k^{w_s \cup A}(x) \notin L_i^{w_s \cup A}$ is definite. Hence, $F_k^{O \cup A}(x) \notin L_i^{O \cup A}$, showing that V\ref{V3}($O,r$) is satisfied.
    
    \emph{To V\ref{V4}($O,r$)}: Suppose that $r(k,i) = m > 0$ and that there is some $x \in L_i^{O \cup A}$ with $|x| > m$ such that $p_k(|x_{F_k^{O \cup A}}|) < |c_L(i,|x|)|$. Consider the first step $s$ where $r_s(k,i)=m$ and $||w_s|| > |c_L(i,|x|)|$. Then $x \in L_i^{w_s \cup A}$ is definite. Furthermore, $x_{F_k^{O \cup A}} = x_{F_k^{w_s \cup A}}$, because $O$ and $w_s$ behave the same for words of length $\leq p_k(|x_{F_k^{O \cup A}}|)$. Consequently, V\ref{V4}($w_s,r_s$) is not satisfied, a contradiction to Claim \ref{claim:ws-is-rs-valid}. 

    \emph{To V\ref{V5}($O,r$)}: Suppose that there is an $i$ such that $\card{L_i^{O \cup A}} < \card{\{k \in \N^+ \mid (k,i) \in \dom(r)\}}$. By Claim \ref{claim:O-r-well-defined}, $\dom(r) = \N^+ \times \N^+$, hence $\card{L_i^{O \cup A}} = c$ for some $c \in \N$. Let $s$ be the step after which all tasks $\tau_{1,i}, \tau_{2,i}, \dots , \tau_{c+1,i}$ were treated. By V\ref{V5}($w_s,r_s$), it holds that
    \[\card{L_i^{w_s \cup A}} \geq \card{\{k \in \N^+ \mid (k,i) \in \dom(r_s)\}} \geq c+1.\]
    Since $O \sqsupseteq w_s$, $\card{L_i^{O \cup A}} \geq \card{L_i^{w_s \cup A}} \geq c+1$, a contradiction.
\end{claimproof}
\subsection{Results of the Oracle Construction}
For this section, let $S$ be the following statement:
\begin{center}
    $S \coloneqq \text{``}$All sets in $\PSPACE$ have p-optimal proof systems and there are arbitrarily complex sets having p-optimal proof systems.''
\end{center}
\begin{theorem}\label{thm:oracle-according-to-alg-1}
    Let $A \subseteq 2\N$ be arbitrary. There is an oracle $O$ constructed according to \emph{constructTALLY($\neg S,A$)}.
\end{theorem}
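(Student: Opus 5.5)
We take $O$ from Definition~\ref{def:oracle-definition}, built with the given base oracle $A \subseteq 2\N$, and show it meets the two requirements of constructTALLY($\neg S,A$). First, $O \in \TALLY$. Second, $S$ holds relative to the combined oracle. The combined oracle in the task is written $A \oplus O$, while the construction works with $O \cup A$. By Observation~\ref{obs:cK-cA-properties} and V\ref{V1}, $O \subseteq 0^*$, and by the definition of $\mathrm{enc}$, $A \cap 0^* = \emptyset$. Both sets are therefore recoverable from $O \cup A$ in polynomial time, because membership in $\ran(c_K)\cup\ran(c_L)$ is polynomial-time decidable by Observations~\ref{obs:cK-cA-properties} and~\ref{obs:cA-polytime}. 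So $O \cup A$ and $A\oplus O$ are polynomial-time Turing equivalent, with query lengths changing only polynomially. Hence all relativized notions involved ($\PSPACE$, $\FP$, p-optimal proof systems, $\DTIME$ up to polynomial slack) coincide for the two oracles. We will either state this step explicitly or simply read constructTALLY with $O\cup A$ as the combined oracle, as Theorem~\ref{thm:eg25} already does.

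The main step is to invoke Claims~\ref{claim:O-r-well-defined}, \ref{claim:ws-is-rs-valid} and~\ref{claim:O-r-valid}. Together they give that $O$ is well-defined, that $\dom(r)=\N^+\times\N^+$, and that $O$ is $r$-valid. Lemma~\ref{lem:properties-of-valid-oracle} then applies directly. Part~(\ref{lem:properties-of-valid-oracle-i}) gives $O\in\TALLY$. Part~(\ref{lem:properties-of-valid-oracle-iii}) gives p-optimal proof systems for every $L_i^{O\cup A}$. Part~(\ref{lem:properties-of-valid-oracle-iv}) shows that these sets are arbitrarily complex. Each $L_i^{O\cup A}$ is also decidable relative to $O\cup A$, since membership is a single oracle query to a time-constructibly computable code word. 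Together these yield the second conjunct of $S$.

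For the first conjunct we use part~(\ref{lem:properties-of-valid-oracle-ii}), which gives a p-optimal proof system for the $\PSPACE^{O\cup A}$-complete set $K^{O\cup A}$. We then transfer it to every $L\in\PSPACE^{O\cup A}$, and this transfer is the one step needing care, so we expect it to be the main obstacle. For nonempty $L$ the plan is to repeat the argument of part~(\ref{lem:properties-of-valid-oracle-ii}) directly for $L$ rather than rely on a generic reduction-closure result. Fix $L = L(S_j^{O\cup A})$ and a fixed element $a \in L$. Every $g=F_k^{O\cup A}$ with $\ran(g)=L$ satisfies $F_k^{O\cup A}(x) \in L$ for all $x$. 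The corresponding code words are $c_K(k',n)$ for the machine $F_{k'}$ that computes $x\mapsto\langle 0^j,0^{p_j(|F_k(x)|)},F_k(x)\rangle$. For this $k'$ we have $F_{k'}(x)\in K^{O\cup A}$ for all $x$, so $c_K(k',n)\in O$ for every $n$ by V\ref{V2}.

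The proof system $h$ for $L$ is then defined as follows. On input $\langle x, c_K(k',|x|)\rangle$ with $c_K(k',|x|)\in O$, where $F_{k'}$ has the syntactic form above for some $k$, it outputs $F_k^{O\cup A}(x)$. On input $\langle x',1\rangle$ it outputs $f(x')$ for some fixed proof system $f$ of $L$. On all other inputs it outputs $a$. Soundness follows from V\ref{V2}, and the simulation $\pi(x)=\langle x,c_K(k',|x|)\rangle$ runs in polynomial time by Observation~\ref{obs:cK-polytime}, exactly as in Claims~\ref{claim:h-in-FP}--\ref{claim:h-psimulates-all-ps}. The point to verify is that the syntactic form of $F_{k'}$ can be recognized from $k'$ and that $h$ remains polynomial-time, which holds because the input has length at least $p_{k'}(|x|)$. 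The empty set is handled trivially or excluded by the usual convention. Combining both conjuncts, $S$ holds relative to $O\cup A$ and $O\in\TALLY$, so $O$ meets the requirements of constructTALLY($\neg S,A$).
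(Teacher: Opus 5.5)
Your proposal is correct and follows the paper's proof. The paper takes $O$ from Definition~\ref{def:oracle-definition}, applies Claims~\ref{claim:O-r-well-defined} and~\ref{claim:O-r-valid} together with Lemma~\ref{lem:properties-of-valid-oracle}, and notes that $A\subseteq 2\N$ and $O\subseteq 2\N+1$ let $\cup$ serve as the marked union. Your direct transfer from $K^{O\cup A}$ to every $L\in\PSPACE^{O\cup A}$ only spells out a step the paper leaves implicit. That step also follows from the standard fact that p-optimal proof systems pass down along $\leq_m^{\mathrm{p}}$-reductions to nonempty sets, which would spare you from recognizing the syntactic form of $F_{k'}$.

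One side remark is wrong but harmless. $\ran(c_L)$ need not be polynomial-time decidable, because the enumeration $t_1,t_2,\dots$ is not assumed computable; this is why Observation~\ref{obs:cK-cA-properties} only asserts $\ran(c_K)\in\P$. You do not need that claim: separating $O$ from $A$ inside $O\cup A$ only requires the test $x\in 0^+$, since $O\subseteq 0^+$ and $A\cap 0^+=\emptyset$.
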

\begin{proof}
    Consider the oracle $O$ from Definition \ref{def:oracle-definition}. By Claims \ref{claim:O-r-well-defined} and \ref{claim:O-r-valid} and Lemma \ref{lem:properties-of-valid-oracle}, $O \in \TALLY$ and the statement $S$ holds relative to $O \cup A$, i.e., $\neg S$ fails relative to $O \cup A$. Note that $A \subseteq 2\N$ and $O \subseteq 2\N+1$, so in this case $\cup$ can be used as a marked union $\oplus$.
\end{proof}
\begin{corollary}
    There is an oracle $O \in \TALLY$ relative to which all sets in $\PSPACE^O$ have p-optimal proof systems and there are arbitrarily complex sets having p-optimal proof systems.
\end{corollary}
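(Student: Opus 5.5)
The plan is to specialize Theorem~\ref{thm:oracle-according-to-alg-1} to the empty base oracle. I take $A \coloneqq \emptyset$, which trivially satisfies $A \subseteq 2\N$. Let $O$ be the oracle from Definition~\ref{def:oracle-definition} constructed with this $A$.

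By Claims~\ref{claim:O-r-well-defined} and \ref{claim:O-r-valid}, $O$ is $r$-valid for a requirement function $r$ with $\dom(r) = \N^+ \times \N^+$. Hence Lemma~\ref{lem:properties-of-valid-oracle} applies and yields the following, using $O \cup A = O$:
\begin{itemize}
\item by item (\ref{lem:properties-of-valid-oracle-i}), $O \in \TALLY$;
\item by item (\ref{lem:properties-of-valid-oracle-ii}), $K^{O}$ has p-optimal proof systems;
\item by items (\ref{lem:properties-of-valid-oracle-iii}) and (\ref{lem:properties-of-valid-oracle-iv}), every $L_i^{O}$ has p-optimal proof systems, and for every time-constructible $t$ some $L_i^{O}$ lies outside $\DTIME^{O}(t)$.
\end{itemize}
Each $L_i^O$ is decidable relative to $O$, since membership of $0^n$ amounts to a single oracle query. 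So the family $\{L_i^O\}$ witnesses that there are arbitrarily complex sets with p-optimal proof systems.

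The one step not stated verbatim earlier is passing from ``$K^O$ has p-optimal proof systems'' to ``every set in $\PSPACE^O$ has p-optimal proof systems''. I would prove this directly rather than cite the general fact that p-optimal proof systems are inherited downward under many-one reductions, since that fact needs some care.

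Fix a nonempty $L \in \PSPACE^O$; the empty set is handled by convention or is trivial. By the Observation that $K^O$ is $\PSPACE^O$-complete, there is an injective, length-increasing reduction $\rho \in \FP^O$ with $x \in L \iff \rho(x) \in K^O$. Concretely, take $\rho(x) = \langle 0^i, 0^{p_i(|x|)}, x\rangle$, where $S_i$ decides $L$; then $\rho$ is polynomial-time invertible. Fix an element $a \in L$ and let $h$ be the p-optimal proof system for $K^O$. Define
\[
h_L(y) \coloneqq
\begin{cases}
\rho^{-1}(h(y)) & \text{if } h(y) \in \ran(\rho),\\
a & \text{otherwise.}
\end{cases}
\]
Because $\rho$ has this explicit form, $h_L \in \FP^O$. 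Its range is exactly $L$: every $\rho(x)$ with $x \in L$ lies in $K^O = \ran(h)$, so each such $x$ is output, and any preimage $\rho^{-1}(h(y))$ lies in $L$ because $h(y) \in K^O$.

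For p-optimality, let $g$ be any proof system for $L$. Then $g' \coloneqq \rho \circ g$, padded if necessary so that every output lies in $K^O$, is a proof system for $K^O$ with range $\rho(L)$ together with $K^O$. To keep its range exactly $K^O$, I take $g'(\langle 0,y\rangle) = \rho(g(y))$ and $g'(\langle 1,y\rangle) = h(y)$. Since $h$ p-simulates $g'$ via some $\pi$, the map $y \mapsto \pi(\langle 0,y\rangle)$ p-simulates $g$ by $h_L$.

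I expect this reduction step to be the only point needing attention, and it is routine. The rest is a direct citation of the lemmas already proven.
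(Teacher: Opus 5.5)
Your proposal is correct and matches the paper's proof, which instantiates Theorem~\ref{thm:oracle-according-to-alg-1} with $A=\emptyset$ and relies on Lemma~\ref{lem:properties-of-valid-oracle}. Your explicit argument transferring p-optimality from the $\PSPACE^O$-complete set $K^O$ to every nonempty set in $\PSPACE^O$, together with your remark that each $L_i^O$ is decidable relative to $O$, correctly fills in routine steps that the paper leaves implicit.
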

\begin{proof}
    Follows from Theorem \ref{thm:oracle-according-to-alg-1} for $A = \emptyset$.
\end{proof}
\begin{corollary}
   Let $S'$ be any statement of Theorems \ref{thm:bbs86}, \ref{thm:sparse-relativizing-results}, and \ref{thm:tally-relativizing-results}. Then there is an oracle relative to which both statements $S$ and $S'$ hold.
\end{corollary}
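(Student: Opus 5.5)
The plan is to combine Theorem~\ref{thm:oracle-according-to-alg-1} with the $\SPARSE$- and $\TALLY$-relativization framework, i.e.\ Theorem~\ref{thm:sparse-relativization} and its $\TALLY$ analogue. Fix a statement $S'$ from Theorems~\ref{thm:bbs86}, \ref{thm:sparse-relativizing-results}, or \ref{thm:tally-relativizing-results}.

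\textbf{Step 1: choose a base oracle $A$ relative to which $S'$ holds.}
\begin{itemize}
\item If $S'$ is ``$\PH$ infinite'' or statement~\ref{thm:sparse-relativizing-results-2} or \ref{thm:sparse-relativizing-results-4} of Theorem~\ref{thm:sparse-relativizing-results}, take $A_{\text{Yao}}$.
\item If $S'$ is ``$\PH$ collapses'' or any other listed statement, take $A_{\text{BGS}}$.
\end{itemize}
This uses the remarks in Section~\ref{sec:sparse-relativization} that these oracles satisfy the respective properties.

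\textbf{Step 2: shift the base oracle into the even numbers.} Theorem~\ref{thm:oracle-according-to-alg-1} requires $A\subseteq 2\N$. I pass from the chosen oracle $A'$ to an isomorphic copy $A=\{2x : x\in A'\}$. Membership in $A'$ and in $A$ are polynomial-time interreducible by one query each. Hence relative to $A$ all relevant classes coincide with those relative to $A'$, and $S'$ still holds relative to $A$.

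\textbf{Step 3: build the tally oracle.} Apply Theorem~\ref{thm:oracle-according-to-alg-1} to $A$. This gives a tally oracle $O\subseteq 2\N+1$ such that $S$ holds relative to $O\cup A$, which acts as the marked union $A\oplus O$.

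\textbf{Step 4: transfer $S'$ to $O\cup A$.} Each listed statement is sparse-oracle-independent (resp.\ tally-oracle-independent) relative to all oracles. The paper notes that the proofs of Theorems~\ref{thm:bbs86}, \ref{thm:sparse-relativizing-results}, and \ref{thm:tally-relativizing-results} relativize, as in Theorem~\ref{thm:ph-trick}. Since $O\in\TALLY\subseteq\SPARSE$, Theorem~\ref{thm:sparse-relativization}, or its $\TALLY$ version, yields that $S'$ holds relative to $A\oplus O$. Statement $S$ holds there as well by Step~3, which proves the corollary.

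\textbf{Main obstacle.} The only delicate point is the gap between ``$S'$ is true iff true relative to all sparse oracles'' and what is needed here, namely ``$S'$ relative to $A$ implies $S'$ relative to $A\oplus O$.'' This is precisely the relativized form the paper invokes, via Theorem~\ref{thm:ph-trick} and the Bovet--Crescenzi--Silvestri remark. I would cite it rather than reprove it. I would also check that the marked-union encoding $\cup$ with $A\subseteq 2\N$, $O\subseteq 2\N+1$ matches the $\oplus$ used in those theorems up to polynomial-time equivalence.
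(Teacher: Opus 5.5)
Your proposal is correct and follows the paper's proof: pick $A_{\text{Yao}}$ or $A_{\text{BGS}}$ according to $S'$, move it into $2\N$ without loss of generality, apply Theorem~\ref{thm:oracle-according-to-alg-1} to obtain the tally oracle, and transfer $S'$ via Theorem~\ref{thm:sparse-relativization} or its $\TALLY$ version. Your extra details (choosing the base oracle statement by statement, and the polynomial-time equivalence justifying the shift into even numbers) only spell out what the paper states in one line as ``without loss of generality.''
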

\begin{proof}
    All statements of Theorems \ref{thm:bbs86}, \ref{thm:sparse-relativizing-results}, and \ref{thm:tally-relativizing-results} are true either relative to $A_{\text{Yao}}$ where the polynomial-time hierarchy is infinite or $A_{\text{BGS}}$ where $\P = \PSPACE$. Without loss of generality, $A_{\text{Yao}}, A_{\text{BGS}} \subseteq 2\N$. Then by Theorem \ref{thm:sparse-relativization}, by a $\TALLY$-version of Theorem \ref{thm:sparse-relativization}, and by Theorem \ref{thm:oracle-according-to-alg-1}, there is an oracle relative to which $S'$ and $S$ hold.
\end{proof}
\begin{corollary}\label{cor:O2}
    There is an oracle $O_2$ relative to which $\PH^{O_2}$ is infinite, all sets in $\PSPACE^{O_2}$ have p-optimal proof systems and there are arbitrarily complex sets having p-optimal proof systems.
\end{corollary}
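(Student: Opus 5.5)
The plan is to instantiate the $\SPARSE$-relativization framework (Theorem~\ref{thm:sparse-relativization}) with $S_1$ being ``$\PH$ is infinite'' and with the base oracle $A_{\text{Yao}}$ of Yao, relative to which $\PH$ is infinite. Since the marked union places $A$ and the tally oracle on disjoint parts of $\N$, we may assume without loss of generality that $A_{\text{Yao}} \subseteq 2\N$. This can be arranged by replacing $A_{\text{Yao}}$ with its image under $x \mapsto 2x$ (in the $\mathrm{enc}$-numbering), which is a polynomial-time isomorphism. Such a renaming does not change the infiniteness of the relativized polynomial-time hierarchy.

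Next I apply Theorem~\ref{thm:oracle-according-to-alg-1} with $A \coloneqq A_{\text{Yao}}$. This yields an oracle $O \in \TALLY \subseteq \SPARSE$ with $O \subseteq 2\N+1$ such that $S$ holds relative to $O \cup A_{\text{Yao}}$. Concretely, this means that all sets in $\PSPACE^{O \cup A_{\text{Yao}}}$ have p-optimal proof systems and that there are arbitrarily complex sets with p-optimal proof systems. I set $O_2 \coloneqq O \cup A_{\text{Yao}}$. Because the two parts occupy the even and odd positions respectively, this union is the marked union up to a trivial polynomial-time recoding, as already noted in the proof of Theorem~\ref{thm:oracle-according-to-alg-1}.

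It remains to show that $\PH^{O_2}$ is infinite. By Theorem~\ref{thm:ph-trick}, applied with $A_{\text{Yao}}$ and the sparse set $O$, $\PH^{A_{\text{Yao}}}$ collapses if and only if $\PH^{A_{\text{Yao}} \oplus O}$ collapses. Since $\PH^{A_{\text{Yao}}}$ is infinite, $\PH^{O_2}$ is infinite as well. This is exactly Theorem~\ref{thm:sparse-relativization} with $S_1 =$ ``$\PH$ is infinite'' and $S_2 = \neg S$.

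The only point requiring care is the identification of $O \cup A_{\text{Yao}}$ with $A_{\text{Yao}} \oplus O$. Relativized classes are invariant under polynomial-time computable and invertible recodings of the oracle: each query can be translated in polynomial time, and the polynomial bounds on query length are preserved. Hence both the infiniteness of the hierarchy and statement $S$ transfer between the two presentations.
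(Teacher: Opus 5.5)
Your proposal is correct and follows the paper's route. The paper obtains this corollary from the preceding one by taking $S'$ to be ``$\PH$ is infinite'' (Theorem~\ref{thm:bbs86}), assuming without loss of generality $A_{\text{Yao}} \subseteq 2\N$, taking the tally oracle from Theorem~\ref{thm:oracle-according-to-alg-1}, and invoking Theorem~\ref{thm:sparse-relativization} (via Theorem~\ref{thm:ph-trick}) to keep $\PH$ infinite; your explicit recoding $x \mapsto 2x$ and your identification of $O \cup A_{\text{Yao}}$ with a marked union simply spell out the ``without loss of generality'' step the paper leaves implicit.
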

\begin{corollary}
    An infinite $\PH$ does not relativizably imply a negative answer to neither Q1, Q2, nor Q3.
\end{corollary}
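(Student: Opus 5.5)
The plan is to derive this corollary directly from Corollary~\ref{cor:O2} together with the definitions of Q1, Q2, and Q3; no new construction is needed. A relativizable proof of ``$\PH$ infinite implies a negative answer to Q$j$'' would have to hold relative to every oracle. It therefore suffices to exhibit a single oracle relative to which $\PH$ is infinite and Q1, Q2, and Q3 all have positive answers. We take the oracle $O_2$ of Corollary~\ref{cor:O2}.

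The first step is to recall how $O_2$ arises and to record its properties. Yao's oracle $A_{\text{Yao}}$ is moved, without loss of generality, into $2\N$. We then apply Theorem~\ref{thm:oracle-according-to-alg-1} with $A = A_{\text{Yao}}$, which yields a tally oracle $O \subseteq 2\N+1$ such that $S$ holds relative to $O \cup A_{\text{Yao}}$. By Theorem~\ref{thm:ph-trick}, with $O$ sparse, the polynomial-time hierarchy remains infinite relative to $O_2 \coloneqq A_{\text{Yao}} \oplus O$. This step just invokes Theorem~\ref{thm:sparse-relativization}.

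The second step reads off the answers to the questions. $\TAUT^{O_2}$ and $\QBF^{O_2}$ lie in $\PSPACE^{O_2}$, so both have p-optimal proof systems relative to $O_2$. A p-optimal proof system is in particular optimal, since $g \psim f$ implies $g \le f$. Hence Q1 and Q2 are answered positively in both the optimal and the p-optimal version. For Q3, relative to $O_2$ there are arbitrarily complex sets with p-optimal proof systems, namely the sets $L_i^{O_2}$ of Lemma~\ref{lem:properties-of-valid-oracle}(\ref{lem:properties-of-valid-oracle-iii}) and (\ref{lem:properties-of-valid-oracle-iv}). These also have optimal proof systems, so Q3 is answered positively in both versions.

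The only point requiring care is the interpretation of $\TAUT$ and $\QBF$ relative to an oracle. We must check that their relativized versions (for example, formulas with oracle atoms, or any standard $\coNP^{O_2}$- and $\PSPACE^{O_2}$-complete sets) belong to $\PSPACE^{O_2}$. Under any standard relativization this membership is immediate, and it is all the argument needs. Combining the two steps, an infinite $\PH$ coexists relativizably with positive answers to Q1, Q2, and Q3, which proves the corollary.
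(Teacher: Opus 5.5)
Your proposal is correct and follows the same route as the paper, which gives no separate proof and treats this corollary as an immediate consequence of Corollary~\ref{cor:O2}: relative to $O_2$ (obtained from Theorem~\ref{thm:oracle-according-to-alg-1} applied to a copy of $A_{\text{Yao}}$ inside $2\N$, together with Theorem~\ref{thm:ph-trick}), $\PH$ is infinite, while $\TAUT$, $\QBF$ and the sets $L_i$ have p-optimal, hence optimal, proof systems. Identifying $O \cup A_{\text{Yao}}$ with $A_{\text{Yao}} \oplus O$ is harmless, because the two oracles are polynomial-time equivalent, and the paper makes the same identification in the proof of Theorem~\ref{thm:oracle-according-to-alg-1}.
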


\section{Conclusion and Open Questions}
The $\SPARSE$-relativization framework is a simple and useful tool to obtain improved proof barriers from sparse oracles. Using this framework, we are able to construct oracles that separate conjectures about complexity classes from conjectures about the existence of optimal and p-optimal proof systems.

A key component of the $\SPARSE$-relativization framework are the sparse-oracle-independent statements, as those are the statements we get ``for free'' when constructing suitable sparse oracles. Hence, we are interested in contributions to the following task.
\begin{align*}
&\text{\textbf{T1}: Find further statements that are tally-/sparse-oracle-independent}
\end{align*}
As explained in Section \ref{sec:sparse-relativization}, we obtain even stronger proof barriers by restricting the $\SPARSE$-relativization framework to $\SPARSE \cap \EXPH$. Hence, we also encourage to investigate the following task.
\begin{spreadlines}{0ex}
\begin{align*}
    &\text{\textbf{T2}:} \text{ Construct $\EXPH$-computable oracles relative to which sparse-oracle-independent} 
    \\
    \vspace{-100em}&\phantom{\text{\textbf{T2}:}} \text{ and tally-oracle-independent statements are true}
\end{align*}
\end{spreadlines}
An $\EXPH$-computable oracle relative to which the polynomial-time hierarchy is infinite is of special interest.

\bibliography{Literatursammlung-ArXiv}

\end{document}

%% file: 9-Standardnotation.tex
\def\P{\ensuremath{\mathrm{P}}}

\def\NP{\ensuremath{\mathrm{NP}}}

\def\E{\ensuremath{\mathrm{E}}}

\def\EE{\ensuremath{\mathrm{EE}}}
\def\NE{\ensuremath{\mathrm{NE}}}

\def\NEE{\ensuremath{\mathrm{NEE}}}
\def\FP{\ensuremath{\mathrm{FP}}}
\def\UP{\ensuremath{\mathrm{UP}}}
\def\DisjNP{\ensuremath{\mathrm{DisjNP}}}

\def\coNP{\ensuremath{\mathrm{coNP}}}

\def\coNE{\ensuremath{\mathrm{coNE}}}
\def\coNEE{\ensuremath{\mathrm{coNEE}}}

\def\TFNP{\ensuremath{\mathrm{TFNP}}}
\def\TALLY{\ensuremath{\mathrm{TALLY}}}
\def\SPARSE{\ensuremath{\mathrm{SPARSE}}}

\def\TAUT{\ensuremath{\mathrm{TAUT}}}
\def\SAT{\ensuremath{\mathrm{SAT}}}
\def\QBF{\ensuremath{\mathrm{QBF}}}

\def\TFNP{\ensuremath{\mathrm{TFNP}}}
\def\BPP{\ensuremath{\mathrm{BPP}}}
\def\RP{\ensuremath{\mathrm{RP}}}
\def\ZPP{\ensuremath{\mathrm{ZPP}}}

\def\PSPACE{\ensuremath{\mathrm{PSPACE}}}
\def\PH{\ensuremath{\mathrm{PH}}}

\def\DTIME{\ensuremath{\mathrm{DTIME}}}
\def\IP{\ensuremath{\mathrm{IP}}}
\def\EXPH{\ensuremath{\mathrm{EXPH}}}
\def\PP{\ensuremath{\mathrm{PP}}}
\def\CeqP{\ensuremath{\mathrm{C}\mathrm{=}\mathrm{P}}}
\def\modP{\ensuremath{\oplus\mathrm{P}}}

\def\N{\ensuremath{\mathrm{\mathbb{N}}}}

\def\psim{\ensuremath{\leq^\mathrm{p}}}

\DeclareMathOperator{\dom}{dom}
\DeclareMathOperator{\ran}{ran}

\def\sqsubsetneq{\mathrel{\sqsubseteq\kern-0.92em\raise-0.15em\hbox{\rotatebox{313}{\scalebox{1.1}[0.75]{\(\shortmid\)}}}\scalebox{0.3}[1]{\ }}}
\def\sqsupsetneq{\mathrel{\sqsupseteq\kern-0.92em\raise-0.15em\hbox{\rotatebox{313}{\scalebox{1.1}[0.75]{\(\shortmid\)}}}\scalebox{0.3}[1]{\ }}}
\newcommand{\card}[1]{\##1}

\newcommand{\xTrueThenElse}[2]{\if\x1{#1}\else{#2}\fi}



%% file: Literatursammlung-ArXiv.bib
@book{ab09,
author = {Arora, Sanjeev and Barak, Boaz},
title = {Computational Complexity: A Modern Approach},
year = {2009},
isbn = {0521424267},
publisher = {Cambridge University Press},
address = {USA},
edition = {1st}
}

@article{bbs86,
author = {Balc\'{a}zar, Jose L. and Book, Ronald V. and Sch\"{o}ning, Uwe},
title = {The polynomial-time hierarchy and sparse oracles},
year = {1986},
issue_date = {July 1986},
publisher = {Association for Computing Machinery},
address = {New York, NY, USA},
volume = {33},
number = {3},
issn = {0004-5411},
url = {https://doi.org/10.1145/5925.5937},
doi = {10.1145/5925.5937},
journal = {J. ACM},
month = may,
pages = {603–617},
numpages = {15}
}

@article{bcs95,
 author = {Daniel P. Bovet and Pierluigi Crescenzi and Riccardo Silvestri},
 doi = {https://doi.org/10.1006/jcss.1995.1030},
 issn = {0022-0000},
 journal = {Journal of Computer and System Sciences},
 number = {3},
 pages = {382-390},
 title = {Complexity Classes and Sparse Oracles},
 url = {https://www.sciencedirect.com/science/article/pii/S0022000085710306},
 volume = {50},
 year = {1995}
}

@article{bgs75,
  author    = {Theodore P. Baker and
               John Gill and
               Robert Solovay},
  title     = {Relativizations of the {P} =? {NP} Question},
  journal   = {{SIAM} Journal on Computing},
  volume    = {4},
  number    = {4},
  pages     = {431--442},
  year      = {1975},
  url       = {https://doi.org/10.1137/0204037},
  doi       = {10.1137/0204037},
  bibsource = {dblp computer science bibliography, https://dblp.org}
}

@article{bkm09,
 author = {Olaf Beyersdorff and Johannes Köbler and Jochen Messner},
 doi = {https://doi.org/10.1016/j.tcs.2009.05.021},
 issn = {0304-3975},
 journal = {Theoretical Computer Science},
 number = {38},
 pages = {3839-3855},
 title = {Nondeterministic functions and the existence of optimal proof systems},
 url = {https://www.sciencedirect.com/science/article/pii/S0304397509003843},
 volume = {410},
 year = {2009}
}

@article{bkm11,
 author = {Olaf Beyersdorff and Johannes Köbler and Sebastian Müller},
 doi = {https://doi.org/10.1016/j.ic.2010.11.006},
 issn = {0890-5401},
 journal = {Information and Computation},
 number = {3},
 pages = {320-332},
 title = {Proof systems that take advice},
 url = {https://www.sciencedirect.com/science/article/pii/S0890540110001902},
 volume = {209},
 year = {2011}
}

@techreport{bus96,
  author = {Sam R. Buss},
  title = {Lectures on Proof Theory},
  number = {SOCS 96.1},
  institution = {McGill University},
  year = {1996}
}

@article{boo94,
title = {On collapsing the polynomial-time hierarchy},
journal = {Information Processing Letters},
volume = {52},
number = {5},
pages = {235-237},
year = {1994},
issn = {0020-0190},
doi = {https://doi.org/10.1016/0020-0190(94)00157-X},
url = {https://www.sciencedirect.com/science/article/pii/002001909400157X},
author = {Ronald V. Book}
}

@article{cfm14,
 address = {New York, NY, USA},
 articleno = {7},
 author = {Chen, Yijia and Flum, J\"{o}rg and M\"{u}ller, Moritz},
 doi = {10.1145/2601336},
 issn = {1942-3454},
 issue_date = {May 2014},
 journal = {ACM Trans. Comput. Theory},
 month = {May},
 number = {2},
 numpages = {25},
 publisher = {Association for Computing Machinery},
 title = {Hard Instances of Algorithms and Proof Systems},
 url = {https://doi.org/10.1145/2601336},
 volume = {6},
 year = {2014}
}

@article{ck07,
 author = {Stephen A. Cook and Jan Krajíček},
 doi = {10.2178/jsl/1203350791},
 journal = {The Journal of Symbolic Logic},
 number = {4},
 pages = {1353--1371},
 publisher = {Association for Symbolic Logic},
 title = {Consequences of the Provability of {NP} $\subseteq$ {P/poly}},
 volume = {72},
 year = {2007}
}

@article{cr79,
 author = {Cook, Stephen A. and Reckhow, Robert A.},
 doi = {10.2307/2273702},
 journal = {Journal of Symbolic Logic},
 pages = {36-50},
 title = {The relative efficiency of propositional proof systems},
 volume = {44},
 year = {1979}
}

@InProceedings{deg24,
  author =	{Dingel, David and Egidy, Fabian and Gla{\ss}er, Christian},
  title =	{An Oracle with no {UP}-Complete Sets, but {NP = PSPACE}},
  booktitle =	{49th International Symposium on Mathematical Foundations of Computer Science (MFCS 2024)},
  pages =	{50:1--50:17},
  ISBN =	{978-3-95977-335-5},
  ISSN =	{1868-8969},
  year =	{2024},
  volume =	{306},
  publisher =	{Schloss Dagstuhl -- Leibniz-Zentrum f{\"u}r Informatik},
  address =	{Dagstuhl, Germany},
  URL =		{https://drops.dagstuhl.de/entities/document/10.4230/LIPIcs.MFCS.2024.50},
  URN =		{urn:nbn:de:0030-drops-206063},
  doi =		{10.4230/LIPIcs.MFCS.2024.50},
}

@inproceedings{dg20,
  author    = {Titus Dose and
               Christian Gla{\ss}er},
  editor    = {C. Paul and
               M. Bl{\"{a}}ser},
  title     = {{NP}-Completeness, Proof Systems, and Disjoint {NP}-Pairs},
  booktitle = {37th International Symposium on Theoretical Aspects of Computer Science,
               {STACS} 2020, March 10-13, 2020, Montpellier, France},
  series    = {LIPIcs},
  volume    = {154},
  pages     = {9:1--9:18},
  publisher = {Schloss Dagstuhl -- Leibniz-Zentrum f{\"{u}}r Informatik},
  year      = {2020},
  doi = {10.4230/LIPIcs.STACS.2020.9}
}

@article{dos20a,
  author    = {Titus Dose},
  title     = {An oracle separating conjectures about incompleteness in the finite
               domain},
  journal   = {Theoretical Computer Science},
  volume    = {809},
  pages     = {466--481},
  year      = {2020},
	doi = {10.1016/j.tcs.2020.01.003},
}

@article{dos20b,
 author = {Titus Dose},
 doi = {https://doi.org/10.1016/j.tcs.2020.09.040},
 issn = {0304-3975},
 journal = {Theoretical Computer Science},
 pages = {76-94},
 title = {Further oracles separating conjectures about incompleteness in the finite domain},
 url = {https://www.sciencedirect.com/science/article/pii/S030439752030551X},
 volume = {847},
 year = {2020}
}

@InProceedings{eeg22,
  author =	{Ehrmanntraut, Anton and Egidy, Fabian and Gla{\ss}er, Christian},
  title =	{Oracle with {P = NP} $\cap$ {coNP}, but No Many-One Completeness in {UP}, {DisjNP}, and {DisjCoNP}},
  booktitle =	{47th International Symposium on Mathematical Foundations of Computer Science (MFCS 2022)},
  pages =	{45:1--45:15},
  series =	{Leibniz International Proceedings in Informatics (LIPIcs)},
  ISBN =	{978-3-95977-256-3},
  ISSN =	{1868-8969},
  year =	{2022},
  volume =	{241},
  editor =	{Szeider, S. and Ganian, R. and Silva, A.},
  publisher =	{Schloss Dagstuhl -- Leibniz-Zentrum f{\"u}r Informatik},
  address =	{Dagstuhl, Germany},
  URL =		{https://drops.dagstuhl.de/opus/volltexte/2022/16843},
  URN =		{urn:nbn:de:0030-drops-168435},
  doi =		{10.4230/LIPIcs.MFCS.2022.45}
}

@inproceedings{eg25,
author = {Egidy, Fabian and Gla\ss{}er, Christian},
title = {Optimal Proof Systems for Complex Sets Are Hard to Find},
year = {2025},
isbn = {9798400715105},
publisher = {Association for Computing Machinery},
address = {New York, NY, USA},
url = {https://doi.org/10.1145/3717823.3718182},
doi = {10.1145/3717823.3718182},
booktitle = {Proceedings of the 57th Annual ACM Symposium on Theory of Computing},
pages = {1329–1340},
numpages = {12},
location = {Prague, Czechia},
series = {STOC '25}
}

@Article{for99,
  author =   {Lance Fortnow},
  title =    {Relativized worlds with an infinite hierarchy},
  journal =  {Information Processing Letters},
  year =     1999,
  volume =   69,
  number =   4,
  pages =    {309-313}
}

@article{fs88,
title = {Are there interactive protocols for {co-NP} languages?},
journal = {Information Processing Letters},
volume = {28},
number = {5},
pages = {249-251},
year = {1988},
issn = {0020-0190},
doi = {https://doi.org/10.1016/0020-0190(88)90199-8},
url = {https://www.sciencedirect.com/science/article/pii/0020019088901998},
author = {Lance Fortnow and Michael Sipser}
}

@Article{gssz04,
  author =       {Christian Gla{\ss}er and Alan L. Selman and Samik Sengupta and Liyu Zhang},
  title =        {Disjoint {NP}-Pairs},
  journal =      {SIAM Journal on Computing},
  year =         2004,
  volume =       33,
  number =       6,
  pages =        {1369-1416},
	doi = {10.1137/S0097539703425848},
}

@incollection{has89,
  author    = {H{\aa}stad, Johan},
  title     = {Almost Optimal Lower Bounds for Small Depth Circuits},
  booktitle = {Randomness and Computation},
  editor    = {Micali, S.},
  series    = {Advances in Computing Research},
  volume    = {5},
  pages     = {143--170},
  publisher = {JAI Press},
  address   = {Greenwich},
  year      = {1989}
}

@InProceedings{hlr23,
  author =	{Hirahara, Shuichi and Lu, Zhenjian and Ren, Hanlin},
  title =	{Bounded Relativization},
  booktitle =	{38th Computational Complexity Conference (CCC 2023)},
  pages =	{6:1--6:45},
  series =	{Leibniz International Proceedings in Informatics (LIPIcs)},
  ISBN =	{978-3-95977-282-2},
  ISSN =	{1868-8969},
  year =	{2023},
  volume =	{264},
  editor =	{Ta-Shma, Amnon},
  publisher =	{Schloss Dagstuhl -- Leibniz-Zentrum f{\"u}r Informatik},
  address =	{Dagstuhl, Germany},
  URL =		{https://drops.dagstuhl.de/entities/document/10.4230/LIPIcs.CCC.2023.6},
  URN =		{urn:nbn:de:0030-drops-182764},
  doi =		{10.4230/LIPIcs.CCC.2023.6}
}

@article{hrst17,
author = {H\r{a}stad, Johan and Rossman, Benjamin and Servedio, Rocco A. and Tan, Li-Yang},
title = {An Average-Case Depth Hierarchy Theorem for Boolean Circuits},
year = {2017},
issue_date = {October 2017},
publisher = {Association for Computing Machinery},
address = {New York, NY, USA},
volume = {64},
number = {5},
issn = {0004-5411},
url = {https://doi.org/10.1145/3095799},
doi = {10.1145/3095799},
journal = {Journal of the ACM},
month = aug,
articleno = {35},
numpages = {27}
}

@article{hss21,
author = {Hitchcock, John M. and Sekoni, Adewale and Shafei, Hadi},
title = {Polynomial-Time Random Oracles and Separating Complexity Classes},
year = {2021},
issue_date = {March 2021},
publisher = {Association for Computing Machinery},
address = {New York, NY, USA},
volume = {13},
number = {1},
issn = {1942-3454},
url = {https://doi.org/10.1145/3434389},
doi = {10.1145/3434389},
journal = {ACM Trans. Comput. Theory},
month = jan,
articleno = {1},
numpages = {6}
}

@article{kha22,
 author = {Khaniki, Erfan},
 doi = {10.1017/jsl.2021.99},
 journal = {The Journal of Symbolic Logic},
 number = {3},
 pages = {912–937},
 publisher = {Cambridge University Press},
 title = {New relations and separations of conjectures about incompleteness in the finite domain},
 volume = {87},
 year = {2022}
}

@inproceedings{kl80,
 address = {New York, NY, USA},
 author = {Karp, Richard M. and Lipton, Richard J.},
 booktitle = {Symposium on Theory of Computing},
 doi = {10.1145/800141.804678},
 isbn = {0897910176},
 location = {Los Angeles, California, USA},
 numpages = {8},
 pages = {302–309},
 publisher = {ACM},
 series = {STOC '80},
 title = {Some connections between nonuniform and uniform complexity classes},
 url = {https://doi.org/10.1145/800141.804678},
 year = {1980}
}

@inproceedings{km98,
 author = {Johannes K{\"{o}}bler and
Jochen Messner},
 bibsource = {dblp computer science bibliography, https://dblp.org},
 booktitle = {Proceedings of the 13th Annual {IEEE} Conference on Computational
Complexity},
 doi = {10.1109/CCC.1998.694599},
 pages = {132--140},
 title = {Complete problems for promise classes by optimal proof systems for
test sets},
 url = {https://doi.org/10.1109/CCC.1998.694599},
 year = {1998}
}

@article{kmt03,
 author = {Johannes Köbler and Jochen Messner and Jacobo Torán},
 doi = {https://doi.org/10.1016/S0890-5401(03)00058-0},
 issn = {0890-5401},
 journal = {Information and Computation},
 number = {1},
 pages = {71-92},
 title = {Optimal proof systems imply complete sets for promise classes},
 url = {https://www.sciencedirect.com/science/article/pii/S0890540103000580},
 volume = {184},
 year = {2003}
}

@article{ko89,
author = {Ko, Ker-I},
title = {Relativized Polynomial Time Hierarchies Having Exactly K Levels},
journal = {SIAM Journal on Computing},
volume = {18},
number = {2},
pages = {392-408},
year = {1989},
doi = {10.1137/0218027},
URL = {https://doi.org/10.1137/0218027}
}

@article{kp89,
 author = {Jan Krajíček and Pavel Pudl{\'a}k},
 doi = {10.2307/2274765},
 journal = {Journal of Symbolic Logic},
 pages = {1063-1079},
 title = {Propositional proof systems, the consistency of
first order theories and the complexity of computations},
 volume = {54},
 year = {1989}
}

@article{kstt92,
title = {Turing machines with few accepting computations and low sets for {PP}},
journal = {Journal of Computer and System Sciences},
volume = {44},
number = {2},
pages = {272-286},
year = {1992},
issn = {0022-0000},
doi = {https://doi.org/10.1016/0022-0000(92)90022-B},
url = {https://www.sciencedirect.com/science/article/pii/002200009290022B},
author = {Johannes Köbler and Uwe Schöning and Seinosuke Toda and Jacobo Torán}
}

@article{ls86,
 address = {New York, NY, USA},
 author = {Long, Timothy J. and Selman, Alan L.},
 doi = {10.1145/5925.5938},
 issn = {0004-5411},
 issue_date = {July 1986},
 journal = {Journal of the ACM},
 number = {3},
 numpages = {10},
 pages = {618–627},
 publisher = {Association for Computing Machinery},
 title = {Relativizing complexity classes with sparse oracles},
 url = {https://doi.org/10.1145/5925.5938},
 volume = {33},
 year = {1986}
}

@inbook{lut97,
 address = {Berlin, Heidelberg},
 author = {Lutz, Jack H.},
 booktitle = {Complexity Theory Retrospective II},
 isbn = {0387949739},
 numpages = {36},
 pages = {225–260},
 publisher = {Springer-Verlag},
 title = {The quantitative structure of exponential time},
 year = {1997}
}

@phdthesis{mes00,
 author = {Jochen Messner},
 school = {Universit\"at Ulm},
 title = {On the Simulation Order of Proof Systems},
 year = {2000}
}

@inproceedings{mes99,
 address = {Berlin, Heidelberg},
 author = {Jochen Messner},
 booktitle = {Symposium on Theoretical Aspects of Computer Science},
 doi = {10.1007/3-540-49116-3_51},
 pages = {541--550},
 publisher = {Springer-Verlag},
 title = {On Optimal Algorithms and Optimal Proof Systems},
 url = {https://api.semanticscholar.org/CorpusID:7759364},
 year = {1999}
}

@article{pud17,
 author = {Pavel Pudl{\'a}k},
 doi = {doi:10.1017/bsl.2017.32},
 journal = {The Bulletin of Symbolic Logic},
 number = {4},
 pages = {405--441},
 publisher = {[Association for Symbolic Logic, Cambridge University Press]},
 title = {Incompleteness in the Finite Domain},
 volume = {23},
 year = {2017}
}

@techreport{raz94,
  author      = {Alexander Razborov},
  title       = {On Provably Disjoint {NP}-Pairs},
  institution = {Electronic Colloquium on Computational Complexity},
  year        = {1994},
  number      = {TR94-006},
  url         = {https://eccc.weizmann.ac.il/report/1994/006/}
}

@inproceedings{sad97,
 author = {Zenon Sadowski},
 bibsource = {dblp computer science bibliography, https://dblp.org},
 booktitle = {Fundamentals of Computation Theory},
 doi = {10.1007/BFB0036203},
 pages = {423--428},
 publisher = {Springer},
 title = {On an optimal quantified propositional proof system and a complete
language for {NP} $\cap$ {coNP}},
 url = {https://doi.org/10.1007/BFb0036203},
 volume = {1279},
 year = {1997}
}

@inproceedings{sad99,
 address = {Berlin, Heidelberg},
 author = {Sadowski, Zenon},
 booktitle = {Computer Science Logic},
 doi = {https://doi.org/10.1007/10703163_13},
 isbn = {978-3-540-48855-2},
 pages = {179--187},
 publisher = {Springer-Verlag},
 title = {On an Optimal Deterministic Algorithm for SAT},
 year = {1999}
}

@article{sha92,
author = {Shamir, Adi},
title = {{IP = PSPACE}},
year = {1992},
issue_date = {Oct. 1992},
publisher = {Association for Computing Machinery},
address = {New York, NY, USA},
volume = {39},
number = {4},
issn = {0004-5411},
url = {https://doi.org/10.1145/146585.146609},
doi = {10.1145/146585.146609},
journal = {Journal of the ACM},
month = oct,
pages = {869–877},
numpages = {9}
}

@phdthesis{sim77,
 author = {Simon, István},
 school = {Stanford University},
 title = {On Some Subrecursive Reducibilities.},
 year = {1977}
}

@inproceedings{yao85,
 address = {USA},
 author = {Yao, Andrew},
 booktitle = {Proceedings of the 26th Annual Symposium on Foundations of Computer Science},
 doi = {10.1109/SFCS.1985.49},
 isbn = {0818608444},
 numpages = {10},
 pages = {1–10},
 publisher = {IEEE Computer Society},
 series = {SFCS '85},
 title = {Separating the polynomial-time hierarchy by oracles},
 url = {https://doi.org/10.1109/SFCS.1985.49},
 year = {1985}
}
